\providecommand{\U}[1]{\protect\rule{.1in}{.1in}}
\newtheorem{theorem}{Theorem}
\newtheorem{corollary}[theorem]{Corollary}
\newtheorem{lemma}[theorem]{Lemma}
\newtheorem{proposition}[theorem]{Proposition}
\newtheorem{remark}[theorem]{Remark}
\newenvironment{proof}[1][Proof]{\noindent\textbf{#1.} }{\ \rule{0.5em}{0.5em}}
\begin{document}

\title{\textbf{R\'{e}nyi relative entropies of quantum Gaussian states}}
%\author{Kaushik P. Seshadreesan\thanks{Max Planck Institute for the Science of Light,
%mpl.~Staudtstr.~2, 91058 Erlangen, Germany}
%%
%%Guenther-Scharowsky-Str.~1/ building 24, 91058 Erlangen, Germany}
%\and Ludovico Lami\thanks{F\'isica Te\`orica: Informaci\'o i Fen\`omens Qu\`antics,
%Departament de F\'isica, Universitat Aut\`onoma de Barcelona, ES-08193
%Bellaterra (Barcelona), Spain}
%\and Mark M. Wilde\thanks{Hearne Institute for Theoretical Physics, Department of
%Physics and Astronomy, Center for Computation and Technology, Louisiana State
%University, Baton Rouge, Louisiana 70803, USA} }

\author{Kaushik P. Seshadreesan}
\affiliation{Max Planck Institute for the Science of Light,
Staudtstr.~2, 91058 Erlangen, Germany}
%
%Guenther-Scharowsky-Str.~1/ building 24, 91058 Erlangen, Germany}
\author{Ludovico Lami}
\affiliation{F\'isica Te\`orica: Informaci\'o i Fen\`omens Qu\`antics,
Departament de F\'isica, Universitat Aut\`onoma de Barcelona, ES-08193
Bellaterra (Barcelona), Spain}
\author{Mark M. Wilde}
\affiliation{Hearne Institute for Theoretical Physics, Department of
Physics and Astronomy, Center for Computation and Technology, Louisiana State
University, Baton Rouge, Louisiana 70803, USA}

\begin{abstract}
The quantum R\'{e}nyi relative entropies play a prominent role in quantum
information theory, finding applications in characterizing error exponents and
strong converse exponents for quantum hypothesis testing and quantum
communication theory. On a different thread, quantum Gaussian states have been
intensely investigated theoretically, motivated by the fact that they are more
readily accessible in the laboratory than are other, more exotic quantum
states. In this paper, we derive formulas for the quantum R\'{e}nyi relative
entropies of quantum Gaussian states. We consider both the traditional (Petz)
R\'{e}nyi relative entropy as well as the more recent sandwiched R\'{e}nyi
relative entropy, finding formulas that are expressed solely in terms of the
mean vectors and covariance matrices of the underlying quantum Gaussian
states. Our development handles the hitherto elusive case for the
Petz--R\'{e}nyi relative entropy when the R\'{e}nyi parameter is larger than
one. Finally, we also derive a formula for the max-relative entropy of two
quantum Gaussian states, and we discuss some applications of the formulas
derived here.

\end{abstract}

\maketitle

\section{Introduction}

Motivated by the mathematical foundations of entropy, R\'enyi \cite{R61} defined the
following $\alpha$-dependent relative entropy as a function of two probability
distributions $p$ and $q$:%
\begin{equation}
D_{\alpha}(p\Vert q)\equiv \frac{1}{\alpha-1}\ln\!\left(  \sum_{x}p(x)^{\alpha
}q(x)^{1-\alpha}\right)  ,
\end{equation}
where $\alpha\in(0,1)\cup(1,\infty)$, and for $\alpha\in\{0,1,\infty\}$, the
quantity is defined in the limit. An important special case is the limit
$\alpha\rightarrow1$, for which the quantity converges to the relative entropy
$D(p\Vert q)$ \cite{kullback1951}:
\begin{equation}
\lim_{\alpha\rightarrow1}D_{\alpha}(p\Vert q)=D(p\Vert q)\equiv\sum_{x}%
p(x)\ln\!\left(  \frac{p(x)}{q(x)}\right)  .
\end{equation}
Since R\'enyi's work \cite{R61}, the quantity $D_{\alpha}(p\Vert q)$ has become known as the R\'enyi relative
entropy and has played an important role in hypothesis testing and information
theory \cite{C95,vEH14}. Most prominently, the R\'enyi relative entropy has
found operational interpretations in these contexts in terms of error
exponents or strong converse exponents, which respectively characterize the
exponential rate at which error probabilities decay to zero or increase to one
for a given information-processing task.

Towards the goal of developing the quantum generalization of the
aforementioned fields, several researchers have defined quantum extensions of
the R\'{e}nyi relative entropy \cite{P86,MDSFT13,WWY14}. Interestingly, in the
quantum case, there are several ways to go about this, due to the
non-commutativity of quantum states. A first way of generalizing the R\'{e}nyi
relative entropy was put forward in Ref.~\onlinecite{P86}, where the following quantity
was defined for two density operators $\rho$ and $\sigma$:%
\begin{equation}
D_{\alpha}(\rho\Vert\sigma)\equiv\frac{1}{\alpha-1}\ln\operatorname{Tr}%
\{\rho^{\alpha}\sigma^{1-\alpha}\},
\end{equation}
with $\alpha\in(0,1)\cup(1,\infty)$. It has since become known as the
Petz--R\'{e}nyi relative entropy and has the following limits:%
\begin{align}
\lim_{\alpha\rightarrow0}D_{\alpha}(\rho\Vert\sigma) &  =D_{0}(\rho\Vert
\sigma)=-\ln\operatorname{Tr}\{\Pi_{\rho}\sigma\},\\
\lim_{\alpha\rightarrow1}D_{\alpha}(\rho\Vert\sigma) &  =D(\rho\Vert
\sigma)\equiv \operatorname{Tr}\{\rho\left[  \ln\rho-\ln\sigma\right]  \},
\end{align}
where $\Pi_{\rho}$ is the projection onto the support of $\rho$ and
$D(\rho\Vert\sigma)$ is the quantum relative entropy \cite{U62,Lindblad1973}.
More recently, a second way of generalizing the R\'{e}nyi relative entropy was
put forward \cite{MDSFT13,WWY14}:
\begin{equation}
\widetilde{D}_{\alpha}(\rho\Vert\sigma)\equiv\frac{1}{\alpha-1}\ln
\operatorname{Tr}\left\{  \left(  \sigma^{\frac{1-\alpha}{2\alpha}}\rho
\sigma^{\frac{1-\alpha}{2\alpha}}\right)  ^{\alpha}\right\}
.\label{eq:sandwich-RE}%
\end{equation}
This quantity is known as the sandwiched R\'{e}nyi relative entropy, due to
the operator sandwich in \eqref{eq:sandwich-RE}, and it has the following
limits \cite{MDSFT13,WWY14}:%
\begin{align}
\lim_{\alpha\rightarrow1}\widetilde{D}_{\alpha}(\rho\Vert\sigma) &
=D(\rho\Vert\sigma),\\
\lim_{\alpha\rightarrow\infty}\widetilde{D}_{\alpha}(\rho\Vert\sigma) &
=D_{\max}(\rho\Vert\sigma)\equiv\inf\left\{  \lambda\in\mathbb{R}:\rho\leq
e^{\lambda}\sigma\right\}  ,
\end{align}
where $D_{\max}$ denotes the max-relative entropy \cite{D09}. Both the
Petz--R\'{e}nyi relative entropy and the sandwiched R\'{e}nyi relative entropy
have found widespread application in quantum hypothesis testing and quantum
communication theory
\cite{ON99,ON00,OH04,N06,PhysRevA.76.062301,ANSV08,KW09,MH11,SW12,WWY14,MO13,GW13,CMW14,HT16,TWW14,DingW15,LWD16,WTB16}%
. Particular to the quantum case, all evidence to date indicates that the
Petz--R\'{e}nyi relative entropy is the appropriate quantity to employ in the
error exponent regime and the sandwiched R\'{e}nyi relative entropy in the
strong converse regime.

Along a different line, the theory of Gaussian quantum information has been
intensely investigated and developed \cite{Olivares2012,adesso14,S17}, the
main motivation behind it being that bosonic Gaussian states and evolutions
are more accessible in the laboratory than are their non-Gaussian
counterparts. These states and evolutions play a prominent role in quantum
optics, but they can also describe the physics of particular superconducting
degrees of freedom, trapped ions, and atomic ensembles \cite{S17}. Similar to
the classical case, a quantum Gaussian state of $n$ modes is uniquely
characterized by a mean vector (first moments)\ and a covariance matrix
(second moments). Furthermore, a quantum Gaussian channel is defined to take
Gaussian states to Gaussian states, and as such, one can uniquely characterize
a quantum Gaussian channel by its action on the mean vector and covariance
matrix of an input Gaussian state \cite{CEGH08}. These simple
characterizations are helpful for theoretical manipulations:\ even though
Gaussian states are density operators acting on infinite-dimensional,
separable Hilbert spaces, it often suffices to manipulate their finite-dimensional
mean vectors and covariance matrices. A typical goal is to express
information-theoretic functions of Gaussian states solely in terms of their
mean vectors and covariance matrices, so that these functions can be easily
evaluated numerically or analytically.

With these two threads in mind, the contribution of the present paper lies at
the convergence of them. That is, in this paper, we establish formulas for the
Petz--R\'{e}nyi relative entropy and the sandwiched R\'{e}nyi relative entropy
of any two quantum Gaussian states. As desired, these formulas are expressed
solely in terms of the mean vectors and covariance matrices of the two states.
The most direct consequence of our formulas is in quantum state
discrimination, such that it is now possible to characterize error exponents
and strong converse exponents in terms of our formulas. We discuss the
application to quantum state discrimination in Section~\ref{sec:state-disc}.
Given the many applications of quantum R\'{e}nyi relative entropies, we expect
there to be further applications of the formulas provided here.

Two special cases of our formulas have already appeared in the literature, and
so it is pertinent to recall these developments now. To see the first one, we should note that
the following limit holds for the sandwiched R\'enyi relative entropy:%
\begin{equation}
\lim_{\alpha\rightarrow\frac{1}{2}}\widetilde{D}_{\alpha}(\rho\Vert
\sigma)=-\log F(\rho,\sigma),
\end{equation}
where $F(\rho,\sigma)$ denotes the well known quantum fidelity \cite{U76}:%
\begin{equation}
F(\rho,\sigma)=\left[  \operatorname{Tr}\left\{  \sqrt{\sqrt{\sigma}\rho
\sqrt{\sigma}}\right\}  \right]  ^{2}.
\end{equation}
Due to the significance of fidelity in quantum information theory, a number of
works have already devised formulas for the fidelity of quantum Gaussian
states. The authors of Refs.~\onlinecite{PS00,WKO00}\ determined a general formula for the
fidelity of two zero-mean Gaussian states.
The authors of Ref.~\onlinecite{WKO00} found the first general formula for the fidelity of two zero-mean Gaussian states in terms of their Hamiltonian matrices, using the tools of Ref.~\onlinecite{Balian1969}. In Ref.~\onlinecite{PS00}, the determination of the characteristic function of a Gaussian state sandwiched by the square root of another Gaussian state led to a simpler expression involving only the corresponding covariance matrices, again for the zero-mean case. A number of special cases have also been considered in several contributions \cite{T96,S98,PS98,MMS03,NC05,OPA06,MM08,Marian2008,BJORK20104440,GBMM10}. Some years after these developments, a general formula for the fidelity between two-mode Gaussian states was derived in Ref.~\onlinecite{MM12} (see also the review in Ref.~\onlinecite{Olivares2012}). In Ref.~\onlinecite{MM12}, an expression for the $n$-mode case was also given, which can be evaluated numerically. More explicit formulas to deal with this latter case were found recently in Ref.~\onlinecite{BBP15}. See Ref.~\onlinecite{PhysRevA.93.052330} for further developments.

In addition to the fidelity of Gaussian states, researchers have also
investigated the Petz--R\'{e}nyi relative entropy of Gaussian states
exclusively for the case when $\alpha\in(0,1)$. The authors of Ref.~\onlinecite{CMMAB08}
contributed a formula for the Petz--R\'{e}nyi relative entropy for the case of
single-mode Gaussian states, and this approach was generalized to the $n$-mode
case in Ref.~\onlinecite{PL08}. It is worthwhile to note that these authors were
interested in the symmetric error exponent of quantum hypothesis testing and
that the Petz--R\'{e}nyi relative entropy arises naturally in this context. The particular case of $\alpha = 1/2$ for the Petz--R\'enyi relative entropy was considered in Ref.~\onlinecite[Lemma~2]{H72} and Ref.~\onlinecite{MM15}, wherein a compact formula was given for this case.

In light of these prior works, the main contribution of our paper can be
understood as a general formula for the Petz--R\'{e}nyi relative entropy for
$\alpha\in(1,\infty)$ and for the sandwiched R\'{e}nyi relative entropy for
$\alpha\in(0,1)\cup(1,\infty)$. Of especial interest is the hitherto elusive
case for the Petz--R\'{e}nyi relative entropy when $\alpha\in(1,\infty)$.
Additionally, we derive an alternate expression for the Petz--R\'{e}nyi
relative entropy for $\alpha\in(0,1)$. We find that these formulas simplify
significantly when $\alpha=2$, and we also devote a section to the derivation
of a formula for the max-relative entropy of quantum Gaussian states.
Specifically, our main results are as follows:

\begin{enumerate}
\item Theorem~\ref{thm:petz-renyi-alpha<1}\ gives a formula for the
Petz--R\'{e}nyi relative entropy for $\alpha\in(0,1)$.

\item Theorem~\ref{prop:petz-renyi-alpha>1}\ gives a formula for the
Petz--R\'{e}nyi relative entropy for $\alpha\in(1,\infty)$.

\item Theorem~\ref{thm: Sandwiched RRE a<1}\ gives a formula for the
sandwiched R\'{e}nyi relative entropy for $\alpha\in(0,1)$.

\item Theorem~\ref{thm:sandwiched-alpha>1}\ gives a formula for the sandwiched
R\'{e}nyi relative entropy for $\alpha\in(1,\infty)$.

\item Theorem~\ref{thm:max-rel-ent}\ gives a formula for the max-relative entropy.
\end{enumerate}

\noindent The main tools that we use to derive these formulas are those that
were developed to derive the fidelity formula
\cite{Balian1969,PS00,WKO00,MM12,BBP15}. Given the prominence of both the
R\'{e}nyi relative entropies and quantum Gaussian states in quantum
information theory, we expect that the formulas derived here will find
application in a variety of avenues in quantum information and other areas of physics.

We organize our paper as follows. In Section~\ref{sec:prelim}, we review some
basics of quantum Gaussian states that are needed for the remainder of the
paper. Section~\ref{sec:computations-guassian} is devoted to recalling and
proving analytic forms for several mappings of quantum Gaussian states.\ After
this preparatory material, Section~\ref{sec:Petz-Renyi} offers a derivation of
the Petz--R\'{e}nyi relative entropy of two quantum Gaussian states, first for $\alpha\in(0,1)$ and then for
$\alpha\in(1,\infty)$. Section~\ref{sec:sandwiched-Renyi} gives a derivation
of the sandwiched R\'{e}nyi relative entropy
of two quantum Gaussian states, for $\alpha\in(0,1)$ and then for
$\alpha\in(1,\infty)$. In Section~\ref{sec:max-rel-ent}, we derive a formula
for the max-relative entropy of two quantum Gaussian states. We then discuss
applications of our results in Section~\ref{sec:apps}, and we conclude
in Section~\ref{sec:conclusion} with a summary and some open questions.

\section{Preliminaries on quantum Gaussian states}

\label{sec:prelim}

We begin with a brief review of quantum Gaussian states and point the reader
to Ref.~\onlinecite{S17} for more background. Our development applies to $n$-mode
Gaussian states, where $n$ is some fixed positive integer. Let $\hat{x}_{j}$
denote each quadrature operator ($2n$ of them for an $n$-mode state), and let
\begin{equation}
\hat{x}\equiv\left[  \hat{q}_{1} ,\ldots,\hat{q}_{n},\hat{p}_{1},\ldots
,\hat{p}_{n}\right]  \equiv\left[  \hat{x}_{1},\ldots,\hat{x}_{2n}\right]
\end{equation}
denote the vector of quadrature operators, so that the first~$n$ entries
correspond to position-quadrature operators and the last~$n$ to
momentum-quadrature operators. The quadrature operators satisfy the following
commutation relations:
\begin{equation}
\label{eq:symplectic-form}\left[  \hat{x}_{j},\hat{x}_{k}\right]
=i\Omega_{j,k},\quad\mathrm{where}\quad\Omega=
\begin{bmatrix}
0 & 1\\
-1 & 0
\end{bmatrix}
\otimes I_{n},
\end{equation}
and $I_{n}$ is the $n\times n$ identity matrix. Note that $\Omega^{T}=-\Omega$
and the matrix $i\Omega$ is involutory, (i.e., $\left(  i\Omega\right)
\left(  i\Omega\right)  =I$) facts that we use repeatedly in what follows.

A faithful Gaussian state $\rho$ of $n$ modes can be written as \cite{S17}%
\begin{align}
\rho &  =\frac{1}{Z_{\rho}}\exp\left[  -\frac{1}{2}(\hat{x}-s_{\rho}%
)^{T}H_{\rho}(\hat{x}-s_{\rho})\right]  ,\label{eq: generic Gaussian}\\
Z_{\rho}  &  \equiv\sqrt{\det(\left[  V_{\rho}+i\Omega\right]  /2)},
\end{align}
where $H_{\rho}$ is a $2n\times2n$ positive-definite real Hamiltonian matrix,
$s_{\rho}\in\mathbb{R}^{2n}$ is the mean vector, defined as $s_{\rho
}=\left\langle \hat{x}\right\rangle _{\rho}=\operatorname{Tr}\{\hat{x}\rho\}$,
and $V_{\rho}$ is the symmetric covariance matrix, whose entries are defined
as%
\begin{equation}
\left[  V_{\rho}\right]  _{j,k}=\left\langle \left\{  \hat{x}_{j}-s^j_{\rho
},\hat{x}_{k}-s^k_{\rho}\right\}  \right\rangle _{\rho}.
\end{equation}
The matrices $V_{\rho}$ and $H_{\rho}$ are related by
\cite{PhysRevA.71.062320,K06,Holevo2011,H11,H12}%
\begin{align}
H_{\rho}  &  =2i\Omega\operatorname{arcoth}(V_{\rho}i\Omega
),\label{eq:cov-to-gibbs}\\
V_{\rho}  &  =\coth(i\Omega H_{\rho}/2)i\Omega, \label{eq:gibbs-to-cov}%
\end{align}
where%
\begin{align}
\coth(x)  &  =\frac{e^{x}+e^{-x}}{e^{x}-e^{-x}}=\frac{e^{2x}+1}{e^{2x}-1},\\
\operatorname{arcoth}(x)  &  =\frac{1}{2}\ln\!\left(  \frac{x+1}{x-1}\right)
.
\end{align}
These relationships imply  for finite $H$ that%
\begin{equation}
H_{\rho}>0\qquad\Longleftrightarrow\qquad V_{\rho}+i\Omega>0.
\label{eq:H-PD-V-legit}%
\end{equation}
Note that the condition $V_{\rho}+i\Omega\geq 0$, if the state is not necessarily faithful, leaves open the possibility that $H$ diverges.
We say that $V_{\rho}$ is a legitimate covariance matrix if it satisfies the
following uncertainty principle \cite{PhysRevA.49.1567}:
\begin{equation}
V_{\rho}+i\Omega\geq0,
\end{equation}
and we note that, by a transpose, this is equivalent to $V_{\rho}-i\Omega
\geq0$.

Alternatively, given a positive-definite real matrix $H_{\rho}$, we have that
\cite{MS78,FPJ85}
\begin{equation}
\operatorname{Tr}\left\{  \exp\left[  -\frac{1}{2}\hat{x}^{T}H_{\rho}\hat
{x}\right]  \right\}  =\sqrt{\det(\left[  V_{\rho}+i\Omega\right]  /2)},
\label{eq:norm-of-Gaussian-form}%
\end{equation}
where $V_{\rho}=\coth(i\Omega H_{\rho}/2)i\Omega$.
We also consider operators of the form $\exp\left[  -\frac{1}%
{2}\hat{x}^{T}H\hat{x}\right]  $, in which $H$ is a symmetric matrix
with complex entries. In this case, we can still exploit the functional
relationships in \eqref{eq:cov-to-gibbs} and \eqref{eq:gibbs-to-cov}. Note
that $H$ is symmetric if and only if $V$ is symmetric, one direction of which
can be seen from the following:%
\begin{align}
V^{T}  &  =\left[  \coth(i\Omega H/2)i\Omega\right]  ^{T}\\
&  =-i\Omega\coth(-Hi\Omega/2)\\
&  =i\Omega\coth(Hi\Omega/2)\left(  i\Omega\right)  \left(  i\Omega\right) \\
&  =\coth(\left(  i\Omega\right)  Hi\Omega\left(  i\Omega\right)  /2)\left(
i\Omega\right) \\
&  =\coth(i\Omega H/2)i\Omega\\
&  =V,
\end{align}
with the other implication following similarly. In the above, we used the fact
that $\coth$ is an odd function and the functional analytic relation
$Mf(L)M^{-1}=f(MLM^{-1})$.

A $2n\times2n$ real matrix $S$ is symplectic if it preserves the symplectic form:
$S\Omega S^{T}=\Omega$. According to Williamson's theorem \cite{W36}, there is
a diagonalization of the covariance matrix $V_{\rho}$ of the form%
\begin{equation}
V_{\rho}=S_{\rho}\left(  D_{\rho}\oplus D_{\rho}\right)    S_{\rho
}  ^{T},
\end{equation}
where $S_{\rho}$ is a symplectic matrix and $D_{\rho}\equiv\operatorname{diag}%
(\nu_{1},\ldots,\nu_{n})$ is a diagonal matrix of symplectic eigenvalues, such
that $\nu_{i}\geq1$ for all $i\in\left\{  1,\ldots,n\right\}  $. A quantum
Gaussian state is faithful if all of its symplectic eigenvalues are strictly
greater than one (this also means that the state is positive definite). In our
paper, we focus exclusively on faithful Gaussian states.

We also define%
\begin{equation}
W_{\rho}=-V_{\rho}i\Omega, \label{eq:def-W}%
\end{equation}
which by the relations in \eqref{eq:cov-to-gibbs} and \eqref{eq:gibbs-to-cov}
gives us the following well known Cayley transforms
\cite{C46,C89,GV96}:
\begin{align}
\exp\left(  i\Omega H_{\rho}\right)   &  =\frac{W_{\rho}-I}{W_{\rho}%
+I},\label{eq:W-to-H}\\
W_{\rho}  &  =\frac{I+\exp\left(  i\Omega H_{\rho}\right)  }{I-\exp\left(
i\Omega H_{\rho}\right)  }. \label{eq:H-to-W}%
\end{align}
In the above and in what follows, our convention is that $\frac{A}{B}=AB^{-1}$ for
matrices $A$ and $B$, but observe that the ordering does not matter if $A$ and
$B$ commute. By substituting \eqref{eq:def-W} into
\eqref{eq:norm-of-Gaussian-form}, we see that%
\begin{equation}
\operatorname{Tr}\left\{  \exp\left[  -\frac{1}{2}\hat{x}^{T}H_{\rho}\hat
{x}\right]  \right\}  =\sqrt{\det(\left[  I-W\right]  i\Omega/2)}.
\label{eq:det-with-W}%
\end{equation}

The mean displacement $s_{\rho}\in\mathbb{R}^{2n}$ in
\eqref{eq: generic Gaussian} can be generated by applying the displacement
operator, defined for $s\in\mathbb{R}^{2n}$ as
\begin{equation}
D(s)=\exp\left[  s^{T}i\Omega\hat{x}\right]  =\exp\left[  -\hat{x}^{T}i\Omega
s\right]  , \label{eq: displacement Op}%
\end{equation}
on a zero-mean state $\exp\left[  -\frac{1}{2}\hat{x}^{T}H_{\rho}\hat
{x}\right]  /\sqrt{\det(\left[  V_{\rho}+i\Omega\right]  /2)}$\ as follows:%
\begin{equation}
D(-s_{\rho})\frac{\exp\left[  -\frac{1}{2}\hat{x}^{T}H_{\rho}\hat{x}\right]
}{\sqrt{\det(\left[  V_{\rho}+i\Omega\right]  /2)}}D(s_{\rho})=\frac
{\exp\left[  -\frac{1}{2}(\hat{x}-s_{\rho})^{T}H_{\rho}(\hat{x}-s_{\rho
})\right]  }{\sqrt{\det(\left[  V_{\rho}+i\Omega\right]  /2)}}.
\end{equation}
In our paper, we also consider the operator $D(s)$ in the more general case
when $s\in\mathbb{C}^{2n}$, but then we no longer refer to it as a
\textquotedblleft displacement operator\textquotedblright\ because it loses
its physical interpretation in this more general case.

\section{Computations with quantum Gaussian states}

\label{sec:computations-guassian}

\subsection{Powers of quantum Gaussian states}

\begin{proposition}
\label{prop:rho-to-alpha}Given a quantum Gaussian state $\rho$ expressed as%
\begin{equation}
\rho=\frac{1}{\left[  \det\left(  \left[  V+i\Omega\right]  /2\right)
\right]  ^{1/2}}\exp\left[  -\frac{1}{2}\hat{x}^{T}H\hat{x}\right]  ,
\end{equation}
for a positive-definite real matrix $H$ and corresponding covariance matrix
$V$, the density operator $\rho^{\alpha}/\operatorname{Tr}\{\rho^{\alpha}\}$,
for $\alpha>0$, can be written as%
\begin{equation}
\rho(\alpha)\equiv\frac{\rho^{\alpha}}{\operatorname{Tr}\{\rho^{\alpha}%
\}}=\frac{1}{\left[  \det\left(  \left[  V^{(\alpha)}+i\Omega\right]
/2\right)  \right]  ^{1/2}}\exp\left[  -\frac{1}{2}\hat{x}^{T}H^{(\alpha)}%
\hat{x}\right]  ,
\end{equation}
where the positive-definite real matrix $H^{(\alpha)}$ and the covariance
matrix $V^{(\alpha)}$ are given by%
\begin{align}
H^{(\alpha)}  &  =\alpha H,\\
V^{(\alpha)}  &  \equiv V_{\rho(\alpha)}\equiv\frac{\left(  I+\left(  V_{\rho
}i\Omega\right)  ^{-1}\right)  ^{\alpha}+\left(  I-\left(  V_{\rho}%
i\Omega\right)  ^{-1}\right)  ^{\alpha}}{\left(  I+\left(  V_{\rho}%
i\Omega\right)  ^{-1}\right)  ^{\alpha}-\left(  I-\left(  V_{\rho}%
i\Omega\right)  ^{-1}\right)  ^{\alpha}}i\Omega. \label{eq:alpha-cov-matrix}%
\end{align}

\end{proposition}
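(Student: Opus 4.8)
The plan is to split the argument into three parts: (i) identify the Hamiltonian matrix of $\rho^{\alpha}$; (ii) compute its normalization; (iii) convert the resulting Hamiltonian matrix into the claimed covariance matrix.

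For (i), note that $A\equiv-\tfrac{1}{2}\hat{x}^{T}H\hat{x}$ is self-adjoint and $\exp(A)$ is a positive operator, so functional calculus gives $\rho^{\alpha}=Z_{\rho}^{-\alpha}\exp(\alpha A)=Z_{\rho}^{-\alpha}\exp[-\tfrac{1}{2}\hat{x}^{T}(\alpha H)\hat{x}]$; in particular $H^{(\alpha)}=\alpha H$, which is again positive definite and real. For (ii), I would apply \eqref{eq:norm-of-Gaussian-form} with $H$ replaced by $\alpha H$ to obtain $\operatorname{Tr}\{\rho^{\alpha}\}=Z_{\rho}^{-\alpha}\sqrt{\det([V^{(\alpha)}+i\Omega]/2)}$, where $V^{(\alpha)}=\coth(i\Omega\alpha H/2)i\Omega$ by \eqref{eq:gibbs-to-cov}. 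Dividing, $\rho(\alpha)=\rho^{\alpha}/\operatorname{Tr}\{\rho^{\alpha}\}$ has exactly the stated Gaussian form with Hamiltonian matrix $\alpha H$ and covariance matrix $V^{(\alpha)}$; legitimacy of $V^{(\alpha)}$ (i.e., $V^{(\alpha)}+i\Omega>0$) follows from $\alpha H>0$ via \eqref{eq:H-PD-V-legit}.

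Step (iii) is the real content. Here I would pass through the auxiliary matrix $W$ defined in \eqref{eq:def-W}. The key identity is $\exp(i\Omega\,\alpha H)=[\exp(i\Omega H)]^{\alpha}$: this is legitimate because $i\Omega H$ is similar to the Hermitian matrix $H^{1/2}(i\Omega)H^{1/2}$ (using $\Omega^{T}=-\Omega$ and $H>0$), hence has real spectrum, so $\exp(i\Omega H)$ has positive spectrum and the $\alpha$-power is unambiguous. Combining this with \eqref{eq:H-to-W} applied to $\alpha H$ gives $W^{(\alpha)}=(I+G^{\alpha})(I-G^{\alpha})^{-1}$, where $G\equiv\exp(i\Omega H)=(W_{\rho}-I)(W_{\rho}+I)^{-1}$ by \eqref{eq:W-to-H}. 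Substituting $W_{\rho}=-V_{\rho}i\Omega$ and writing $Y\equiv V_{\rho}i\Omega$, one has $G=(Y+I)(Y-I)^{-1}$; since $Y\pm I$ are functions of $Y$ and hence commute with everything in sight, this rearranges to $G=(I+Z)(I-Z)^{-1}$ with $Z\equiv(V_{\rho}i\Omega)^{-1}$, whence $G^{\alpha}=(I+Z)^{\alpha}(I-Z)^{-\alpha}$. Plugging into the expression for $W^{(\alpha)}$, clearing the common factor $(I-Z)^{\alpha}$, and finally using $V^{(\alpha)}=-W^{(\alpha)}i\Omega$ (which follows from \eqref{eq:def-W} together with $(i\Omega)^{2}=I$) yields precisely \eqref{eq:alpha-cov-matrix}.

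The main obstacle I anticipate is bookkeeping in the functional calculus: the matrices $i\Omega H$, $W$, and $V_{\rho}i\Omega$ are non-normal, so I must be careful to invoke $Mf(L)M^{-1}=f(MLM^{-1})$ and $(AB)^{\alpha}=A^{\alpha}B^{\alpha}$ only for commuting pairs, and only after the real-spectrum argument makes the fractional power well defined; one should also record that $V_{\rho}i\Omega$ is invertible, which holds since $\rho$ is faithful and hence $V_{\rho}>0$. Everything else reduces to the $\coth$/$\operatorname{arcoth}$ dictionary already set up in Section~\ref{sec:prelim}.
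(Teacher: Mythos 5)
Your proposal is correct and takes essentially the same approach as the paper: the Hamiltonian matrix scales to $\alpha H$, the normalization follows from \eqref{eq:norm-of-Gaussian-form} applied to $\alpha H$ with legitimacy of $V^{(\alpha)}$ coming from \eqref{eq:H-PD-V-legit}, and the covariance matrix is obtained by functional calculus on $V_{\rho}i\Omega$, whose eigenvalues lie outside $[-1,1]$. Your step (iii), routed through $W$ and $G=\exp(i\Omega H)$ with $G^{\alpha}=(I+Z)^{\alpha}(I-Z)^{-\alpha}$, is simply the matrix form of the paper's scalar identity $\coth(\alpha\operatorname{arcoth}(x))=\bigl[(1+1/x)^{\alpha}+(1-1/x)^{\alpha}\bigr]\bigl[(1+1/x)^{\alpha}-(1-1/x)^{\alpha}\bigr]^{-1}$ for $|x|>1$, so the two arguments coincide in substance.
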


\begin{proof}
Consider that%
\begin{align}
\rho^{\alpha}  &  =\left[  \frac{1}{\left[  \det\left(  \left[  V+i\Omega
\right]  /2\right)  \right]  ^{1/2}}\exp\left[  -\frac{1}{2}\hat{x}^{T}%
H\hat{x}\right]  \right]  ^{\alpha}\\
&  =\frac{1}{\left[  \det\left(  \left[  V+i\Omega\right]  /2\right)  \right]
^{\alpha/2}}\exp\left[  -\frac{1}{2}\hat{x}^{T}\left[  \alpha H\right]
\hat{x}\right]  . \label{eq:rho-to-alpha}%
\end{align}
Then%
\begin{equation}
\operatorname{Tr}\{\rho^{\alpha}\}=\frac{1}{\left[  \det\left(  \left[
V+i\Omega\right]  /2\right)  \right]  ^{\alpha/2}}\operatorname{Tr}\left\{
\exp\left[  -\frac{1}{2}\hat{x}^{T}\left[  \alpha H\right]  \hat{x}\right]
\right\}  . \label{eq:rho-to-alpha-normalization}%
\end{equation}
To compute the covariance matrix corresponding to $\alpha H$, which we call
$V_{\rho(\alpha)}$, we exploit \eqref{eq:cov-to-gibbs} and
\eqref{eq:gibbs-to-cov} and find that%
\begin{align}
V_{\rho(\alpha)}  &  =\coth(i\Omega\alpha H_{\rho}/2)i\Omega\\
&  =\coth(i\Omega\alpha\left[  2i\Omega\operatorname{arcoth}(V_{\rho}%
i\Omega)\right]  /2)i\Omega\\
&  =\coth(\alpha\operatorname{arcoth}(V_{\rho}i\Omega))i\Omega.
\end{align}
To evaluate the last equality, consider for $\left\vert x\right\vert >1$ that%
\begin{align}
\coth(\alpha\operatorname{arcoth}(x))  &  =\coth\left(  \alpha\frac{1}{2}%
\ln\left(  \frac{x+1}{x-1}\right)  \right) \\
&  =\coth\left(  \alpha\frac{1}{2}\ln\left(  \frac{1+1/x}{1-1/x}\right)
\right) \\
&  =\coth\left(  \frac{1}{2}\ln\left[  \left(  \frac{1+1/x}{1-1/x}\right)
^{\alpha}\right]  \right) \\
&  =\frac{\exp\left(  2\left(  \frac{1}{2}\ln\left[  \left(  \frac
{1+1/x}{1-1/x}\right)  ^{\alpha}\right]  \right)  \right)  +1}{\exp\left(
2\left(  \frac{1}{2}\ln\left[  \left(  \frac{1+1/x}{1-1/x}\right)  ^{\alpha
}\right]  \right)  \right)  -1}\\
&  =\frac{\left(  \frac{1+1/x}{1-1/x}\right)  ^{\alpha}+1}{\left(
\frac{1+1/x}{1-1/x}\right)  ^{\alpha}-1}\\
&  =\frac{\left(  1+1/x\right)  ^{\alpha}+\left(  1-1/x\right)  ^{\alpha}%
}{\left(  1+1/x\right)  ^{\alpha}-\left(  1-1/x\right)  ^{\alpha}}.
\end{align}
Several of the above manipulations are possible because $1\pm1/x>0$ for
$\left\vert x\right\vert >1$, so that $y\rightarrow y^{\alpha}$ is a well
defined function from $\mathbb{R}^{+}\rightarrow\mathbb{R}^{+}$. The matrix
$V_{\rho}i\Omega$ has all of its eigenvalues $>1$ or $<-1$, so that the above
development applies, and we find that%
\begin{equation}
V_{\rho(\alpha)}=\frac{\left(  I+\left(  V_{\rho}i\Omega\right)  ^{-1}\right)
^{\alpha}+\left(  I-\left(  V_{\rho}i\Omega\right)  ^{-1}\right)  ^{\alpha}%
}{\left(  I+\left(  V_{\rho}i\Omega\right)  ^{-1}\right)  ^{\alpha}-\left(
I-\left(  V_{\rho}i\Omega\right)  ^{-1}\right)  ^{\alpha}}i\Omega.
\end{equation}
The matrix $V_{\rho(\alpha)}$ is a legitimate covariance matrix as a
consequence of \eqref{eq:H-PD-V-legit}\ and the fact that $\alpha H>0$. So we
conclude from \eqref{eq:norm-of-Gaussian-form}\ that%
\begin{equation}
\operatorname{Tr}\left\{  \exp\left[  -\frac{1}{2}\hat{x}^{T}\left[  \alpha
H\right]  \hat{x}\right]  \right\}  =\left[  \det\left(  \left[
V_{\rho(\alpha)}+i\Omega\right]  /2\right)  \right]  ^{1/2}.
\label{eq:rho-to-alpha-normalization-final}%
\end{equation}
Putting together \eqref{eq:rho-to-alpha},
\eqref{eq:rho-to-alpha-normalization}, and
\eqref{eq:rho-to-alpha-normalization-final} gives the statement of the proposition.
\end{proof}

\begin{remark}
The covariance matrix $V_{\rho(\alpha)}$ is equal to the $V(\alpha)$
covariance matrix given in Eqs.~(54)\ and (55)\ of Ref.~\onlinecite{PL08}. We give a
proof for this equality in Appendix~\ref{app:CM-for-rho-alpha}.
\end{remark}

Now we give alternative proofs of some results from Ref.~\onlinecite{PS00}, which can be
viewed as consequences of Proposition~\ref{prop:rho-to-alpha}:

\begin{corollary}
[{[\onlinecite{PS00}]}]\label{cor:rho-to-the-2}Given a quantum Gaussian state $\rho$
expressed as%
\begin{equation}
\rho=\frac{1}{\left[  \det\left(  \left[  V+i\Omega\right]  /2\right)
\right]  ^{1/2}}\exp\left[  -\frac{1}{2}\hat{x}^{T}H\hat{x}\right]  ,
\end{equation}
for a positive-definite real matrix $H$ and corresponding covariance matrix
$V$, the density operator $\rho^{2}/\operatorname{Tr}\{\rho^{2}\}$ can be
written as%
\begin{equation}
\frac{\rho^{2}}{\operatorname{Tr}\{\rho^{2}\}}=\frac{1}{\left[  \det\left(
\left[  V^{(2)}+i\Omega\right]  /2\right)  \right]  ^{1/2}}\exp\left[
-\frac{1}{2}\hat{x}^{T}H^{(2)}\hat{x}\right]  ,
\end{equation}
where the positive-definite real matrix $H^{(2)}$ and the covariance matrix
$V^{(2)}$ are given by%
\begin{align}
H^{(2)}  &  =2H,\label{eq:H2-2H}\\
V^{(2)}  &  =\frac{1}{2}\left(  V+\Omega V^{-1}\Omega^{T}\right)  .
\label{eq:W(2)-equality}%
\end{align}

\end{corollary}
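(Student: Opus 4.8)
The plan is to derive this corollary as the special case $\alpha = 2$ of Proposition~\ref{prop:rho-to-alpha}. The claim $H^{(2)} = 2H$ is immediate from the formula $H^{(\alpha)} = \alpha H$, so the only work is to show that the general covariance-matrix formula \eqref{eq:alpha-cov-matrix} collapses to \eqref{eq:W(2)-equality} when $\alpha = 2$. First I would set $\alpha = 2$ in \eqref{eq:alpha-cov-matrix} and abbreviate $M \equiv (V i\Omega)^{-1}$, so that the bracketed fraction becomes $\left[(I+M)^2 + (I-M)^2\right]\left[(I+M)^2 - (I-M)^2\right]^{-1}$. Since $I$ and $M$ commute, both squares expand by the ordinary binomial formula: $(I+M)^2 + (I-M)^2 = 2(I + M^2)$ and $(I+M)^2 - (I-M)^2 = 4M$. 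Hence $V^{(2)} = \tfrac{1}{2}(I + M^2) M^{-1} i\Omega = \tfrac{1}{2}\left(M^{-1} + M\right) i\Omega$.

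The remaining step is to re-express $M^{-1} + M$ in terms of $V$ and $\Omega$. Here $M^{-1} = V i\Omega$, so $M^{-1} i\Omega = V i\Omega i\Omega = -V$ using that $i\Omega$ is involutory (i.e., $(i\Omega)(i\Omega) = I$, as noted after \eqref{eq:symplectic-form}); wait---$(i\Omega)^2 = I$, so actually $M^{-1} i\Omega = V(i\Omega)(i\Omega)\cdot(i\Omega)^{-1}\cdot i\Omega$; more simply, $M^{-1} i \Omega = V i \Omega \cdot i \Omega = V$ since $(i\Omega)^2 = I$. So the first term contributes $\tfrac12 V$. For the second term, $M i\Omega = (Vi\Omega)^{-1} i\Omega = (i\Omega)^{-1} V^{-1} (i\Omega) \cdot (i\Omega)^{-1} \cdot i \Omega$; cleaner: $M = (i\Omega)^{-1} V^{-1} = -i\Omega V^{-1}$ since $(i\Omega)^{-1} = i\Omega$, so wait $(i\Omega)^{-1} = i\Omega$ because $(i\Omega)^2=I$. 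Thus $M = i\Omega V^{-1}$ and $M i\Omega = i\Omega V^{-1} i\Omega = -\Omega V^{-1} \Omega$. Using $\Omega^T = -\Omega$ this equals $\Omega V^{-1} \Omega^T$. Therefore $V^{(2)} = \tfrac{1}{2}\left(V + \Omega V^{-1}\Omega^T\right)$, which is \eqref{eq:W(2)-equality}.

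The only genuine subtlety---and the step I would be most careful about---is the bookkeeping with $i\Omega$: keeping straight that $(i\Omega)^{-1} = i\Omega$, that $\Omega^T = -\Omega$, and the placement of these factors on the correct side of $V$ and $V^{-1}$. Since $M$ and $I$ commute the binomial expansions are routine, and the determinant/normalization prefactor follows directly by specializing Proposition~\ref{prop:rho-to-alpha} to $\alpha = 2$ with no additional argument needed. I would present the proof as: (i) quote Proposition~\ref{prop:rho-to-alpha} at $\alpha=2$, (ii) do the two-line binomial simplification of numerator and denominator, (iii) do the three-line $i\Omega$ manipulation to land on the stated form.
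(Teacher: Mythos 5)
Your proposal is correct and follows essentially the same route as the paper: specialize Proposition~\ref{prop:rho-to-alpha} to $\alpha=2$, simplify the commuting numerator and denominator (the paper phrases this via the scalar function $x\mapsto\frac{(1+1/x)^{2}+(1-1/x)^{2}}{(1+1/x)^{2}-(1-1/x)^{2}}=\tfrac{1}{2}(x+x^{-1})$, which is the same algebra as your binomial expansion in $M=(Vi\Omega)^{-1}$), and then use $(i\Omega)^{2}=I$ and $\Omega^{T}=-\Omega$ to arrive at $V^{(2)}=\tfrac{1}{2}\left(V+\Omega V^{-1}\Omega^{T}\right)$. Your $i\Omega$ bookkeeping, after the self-corrections, lands on the correct signs, and inheriting $H^{(2)}=2H$ and the normalization from the proposition is exactly what the paper does.
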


\begin{proof}
Our starting point is the expression for $V^{(2)}$ in
\eqref{eq:alpha-cov-matrix}. Consider that the matrices in the numerator and
denominator are all commuting, so that we can work with the scalar function
$x\rightarrow\frac{\left(  1+1/x\right)  ^{\alpha}+\left(  1-1/x\right)
^{\alpha}}{\left(  1+1/x\right)  ^{\alpha}-\left(  1-1/x\right)  ^{\alpha}}$
for $\left\vert x\right\vert >1$ and simplify it for $\alpha=2$:%
\begin{align}
\frac{\left(  1+1/x\right)  ^{2}+\left(  1-1/x\right)  ^{2}}{\left(
1+1/x\right)  ^{2}-\left(  1-1/x\right)  ^{2}}  &  =\frac{1+2/x+1/x^{2}%
+1-2/x+1/x^{2}}{1+2/x+1/x^{2}-\left(  1-2/x+1/x^{2}\right)  }\\
&  =\frac{2+2/x^{2}}{4/x}\\
&  =\frac{1}{2}\left(  x+x^{-1}\right)  .
\end{align}
So we conclude that%
\begin{align}
V^{(2)}  &  =\frac{1}{2}\left[  Vi\Omega+\left(  Vi\Omega\right)
^{-1}\right]  i\Omega\\
&  =\frac{1}{2}\left[  V+i\Omega V^{-1}i\Omega\right] \\
&  =\frac{1}{2}\left[  V+\Omega V^{-1}\Omega^{T}\right]  .
\end{align}
Proposition~\ref{prop:rho-to-alpha} already justified that the matrix
$V^{(2)}$ is a legitimate covariance matrix. From
Proposition~\ref{prop:rho-to-alpha}, we know that $H^{(2)}=2H$.
\end{proof}

\begin{corollary}
[{[\onlinecite{PS00}]}]\label{prop:rho-to-1/2}Given a quantum Gaussian state $\rho$
expressed as%
\begin{equation}
\rho=\frac{1}{\left[  \det\left(  \left[  V+i\Omega\right]  /2\right)
\right]  ^{1/2}}\exp\left[  -\frac{1}{2}\hat{x}^{T}H\hat{x}\right]  ,
\end{equation}
for a positive-definite real matrix $H$ and corresponding covariance matrix
$V$, the density operator $\rho^{1/2}/\operatorname{Tr}\{\rho^{1/2}\}$ can be
written as%
\begin{equation}
\rho(1/2)=\frac{\rho^{1/2}}{\operatorname{Tr}\{\rho^{1/2}\}}=\frac{1}{\left[
\det\left(  \left[  V^{(1/2)}+i\Omega\right]  /2\right)  \right]  ^{1/2}}%
\exp\left[  -\frac{1}{2}\hat{x}^{T}H^{(1/2)}\hat{x}\right]  ,
\end{equation}
where the positive-definite real matrix $H^{(1/2)}$ and the covariance matrix
$V^{(1/2)}$ are given by%
\begin{align}
H^{(1/2)}  &  =H/2,\\
V^{(1/2)}  &  =V_{\rho(1/2)}=\left(  \sqrt{I+(V\Omega)^{-2}}+I\right)  V.
\end{align}

\end{corollary}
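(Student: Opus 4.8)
The plan is to specialize Proposition~\ref{prop:rho-to-alpha} to $\alpha = 1/2$, in exact parallel with the proof of Corollary~\ref{cor:rho-to-the-2}. The claim $H^{(1/2)} = H/2$ is immediate from the relation $H^{(\alpha)} = \alpha H$. For the covariance matrix I would start from the expression \eqref{eq:alpha-cov-matrix} for $V^{(\alpha)}$ with $\alpha = 1/2$. Since the numerator and denominator there are commuting functions of the single matrix $V i\Omega$, it suffices to simplify the scalar function
\[
g(x) = \frac{(1+1/x)^{1/2} + (1-1/x)^{1/2}}{(1+1/x)^{1/2} - (1-1/x)^{1/2}}, \qquad |x| > 1,
\]
which is legitimate since $1 \pm 1/x > 0$ there, and which applies to $V i\Omega$ because all its eigenvalues satisfy $|\lambda| > 1$.

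The key elementary step is rationalizing $g$: multiplying numerator and denominator by $(1+1/x)^{1/2} + (1-1/x)^{1/2}$ turns the denominator into $(1+1/x) - (1-1/x) = 2/x$ and the numerator into $2 + 2\sqrt{1 - 1/x^2}$, giving $g(x) = x\big(1 + \sqrt{1 - 1/x^2}\big)$. Substituting $x \mapsto V i\Omega$ then yields
\[
V^{(1/2)} = V i\Omega \Big( I + \sqrt{I - (V i\Omega)^{-2}} \Big) i\Omega .
\]

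The remaining work — which I expect to be the only real obstacle — is the matrix bookkeeping needed to bring this into the asserted form $\big(\sqrt{I + (V\Omega)^{-2}} + I\big)V$. I would distribute: the ``$I$'' term contributes $V i\Omega \cdot i\Omega = V$ using $(i\Omega)^{2} = I$. For the square-root term, write $V i\Omega \sqrt{I - (V i\Omega)^{-2}}\, i\Omega = V\big(i\Omega\, \sqrt{I - (V i\Omega)^{-2}}\, i\Omega\big)$ and push $i\Omega$ through the square root via the functional-calculus identity $M f(L) M^{-1} = f(M L M^{-1})$ with $M = i\Omega = M^{-1}$; using $(V i\Omega)^{-1} = i\Omega V^{-1}$ one computes $i\Omega (V i\Omega)^{-2} i\Omega = (V^{-1} i\Omega)^2 = -(V^{-1}\Omega)^2$, so this term equals $V \sqrt{I + (V^{-1}\Omega)^2}$. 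Finally, pushing $V$ through the square root the same way and using $\Omega^{-1} = -\Omega$ gives $V (V^{-1}\Omega)^2 V^{-1} = (\Omega V^{-1})^2 = (V\Omega)^{-2}$, hence $V\sqrt{I + (V^{-1}\Omega)^2} = \sqrt{I + (V\Omega)^{-2}}\, V$; adding the two contributions gives $V^{(1/2)} = V + \sqrt{I + (V\Omega)^{-2}}\, V$, as claimed. Throughout, I would note that each square root is applied to a matrix similar to a positive-definite one (its spectrum lies in $(0,1)$, inherited from the spectrum of $V i\Omega$), so the functional calculus is unambiguous, and the legitimacy of $V^{(1/2)}$ as a covariance matrix is already guaranteed by Proposition~\ref{prop:rho-to-alpha}.
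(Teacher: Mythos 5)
Your proposal is correct and follows essentially the same route as the paper: specialize Proposition~\ref{prop:rho-to-alpha} to $\alpha=1/2$, rationalize the scalar function to get $g(x)=x\bigl(1+\sqrt{1-1/x^{2}}\bigr)$, and substitute $x\mapsto V i\Omega$. The only difference is that you spell out the final similarity/involution bookkeeping (using $(i\Omega)^2=I$, $\Omega^{-1}=-\Omega$, and $Mf(L)M^{-1}=f(MLM^{-1})$) that the paper states in a single line, and your manipulations there check out.
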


\begin{proof}
Our starting point is the expression for $V^{(1/2)}$ in
\eqref{eq:alpha-cov-matrix}. Consider that the matrices in the numerator and
denominator are all commuting, so that we can work with the scalar function
$x\rightarrow\frac{\left(  1+1/x\right)  ^{\alpha}+\left(  1-1/x\right)
^{\alpha}}{\left(  1+1/x\right)  ^{\alpha}-\left(  1-1/x\right)  ^{\alpha}}$
for $\left\vert x\right\vert >1$ and simplify it for $\alpha=1/2$:%
\begin{align}
&  \frac{\left(  1+1/x\right)  ^{1/2}+\left(  1-1/x\right)  ^{1/2}}{\left(
1+1/x\right)  ^{1/2}-\left(  1-1/x\right)  ^{1/2}}\nonumber\\
&  =\left[  \frac{\left(  1+1/x\right)  ^{1/2}+\left(  1-1/x\right)  ^{1/2}%
}{\left(  1+1/x\right)  ^{1/2}-\left(  1-1/x\right)  ^{1/2}}\right]  \left[
\frac{\left(  1+1/x\right)  ^{1/2}+\left(  1-1/x\right)  ^{1/2}}{\left(
1+1/x\right)  ^{1/2}+\left(  1-1/x\right)  ^{1/2}}\right] \\
&  =\frac{1+1/x+2\sqrt{\left(  1+1/x\right)  \left(  1-1/x\right)  }%
+1-1/x}{1+1/x-\left(  1-1/x\right)  }\\
&  =\frac{2+2\sqrt{1-1/x^{2}}}{2/x}\\
&  =\left(  1+\sqrt{1-1/x^{2}}\right)  x.
\end{align}
So we conclude that%
\begin{equation}
V^{(1/2)}=\left(  I+\sqrt{I-\left(  Vi\Omega\right)  ^{-2}}\right)  \left(
Vi\Omega\right)  i\Omega=\left(  \sqrt{I+(V\Omega)^{-2}}+I\right)  V.
\end{equation}
Proposition~\ref{prop:rho-to-alpha} already justified that the matrix
$V^{(1/2)}$ is a legitimate covariance matrix. From
Proposition~\ref{prop:rho-to-alpha}, we know that $H^{(1/2)}=H/2$.
\end{proof}

\subsection{Traces of compositions of quantum Gaussian states}

In what follows, we repeatedly make use of the following well known lemma:

\begin{lemma}
[{[\onlinecite{W50}]}]\label{lem:woodbury}Given invertible matrices $A$ and $B$, the
following equality holds%
\begin{equation}
\left(  A+B\right)  ^{-1}=A^{-1}-A^{-1}\left(  A^{-1}+B^{-1}\right)
^{-1}A^{-1}.
\end{equation}

\end{lemma}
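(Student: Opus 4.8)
The plan is to verify the identity by direct algebraic manipulation, treating $A$, $B$, $A^{-1}+B^{-1}$, and $A+B$ all as invertible while taking care never to assume that any of these matrices commute. (Invertibility of $A+B$ actually follows from that of the other three, via the factorization $A+B=A\left(A^{-1}+B^{-1}\right)B$ used below, so it need not be assumed separately.)

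I would start from the right-hand side, factor $A^{-1}$ out on the left, and simplify the remaining bracket by inserting $I=\left(A^{-1}+B^{-1}\right)^{-1}\left(A^{-1}+B^{-1}\right)$ and collecting the common left factor:
\begin{equation}
A^{-1}-A^{-1}\left(A^{-1}+B^{-1}\right)^{-1}A^{-1}
=A^{-1}\left[I-\left(A^{-1}+B^{-1}\right)^{-1}A^{-1}\right]
=A^{-1}\left(A^{-1}+B^{-1}\right)^{-1}\left[\left(A^{-1}+B^{-1}\right)-A^{-1}\right],
\end{equation}
and since the bracket on the right is just $B^{-1}$, the right-hand side equals $A^{-1}\left(A^{-1}+B^{-1}\right)^{-1}B^{-1}$. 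The final step is to observe the factorization $A^{-1}+B^{-1}=B^{-1}\left(A+B\right)A^{-1}$ (expand the right side to check), which upon inverting gives $\left(A^{-1}+B^{-1}\right)^{-1}=A\left(A+B\right)^{-1}B$; substituting this in yields
\begin{equation}
A^{-1}\left(A^{-1}+B^{-1}\right)^{-1}B^{-1}=A^{-1}\,A\left(A+B\right)^{-1}B\,B^{-1}=\left(A+B\right)^{-1},
\end{equation}
which is the claimed identity.

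Alternatively, and perhaps more transparently, I would multiply the asserted equation on the left by $A+B$ and check that the right-hand side collapses to $I$: using $(A+B)A^{-1}=I+BA^{-1}$ and $B\left(A^{-1}+B^{-1}\right)=I+BA^{-1}$, the cross terms cancel after factoring $A^{-1}$ out on the right. There is essentially no obstacle here beyond careful bookkeeping; the only point that requires attention is the non-commutativity, which is why the factorizations above are written with their factors in a fixed order rather than, say, cancelling $A^{-1}A$ before it is legitimate to do so.
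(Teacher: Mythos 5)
Your proof is correct: the factorizations $A^{-1}+B^{-1}=B^{-1}(A+B)A^{-1}$ and hence $(A^{-1}+B^{-1})^{-1}=A(A+B)^{-1}B$ are valid without any commutativity assumption, and both your forward derivation and the suggested check by left-multiplying with $A+B$ go through. Note that the paper itself gives no proof of this lemma at all --- it is stated with a citation to Woodbury's classical result and used as a black box --- so there is nothing in the paper to compare your argument against; your elementary verification is a perfectly adequate substitute, and you are right to flag (as the paper's statement does not) that the identity implicitly requires $A^{-1}+B^{-1}$ (equivalently, given invertible $A$ and $B$, the sum $A+B$) to be invertible for either side to make sense.
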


\begin{proposition}
[{[\onlinecite{Balian1969},\ \onlinecite{PS00}]}]\label{prop:product-rule-Gaussian}Given symmetric
matrices $H_{1}$ and $H_{2}$, the following equality holds%
\begin{equation}
\exp\left[  -\frac{1}{2}\hat{x}^{T}H_{1}\hat{x}\right]  \exp\left[  -\frac
{1}{2}\hat{x}^{T}H_{2}\hat{x}\right]  =\exp\left[  -\frac{1}{2}\hat{x}%
^{T}H_{3}\hat{x}\right]  , \label{eq:product-Gaussian-forms}%
\end{equation}
where $H_{3}$ is a symmetric matrix such that%
\begin{align}
H_{3}  &  =2i\Omega\operatorname{arcoth}(V_{3}i\Omega),\\
V_{3}  &  =-i\Omega+\left(  V_{2}+i\Omega\right)  \left(  V_{2}+V_{1}\right)
^{-1}\left(  V_{1}+i\Omega\right)  ,\label{eq:V3-product}\\
V_{1}  &  =\coth(i\Omega H_{1}/2)i\Omega,\\
V_{2}  &  =\coth(i\Omega H_{2}/2)i\Omega.
\end{align}

\end{proposition}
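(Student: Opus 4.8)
The plan is to follow the characteristic-function method of \cite{PS00}. For an operator $O$ put $\chi_{O}(z)\equiv\operatorname{Tr}\{O\,D(z)\}$, where $D(z)=\exp[z^{T}i\Omega\hat{x}]$ is the (complexified) displacement operator. From the Weyl relation $D(a)D(b)=e^{-\frac{1}{2}a^{T}i\Omega b}D(a+b)$ one obtains the standard fact that the characteristic function of a product is a twisted convolution, $\chi_{AB}(z)=c_{n}\int d^{2n}w\,\chi_{A}(w)\,\chi_{B}(z-w)\,e^{-\frac{1}{2}w^{T}i\Omega z}$ for a dimension-dependent constant $c_{n}$. I would first assume that $H_{1}$ and $H_{2}$ have positive-definite real part, so that each $\exp[-\frac{1}{2}\hat{x}^{T}H_{j}\hat{x}]$ is trace class and \eqref{eq:norm-of-Gaussian-form} applies; the general complex-symmetric case then follows because both sides of \eqref{eq:product-Gaussian-forms} depend analytically on the entries of $H_{1}$ and $H_{2}$ on a suitable domain. (An equivalent input is the Balian--Brezin composition law \cite{Balian1969}: the left-hand side of \eqref{eq:product-Gaussian-forms} is a Gaussian operator $\exp[-\frac{1}{2}\hat{x}^{T}H_{3}\hat{x}]$ determined by $\exp(i\Omega H_{3})=\exp(i\Omega H_{2})\exp(i\Omega H_{1})$, which one translates into a statement about covariance matrices through \eqref{eq:def-W}, \eqref{eq:W-to-H}, and \eqref{eq:H-to-W}. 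I would lead with the convolution route since it simultaneously yields the normalization constant.)

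Concretely, by \eqref{eq:norm-of-Gaussian-form} one has $\chi_{G_{H_{j}}}(z)=[\det((V_{j}+i\Omega)/2)]^{1/2}\exp(-\tfrac{1}{4}z^{T}\Omega^{T}V_{j}\Omega z)$, where $V_{j}=\coth(i\Omega H_{j}/2)i\Omega$ as in \eqref{eq:gibbs-to-cov}. Substituting the two Gaussians into the twisted convolution gives a Gaussian integral in $w$ whose quadratic form is governed by $\Omega^{T}(V_{1}+V_{2})\Omega$ and whose linear-in-$w$ term involves $V_{2}$ and $\Omega$. Completing the square and using the Gaussian-integral formula, together with $\Omega^{T}=-\Omega$, $(i\Omega)^{2}=I$, and $\det\Omega=1$, produces a Gaussian in $z$ with prefactor $[\det((V_{1}+i\Omega)/2)]^{1/2}[\det((V_{2}+i\Omega)/2)]^{1/2}/[\det((V_{1}+V_{2})/2)]^{1/2}$ and exponent $-\tfrac{1}{4}z^{T}\Omega^{T}V_{3}\Omega z$, with $V_{3}$ emerging (after regrouping) in a form such as $V_{3}=V_{1}-(V_{1}-i\Omega)(V_{1}+V_{2})^{-1}(V_{1}+i\Omega)$.

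It then remains to bring $V_{3}$ into the stated shape \eqref{eq:V3-product} and to confirm the normalization. The first is a one-line computation: from $(V_{1}-i\Omega)+(V_{2}+i\Omega)=V_{1}+V_{2}$ one gets $V_{1}-(V_{1}-i\Omega)(V_{1}+V_{2})^{-1}(V_{1}+i\Omega)=-i\Omega+(V_{2}+i\Omega)(V_{1}+V_{2})^{-1}(V_{1}+i\Omega)$, which is \eqref{eq:V3-product} (if the integral is organized so that $(V_{1}+V_{2})^{-1}$ appears inverted the other way, identities of the type in Lemma~\ref{lem:woodbury} do the same job). That $V_{3}^{T}=V_{3}$, and hence that $H_{3}=2i\Omega\operatorname{arcoth}(V_{3}i\Omega)$ is a symmetric matrix, follows by transposing \eqref{eq:V3-product} and using $V_{j}^{T}=V_{j}$ and $\Omega^{T}=-\Omega$, exactly as in the computation just after \eqref{eq:gibbs-to-cov}; then $V_{3}=\coth(i\Omega H_{3}/2)i\Omega$ as well. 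Finally, \eqref{eq:V3-product} is equivalent to $V_{3}+i\Omega=(V_{2}+i\Omega)(V_{1}+V_{2})^{-1}(V_{1}+i\Omega)$, so $\det(V_{3}+i\Omega)=\det(V_{2}+i\Omega)\det(V_{1}+i\Omega)/\det(V_{1}+V_{2})$; dividing by the appropriate power of $2$ shows the Gaussian-integral prefactor equals $[\det((V_{3}+i\Omega)/2)]^{1/2}=\operatorname{Tr}\{\exp[-\frac{1}{2}\hat{x}^{T}H_{3}\hat{x}]\}$, so no stray scalar survives and \eqref{eq:product-Gaussian-forms} holds as an equality of operators. The main obstacle is the bookkeeping in the twisted Gaussian integral --- settling all the sign and normalization conventions, evaluating the multivariate complex Gaussian integral, and justifying it for complex-symmetric $H_{j}$ by analytic continuation --- after which the remaining matrix manipulations are routine.
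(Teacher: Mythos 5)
Your proposal is sound, but it takes a genuinely different route from the one in the paper. The paper does not compute any integral: it invokes the Balian--Brezin result \cite{Balian1969} twice, first for the bare fact that the product is again of Gaussian form with a symmetric $H_{3}$, and second for the composition law $\exp[i\Omega H_{3}]=\exp[i\Omega H_{2}]\exp[i\Omega H_{1}]$; it then converts this into the covariance-matrix statement purely algebraically, passing through the matrices $W_{j}=-V_{j}i\Omega$ and using Lemma~\ref{lem:woodbury} to massage $W_{3}=\frac{I+\exp(i\Omega H_{2})\exp(i\Omega H_{1})}{I-\exp(i\Omega H_{2})\exp(i\Omega H_{1})}$ into $W_{3}=I+(W_{2}-I)(W_{1}+W_{2})^{-1}(W_{1}-I)$, which is \eqref{eq:V3-product}. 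Your characteristic-function/twisted-convolution argument is essentially the route of \cite{PS00}, and it buys something the paper's proof defers: the Gaussian integral simultaneously produces the normalization $\det((V_{1}+i\Omega)/2)\det((V_{2}+i\Omega)/2)/\det((V_{1}+V_{2})/2)$, which the paper establishes separately in Proposition~\ref{prop:simple-product-normalization}; your final determinant identity $V_{3}+i\Omega=(V_{2}+i\Omega)(V_{1}+V_{2})^{-1}(V_{1}+i\Omega)$ and the rearrangement $V_{1}-(V_{1}-i\Omega)(V_{1}+V_{2})^{-1}(V_{1}+i\Omega)=-i\Omega+(V_{2}+i\Omega)(V_{1}+V_{2})^{-1}(V_{1}+i\Omega)$ are both correct. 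The costs of your route are the ones you name: the twisted Gaussian integral must actually be carried out with all sign conventions fixed (note that \eqref{eq:V3-product} is not invariant under exchanging $1\leftrightarrow 2$ --- it transposes --- so the ordering must be tracked carefully in the convolution), the characteristic function only determines the operator directly in the trace-class regime, and the extension to general complex symmetric $H_{1},H_{2}$ rests on an analytic-continuation argument that you sketch but do not make precise; the paper sidesteps both issues by taking the operator-level statement for complex symmetric $H$ directly from \cite{Balian1969} and keeping everything at the level of finite-dimensional matrix identities.
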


\begin{proof}
The equality in \eqref{eq:product-Gaussian-forms} is one of the main results
in Ref.~\onlinecite{Balian1969}, and the particular form of $V_{3}$ in
\eqref{eq:V3-product} was determined in Ref.~\onlinecite{PS00}. From Ref.~\onlinecite{Balian1969},
we know that $H_{3}$ is a symmetric matrix, and furthermore, the matrix
$H_{3}$ that satisfies \eqref{eq:product-Gaussian-forms} is the same one that
satisfies the following equation:%
\begin{equation}
\exp\left[  -i\Omega H_{1}\right]  \exp\left[  -i\Omega H_{2}\right]
=\exp\left[  -i\Omega H_{3}\right]  .
\end{equation}
Note that, by taking inverses, this latter equation is equivalent to%
\begin{equation}
\exp\left[  i\Omega H_{3}\right]  =\exp\left[  i\Omega H_{2}\right]
\exp\left[  i\Omega H_{1}\right]  . \label{eq:exp-i-Om-compose}%
\end{equation}
We use the relations in \eqref{eq:cov-to-gibbs}, \eqref{eq:gibbs-to-cov},
and\ \eqref{eq:H-to-W}\ to relate $H_{3}$ to matrices $V_{3}$ and $W_{3}$
given by%
\begin{equation}
V_{3}=-W_{3}i\Omega,
\end{equation}
where%
\begin{equation}
W_{3}=\frac{I+\exp\left(  i\Omega H_{3}\right)  }{I-\exp\left(  i\Omega
H_{3}\right)  }. \label{eq:W3-start}%
\end{equation}
For convenience of the reader, we detail some algebraic manipulations that
lead to the form of $V_{3}$ in \eqref{eq:V3-product}, but we note that it is
possible to arrive at this form by other means \cite{PS00}. By
\eqref{eq:W-to-H}, we have that
\begin{align}
W_{3}  &  =\frac{I+\exp\left(  i\Omega H_{3}\right)  }{I-\exp\left(  i\Omega
H_{3}\right)  }\\
&  =\left(  I+\exp\left[  i\Omega H_{2}\right]  \exp\left[  i\Omega
H_{1}\right]  \right)  \left(  I-\exp\left[  i\Omega H_{2}\right]  \exp\left[
i\Omega H_{1}\right]  \right)  ^{-1}\\
&  =\left(  \exp\left[  i\Omega H_{2}\right]  +\exp\left[  -i\Omega
H_{1}\right]  \right)  \left(  \exp\left[  -i\Omega H_{1}\right]  -\exp\left[
i\Omega H_{2}\right]  \right)  ^{-1}\label{eq:use-later-sqrt-sand}\\
&  =\left(  \exp\left[  i\Omega H_{2}\right]  -\exp\left[  -i\Omega
H_{1}\right]  +2\exp\left[  -i\Omega H_{1}\right]  \right)  \left(
\exp\left[  -i\Omega H_{1}\right]  -\exp\left[  i\Omega H_{2}\right]  \right)
^{-1}\\
&  =-I-2\exp\left[  -i\Omega H_{1}\right]  \left(  \exp\left[  i\Omega
H_{2}\right]  -\exp\left[  -i\Omega H_{1}\right]  \right)  ^{-1}\\
&  =-I-2\exp\left[  -i\Omega H_{1}\right]  \left(  \exp\left[  i\Omega
H_{2}\right]  -I-\left(  \exp\left[  -i\Omega H_{1}\right]  -I\right)
\right)  ^{-1}.
\end{align}
Consider from \eqref{eq:W-to-H} that%
\begin{align}
\exp\left[  i\Omega H_{2}\right]  -I  &  =\frac{W_{2}-I}{W_{2}+I}-\left[
\frac{W_{2}+I}{W_{2}+I}\right]  =-2\left[  W_{2}+I\right]  ^{-1},\\
\exp\left[  -i\Omega H_{1}\right]  -I  &  =\frac{W_{1}+I}{W_{1}-I}-\left[
\frac{W_{1}-I}{W_{1}-I}\right]  =2\left[  W_{1}-I\right]  ^{-1}.
\end{align}
So we find that%
\begin{align}
W_{3}  &  =-I-2\exp\left[  -i\Omega H_{1}\right]  \left(  -2\left[
W_{2}+I\right]  ^{-1}-2\left[  W_{1}-I\right]  ^{-1}\right)  ^{-1}\\
&  =-I+\exp\left[  -i\Omega H_{1}\right]  \left(  \left[  W_{2}+I\right]
^{-1}+\left[  W_{1}-I\right]  ^{-1}\right)  ^{-1}\\
&  =-I+\frac{W_{1}+I}{W_{1}-I}\left(  \left[  W_{2}+I\right]  ^{-1}+\left[
W_{1}-I\right]  ^{-1}\right)  ^{-1}.
\end{align}
Applying Lemma~\ref{lem:woodbury}\ with $A=\left(  W_{1}-I\right)  ^{-1}$ and
$B=\left(  W_{2}+I\right)  ^{-1}$, we find that%
\begin{align}
\left(  \left[  W_{2}+I\right]  ^{-1}+\left[  W_{1}-I\right]  ^{-1}\right)
^{-1}  &  =\left[  W_{1}-I\right]  -\left[  W_{1}-I\right]  \left(
W_{1}-I+W_{2}+I\right)  ^{-1}\left[  W_{1}-I\right] \\
&  =\left[  W_{1}-I\right]  -\left[  W_{1}-I\right]  \left(  W_{1}%
+W_{2}\right)  ^{-1}\left[  W_{1}-I\right]  ,
\end{align}
and this implies that%
\begin{align}
W_{3}  &  =-I+\frac{W_{1}+I}{W_{1}-I}\left(  \left[  W_{1}-I\right]  -\left[
W_{1}-I\right]  \left(  W_{1}+W_{2}\right)  ^{-1}\left[  W_{1}-I\right]
\right) \\
&  =-I+W_{1}+I-\left[  W_{1}+I\right]  \left(  W_{1}+W_{2}\right)
^{-1}\left[  W_{1}-I\right] \\
&  =W_{1}-\left[  W_{1}+I\right]  \left(  W_{1}+W_{2}\right)  ^{-1}\left[
W_{1}-I\right]  . \label{eq:use-later-sqrt-sand-last}%
\end{align}
Continuing, we have that%
\begin{align}
W_{3}  &  =W_{1}-\left[  W_{1}+W_{2}-W_{2}+I\right]  \left(  W_{1}%
+W_{2}\right)  ^{-1}\left[  W_{1}-I\right] \\
&  =W_{1}-\left[  W_{1}+W_{2}\right]  \left(  W_{1}+W_{2}\right)  ^{-1}\left[
W_{1}-I\right]  -\left[  -W_{2}+I\right]  \left(  W_{1}+W_{2}\right)
^{-1}\left[  W_{1}-I\right] \\
&  =W_{1}-\left[  W_{1}-I\right]  +\left[  W_{2}-I\right]  \left(  W_{1}%
+W_{2}\right)  ^{-1}\left[  W_{1}-I\right] \\
&  =I+\left[  W_{2}-I\right]  \left(  W_{1}+W_{2}\right)  ^{-1}\left[
W_{1}-I\right]  . \label{eq:W-composition-rule}%
\end{align}
So this implies, from \eqref{eq:def-W}, that%
\begin{align}
-V_{3}i\Omega &  =I+\left(  -V_{2}i\Omega-I\right)  \left(  -V_{2}%
i\Omega-V_{1}i\Omega\right)  ^{-1}\left(  -V_{1}i\Omega-I\right) \\
&  =I-\left(  V_{2}i\Omega+I\right)  \left(  \left[  V_{2}+V_{1}\right]
i\Omega\right)  ^{-1}\left(  V_{1}i\Omega+I\right) \\
&  =I-\left(  V_{2}i\Omega+I\right)  i\Omega\left(  V_{2}+V_{1}\right)
^{-1}\left(  V_{1}i\Omega+I\right) \\
&  =I-\left(  V_{2}+i\Omega\right)  \left(  V_{2}+V_{1}\right)  ^{-1}\left(
V_{1}i\Omega+I\right)  .
\end{align}
This finally implies \eqref{eq:V3-product}.
\end{proof}

\begin{lemma}
The matrix $V_{3}$ from Proposition~\ref{prop:product-rule-Gaussian} is
symmetric, which follows from the fact that $H_{3}$ is symmetric or by
inspecting the following identity:%
\begin{multline}
-i\Omega+\left(  V_{2}+i\Omega\right)  \left(  V_{2}+V_{1}\right)
^{-1}\left(  V_{1}+i\Omega\right) \\
=\left(  V_{2}^{-1}+V_{1}^{-1}\right)  ^{-1}-V_{1}\left(  V_{2}+V_{1}\right)
^{-1}i\Omega+i\Omega\left(  V_{2}+V_{1}\right)  ^{-1}V_{1}+\Omega\left(
V_{2}+V_{1}\right)  ^{-1}\Omega^{T}.
\end{multline}

\end{lemma}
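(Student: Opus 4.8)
The plan is to establish the displayed identity by a direct expansion of the left-hand side, and then to read off the symmetry of $V_3$ immediately: on the right-hand side every summand is either manifestly symmetric or is paired with another summand as its transpose. (Alternatively, one may invoke the fact recalled in Proposition~\ref{prop:product-rule-Gaussian} from \cite{Balian1969} that $H_3$ is symmetric, together with the equivalence ``$H$ symmetric $\Leftrightarrow$ $V$ symmetric'' proved in Section~\ref{sec:prelim}; the identity merely makes this symmetry transparent.)

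First I would expand the middle term of $V_3$:
\begin{equation*}
\left(V_2+i\Omega\right)\left(V_2+V_1\right)^{-1}\left(V_1+i\Omega\right)
= V_2\left(V_2+V_1\right)^{-1}V_1 + V_2\left(V_2+V_1\right)^{-1}i\Omega + i\Omega\left(V_2+V_1\right)^{-1}V_1 + i\Omega\left(V_2+V_1\right)^{-1}i\Omega .
\end{equation*}
Since $\Omega^{T}=-\Omega$, the last summand equals $\Omega\left(V_2+V_1\right)^{-1}\Omega^{T}$. Writing $V_2+V_1=V_1\left(V_1^{-1}+V_2^{-1}\right)V_2$ gives $V_2\left(V_2+V_1\right)^{-1}V_1=\left(V_2^{-1}+V_1^{-1}\right)^{-1}$ for the first summand. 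Finally, combining the leading $-i\Omega$ of $V_3$ with the second summand and using $-I+V_2\left(V_2+V_1\right)^{-1}=-V_1\left(V_2+V_1\right)^{-1}$ produces $-V_1\left(V_2+V_1\right)^{-1}i\Omega$. Collecting the four resulting terms yields precisely the right-hand side of the stated identity.

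For the symmetry conclusion: $V_1$ and $V_2$ are symmetric (as noted before Proposition~\ref{prop:product-rule-Gaussian}, equivalently to $H_1,H_2$ being symmetric), hence so are $V_1+V_2$, $\left(V_1+V_2\right)^{-1}$, $\left(V_2^{-1}+V_1^{-1}\right)^{-1}$, and $\Omega\left(V_2+V_1\right)^{-1}\Omega^{T}$. For the two remaining terms, using $(i\Omega)^{T}=i\Omega^{T}=-i\Omega$ one computes
\begin{equation*}
\left(-V_1\left(V_2+V_1\right)^{-1}i\Omega\right)^{T}
=-(i\Omega)^{T}\left(V_2+V_1\right)^{-1}V_1
=i\Omega\left(V_2+V_1\right)^{-1}V_1 ,
\end{equation*}
so these two terms are transposes of each other and their sum is symmetric. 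Therefore the right-hand side, and hence $V_3$, is symmetric.

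I do not anticipate a genuine obstacle: the computation is short, and the only points requiring care are the bookkeeping of the factors of $i\Omega$ (in particular the relation $(i\Omega)^{T}=-i\Omega$ used in the transpose computation) and the tacit invertibility of $V_1$, $V_2$, and $V_1+V_2$, which is exactly the regime in which Proposition~\ref{prop:product-rule-Gaussian} is applied.
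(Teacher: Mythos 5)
Your proof is correct and follows essentially the same route as the paper's: a direct algebraic expansion of the left-hand side of the identity, followed by reading off the symmetry from the structure of the right-hand side. The only cosmetic differences are that the paper produces the term $\left(V_{2}^{-1}+V_{1}^{-1}\right)^{-1}$ from $V_{1}-V_{1}\left(V_{2}+V_{1}\right)^{-1}V_{1}$ via Lemma~\ref{lem:woodbury}, whereas you obtain it directly from the factorization $V_{2}+V_{1}=V_{1}\left(V_{1}^{-1}+V_{2}^{-1}\right)V_{2}$, and that you make explicit the transpose pairing of the two cross terms, which the paper leaves to inspection.
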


\begin{proof}
Consider that%
\begin{align}
&  -i\Omega+\left(  V_{2}+i\Omega\right)  \left(  V_{2}+V_{1}\right)
^{-1}\left(  V_{1}+i\Omega\right) \nonumber\\
&  =-i\Omega+\left(  V_{2}+V_{1}-V_{1}+i\Omega\right)  \left(  V_{2}%
+V_{1}\right)  ^{-1}\left(  V_{1}+i\Omega\right) \\
&  =-i\Omega+\left(  V_{2}+V_{1}\right)  \left(  V_{2}+V_{1}\right)
^{-1}\left(  V_{1}+i\Omega\right)  +\left(  -V_{1}+i\Omega\right)  \left(
V_{2}+V_{1}\right)  ^{-1}\left(  V_{1}+i\Omega\right) \\
&  =-i\Omega+\left(  V_{1}+i\Omega\right)  +\left(  -V_{1}+i\Omega\right)
\left(  V_{2}+V_{1}\right)  ^{-1}\left(  V_{1}+i\Omega\right) \\
&  =V_{1}+\left(  -V_{1}+i\Omega\right)  \left(  V_{2}+V_{1}\right)
^{-1}\left(  V_{1}+i\Omega\right) \\
&  =V_{1}-V_{1}\left(  V_{2}+V_{1}\right)  ^{-1}V_{1}-V_{1}\left(  V_{2}%
+V_{1}\right)  ^{-1}i\Omega+i\Omega\left(  V_{2}+V_{1}\right)  ^{-1}%
V_{1}\nonumber\\
&  \qquad+i\Omega\left(  V_{2}+V_{1}\right)  ^{-1}i\Omega\\
&  =\left(  V_{2}^{-1}+V_{1}^{-1}\right)  ^{-1}-V_{1}\left(  V_{2}%
+V_{1}\right)  ^{-1}i\Omega+i\Omega\left(  V_{2}+V_{1}\right)  ^{-1}%
V_{1}+\Omega\left(  V_{2}+V_{1}\right)  ^{-1}\Omega^{T}.
\end{align}
In the last line, we used Lemma~\ref{lem:woodbury} with $A=V_{1}^{-1}$ and
$B=V_{2}^{-1}$ and the fact that $\Omega^{T}=-\Omega$.
\end{proof}

\begin{proposition}
[{[\onlinecite{Balian1969},\ \onlinecite{BBP15},\ \onlinecite{LDW17}]}]\label{prop:sandwich-with-square-root}Given
positive-definite real matrices $H_{4}$ and $H_{5}$, we have that%
\begin{equation}
\exp\left[  -\frac{1}{2}\hat{x}^{T}\left[  H_{4}/2\right]  \hat{x}\right]
\exp\left[  -\frac{1}{2}\hat{x}^{T}H_{5}\hat{x}\right]  \exp\left[  -\frac
{1}{2}\hat{x}^{T}\left[  H_{4}/2\right]  \hat{x}\right]  =\exp\left[
-\frac{1}{2}\hat{x}^{T}H_{6}\hat{x}\right]  ,
\end{equation}
where $H_{6}$ is a positive-definite real matrix with corresponding covariance matrix $V_6$, given by%
\begin{align}
H_{6}  &  =2i\Omega\operatorname{arcoth}(V_{6}i\Omega),\\
V_{6}  &  =V_{4}-\left(  \sqrt{I+\left(  V_{4}\Omega\right)  ^{-2}}\right)
V_{4}\left(  V_{5}+V_{4}\right)  ^{-1}V_{4}\left(  \sqrt{I+\left(  \Omega
V_{4}\right)  ^{-2}}\right)  .
\end{align}

\end{proposition}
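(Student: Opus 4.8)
The plan is to collapse the triple product into a single Gaussian exponential form and then exploit its symmetric ``square--root sandwich'' structure. First, applying Proposition~\ref{prop:product-rule-Gaussian} twice --- to the two leftmost factors, then to the outcome and the rightmost factor --- shows that the product equals $\exp[-\tfrac12\hat x^{T}H_{6}\hat x]$ for a symmetric $H_{6}$ fixed, via \eqref{eq:exp-i-Om-compose}, by $\exp[i\Omega H_{6}]=K\exp[i\Omega H_{5}]K$, where $K:=\exp[i\Omega H_{4}/2]$. To see that $H_{6}$ is real and positive definite --- so that $V_{6}$ is a legitimate covariance matrix and $H_{6}=2i\Omega\operatorname{arcoth}(V_{6}i\Omega)$ is meaningful --- note that the triple product equals $A\,\exp[-\tfrac12\hat x^{T}H_{5}\hat x]\,A$ with $A:=(\exp[-\tfrac12\hat x^{T}H_{4}\hat x])^{1/2}$ a positive--definite operator, hence is itself a positive--definite trace--class operator of Gaussian form, i.e.\ (after normalization) a faithful Gaussian state.

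Since $K^{2}=\exp[i\Omega H_{4}]$ commutes with $K$, we have $I\pm K\exp[i\Omega H_{5}]K=K(\exp[-i\Omega H_{4}]\pm\exp[i\Omega H_{5}])K$, so with $W_{6}$ as in \eqref{eq:H-to-W},
\[
W_{6}=\frac{I+\exp[i\Omega H_{6}]}{I-\exp[i\Omega H_{6}]}=K\,\frac{\exp[-i\Omega H_{4}]+\exp[i\Omega H_{5}]}{\exp[-i\Omega H_{4}]-\exp[i\Omega H_{5}]}\,K^{-1}.
\]
The middle factor is precisely the matrix appearing at \eqref{eq:use-later-sqrt-sand} in the proof of Proposition~\ref{prop:product-rule-Gaussian} with $H_{1}=H_{4}$, $H_{2}=H_{5}$, and by the rest of that computation (through \eqref{eq:V3-product}) it equals $-V_{3}i\Omega$ with $V_{3}=-i\Omega+(V_{5}+i\Omega)(V_{5}+V_{4})^{-1}(V_{4}+i\Omega)$. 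Using $V_{6}=-W_{6}i\Omega$ together with the conjugation identity $(i\Omega)K(i\Omega)=K^{-T}$ --- which follows from $(i\Omega)(i\Omega H_{4})(i\Omega)=H_{4}i\Omega=-(i\Omega H_{4})^{T}$ --- one finds $V_{6}=K V_{3}K^{T}$, and the same identity (with $K$ a function of $V_{4}i\Omega$) gives $K V_{4}K^{T}=V_{4}$, $K(i\Omega)K^{T}=i\Omega$, and $K V_{4}=V_{4}K^{-T}$. Inserting $K^{T}K^{-T}$ and $K^{-1}K$ in the product and abbreviating $\widehat V_{5}:=K V_{5}K^{T}$ (symmetric) then yields $V_{6}=-i\Omega+(\widehat V_{5}+i\Omega)(\widehat V_{5}+V_{4})^{-1}(V_{4}+i\Omega)$; rewriting $(\widehat V_{5}+i\Omega)(\widehat V_{5}+V_{4})^{-1}=I-(V_{4}-i\Omega)(\widehat V_{5}+V_{4})^{-1}$ and $\widehat V_{5}+V_{4}=K(V_{5}+V_{4})K^{T}$ gives the compact form
\[
V_{6}=V_{4}-(V_{4}-i\Omega)K^{-T}(V_{5}+V_{4})^{-1}K^{-1}(V_{4}+i\Omega).
\]

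It remains to identify the outer factors, and the claim is $K^{-1}(V_{4}+i\Omega)=\sqrt{I+(V_{4}\Omega)^{-2}}\,V_{4}$, a matrix which is symmetric and, upon transposing (using $\sqrt{I+(V_{4}\Omega)^{-2}}\,V_{4}=V_{4}\sqrt{I+(\Omega V_{4})^{-2}}$), also equals $(V_{4}-i\Omega)K^{-T}$; granting it, the display above becomes exactly $V_{6}=V_{4}-\sqrt{I+(V_{4}\Omega)^{-2}}\,V_{4}(V_{5}+V_{4})^{-1}V_{4}\sqrt{I+(\Omega V_{4})^{-2}}$. To prove the claim, observe from \eqref{eq:H-to-W} and \eqref{eq:def-W} that $K^{2}=\exp[i\Omega H_{4}]=(V_{4}i\Omega+I)(V_{4}i\Omega-I)^{-1}$, so that $K$ and $\sqrt{I+(V_{4}\Omega)^{-2}}=\sqrt{I-(V_{4}i\Omega)^{-2}}$ are commuting functions of $Y:=V_{4}i\Omega$; since $Y$ is diagonalizable with eigenvalues $\pm\nu_{j}$, $\nu_{j}>1$ (faithfulness), the claim reduces to the scalar identity $\sqrt{(y+1)/(y-1)}\,\sqrt{1-y^{-2}}=(y+1)/y$ for $|y|>1$ on principal branches, which is immediate; this gives $K\sqrt{I+(V_{4}\Omega)^{-2}}=(Y+I)Y^{-1}$, whence $\sqrt{I+(V_{4}\Omega)^{-2}}\,V_{4}=K^{-1}(Y+I)Y^{-1}V_{4}=K^{-1}(Y+I)i\Omega=K^{-1}(V_{4}+i\Omega)$ (using $V_{4}=Y i\Omega$).

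I expect this last step --- identifying the outer factors, where one must handle the half-power $\exp[i\Omega H_{4}/2]$ and the branch of $\sqrt{I+(V_{4}\Omega)^{-2}}$ correctly, which is legitimate precisely because all symplectic eigenvalues exceed $1$ for faithful states --- to be the main obstacle; everything before it is a mechanical assembly of Proposition~\ref{prop:product-rule-Gaussian} and the labelled steps \eqref{eq:use-later-sqrt-sand}--\eqref{eq:W-composition-rule}. An equivalent but more pedestrian route bypasses the $W$-matrices: use Corollary~\ref{prop:rho-to-1/2} to write the outer factors as a Gaussian form with covariance matrix $V_{4}^{(1/2)}=(\sqrt{I+(V_{4}\Omega)^{-2}}+I)V_{4}$, apply Proposition~\ref{prop:product-rule-Gaussian} twice directly, and reduce the resulting nested expression with Lemma~\ref{lem:woodbury} using $\sqrt{I+(V_{4}\Omega)^{-2}}\,V_{4}=V_{4}\sqrt{I+(\Omega V_{4})^{-2}}=V_{4}^{(1/2)}-V_{4}$ and $(\sqrt{I+(V_{4}\Omega)^{-2}})^{2}=I+(V_{4}\Omega)^{-2}$, trading branch bookkeeping for a longer Woodbury computation.
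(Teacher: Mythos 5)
Your proof is correct and follows essentially the same route as the paper: both apply the composition rule underlying Proposition~\ref{prop:product-rule-Gaussian} twice to reduce to $\exp[i\Omega H_{6}]=K\exp[i\Omega H_{5}]K$ with $K=\exp[i\Omega H_{4}/2]$, reuse the two-factor computation \eqref{eq:use-later-sqrt-sand}--\eqref{eq:use-later-sqrt-sand-last} for the middle factor, and collapse the outer half-power factors by scalar functional calculus in commuting functions of $V_{4}i\Omega$ (legitimate since all symplectic eigenvalues exceed one). The only difference is bookkeeping: you conjugate the covariance-matrix formula symmetrically via $K(\cdot)K^{T}$ together with the identity $K\sqrt{I+(V_{4}\Omega)^{-2}}=(Y+I)Y^{-1}$, whereas the paper stays in the $W$-picture, expressing $K$ through $W_{7}=\bigl(\sqrt{I-W_{4}^{-2}}+I\bigr)W_{4}$ from Corollary~\ref{prop:rho-to-1/2} and collapsing the analogous scalar functions there.
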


\begin{proof}
Let $H_{7}=H_{4}/2$ and let $V_{7}$ be the covariance matrix defined by%
\begin{equation}
V_{7}=\coth(i\Omega H_{7}/2)i\Omega.
\end{equation}
From Corollary~\ref{prop:rho-to-1/2}, it follows that $V_{7}$ can be given in
terms of $V_{4}$ as%
\begin{equation}
V_{7}=\left(  \sqrt{I+\left(  V_{4}\Omega\right)  ^{-2}}+I\right)  V_{4},
\end{equation}
which is equivalent to%
\begin{equation}
W_{7}=\left(  \sqrt{I-W_{4}^{-2}}+I\right)  W_{4}.
\end{equation}

Consider from two applications of the composition rule in
\eqref{eq:exp-i-Om-compose}\ that%
\begin{equation}
\exp(i\Omega H_{6})=\exp(i\Omega H_{4}/2)\exp(i\Omega H_{5})\exp(i\Omega
H_{4}/2).
\end{equation}
This implies from \eqref{eq:W-to-H} that%
\begin{align}
W_{6}  &  =\left[  I+e^{i\Omega H_{4}/2}e^{i\Omega H_{5}}e^{i\Omega H_{4}%
/2}\right]  \left[  I-e^{i\Omega H_{4}/2}e^{i\Omega H_{5}}e^{i\Omega H_{4}%
/2}\right]  ^{-1}\label{eq:sandwich-W-start}\\
&  =\left[  e^{-i\Omega H_{4}/2}+e^{i\Omega H_{4}/2}e^{i\Omega H_{5}}\right]
\left[  e^{-i\Omega H_{4}/2}-e^{i\Omega H_{4}/2}e^{i\Omega H_{5}}\right]
^{-1}\\
&  =\exp(i\Omega H_{4}/2)\left[  e^{-i\Omega H_{4}}+e^{i\Omega H_{5}}\right]
\left(  \exp(i\Omega H_{4}/2)\left[  e^{-i\Omega H_{4}}-e^{i\Omega H_{5}%
}\right]  \right)  ^{-1}\\
&  =\exp(i\Omega H_{4}/2)\left[  e^{-i\Omega H_{4}}+e^{i\Omega H_{5}}\right]
\left[  e^{-i\Omega H_{4}}-e^{i\Omega H_{5}}\right]  ^{-1}\exp(-i\Omega
H_{4}/2).
\end{align}
From the development in
\eqref{eq:use-later-sqrt-sand}--\eqref{eq:use-later-sqrt-sand-last}, we know
that%
\begin{multline}
\left[  \exp(-i\Omega H_{4})+\exp(i\Omega H_{5})\right]  \left(  \exp(-i\Omega
H_{4})-\exp(i\Omega H_{5})\right)  ^{-1}\\
=W_{4}-\left[  W_{4}+I\right]  \left[  W_{4}+W_{5}\right]  ^{-1}\left[
W_{4}-I\right]  .
\end{multline}
This implies that%
\begin{align}
W_{6}  &  =\exp(i\Omega H_{4}/2)\left[  W_{4}-\left[  W_{4}+I\right]  \left[
W_{4}+W_{5}\right]  ^{-1}\left[  W_{4}-I\right]  \right]  \exp(-i\Omega
H_{4}/2)\\
&  =\exp(i\Omega H_{4}/2)W_{4}\exp(-i\Omega H_{4}/2)-\exp(i\Omega
H_{4}/2)\left[  W_{4}+I\right]  \left[  W_{4}+W_{5}\right]  ^{-1}\left[
W_{4}-I\right]  \exp(-i\Omega H_{4}/2)\\
&  =W_{4}-\exp(i\Omega H_{4}/2)\left[  W_{4}+I\right]  \left[  W_{4}%
+W_{5}\right]  ^{-1}\left[  W_{4}-I\right]  \exp(-i\Omega H_{4}/2)\\
&  =W_{4}-\frac{\left(  \sqrt{I-W_{4}^{-2}}+I\right)  W_{4}-I}{\left(
\sqrt{I-W_{4}^{-2}}+I\right)  W_{4}+I}\left[  W_{4}+I\right]  \left[
W_{4}+W_{5}\right]  ^{-1}\left[  W_{4}-I\right]  \frac{\left(  \sqrt
{I-W_{4}^{-2}}+I\right)  W_{4}+I}{\left(  \sqrt{I-W_{4}^{-2}}+I\right)
W_{4}-I},
\end{align}
where the last equality follows from \eqref{eq:W-to-H} and
Corollary~\ref{prop:rho-to-1/2}. Considering that the following scalar
functions simplify as%
\begin{align}
\frac{\left(  \sqrt{1-x^{-2}}+1\right)  x-1}{\left(  \sqrt{1-x^{-2}}+1\right)
x+1}\left[  x+1\right]   &  =\left(  \sqrt{1-x^{-2}}\right)  x,\\
\left[  x-1\right]  \frac{\left(  \sqrt{1-x^{-2}}+1\right)  x+1}{\left(
\sqrt{1-x^{-2}}+1\right)  x-1}  &  =x\sqrt{1-x^{-2}},
\end{align}
we find that%
\begin{equation}
W_{6}=W_{4}-\left(  \sqrt{I-W_{4}^{-2}}\right)  W_{4}\left[  W_{4}%
+W_{5}\right]  ^{-1}W_{4}\left(  \sqrt{I-W_{4}^{-2}}\right)  .
\label{eq:W-sandwich-sqrt}%
\end{equation}

Now substituting, we find that%
\begin{align}
&  -V_{6}i\Omega\nonumber\\
&  =-V_{4}i\Omega-\left(  \sqrt{I-\left(  -V_{4}i\Omega\right)  ^{-2}}\right)
\left(  -V_{4}i\Omega\right)  \left(  -V_{5}i\Omega-V_{4}i\Omega\right)
^{-1}\left(  -V_{4}i\Omega\right)  \left(  \sqrt{I-\left(  -V_{4}%
i\Omega\right)  ^{-2}}\right) \\
&  =-V_{4}i\Omega+\left(  \sqrt{I+\left(  V_{4}\Omega\right)  ^{-2}}\right)
\left(  V_{4}i\Omega\right)  \left(  \left[  V_{5}+V_{4}\right]
i\Omega\right)  ^{-1}\left(  V_{4}i\Omega\right)  \left(  \sqrt{I+\left(
V_{4}\Omega\right)  ^{-2}}\right) \\
&  =-V_{4}i\Omega+\left(  \sqrt{I+\left(  V_{4}\Omega\right)  ^{-2}}\right)
V_{4}\left(  V_{5}+V_{4}\right)  ^{-1}\left(  V_{4}i\Omega\right)  \left(
\sqrt{I+\left(  V_{4}\Omega\right)  ^{-2}}\right)  .
\end{align}
This then implies that%
\begin{align}
V_{6}  &  =V_{4}-\left(  \sqrt{I+\left(  V_{4}\Omega\right)  ^{-2}}\right)
V_{4}\left(  V_{5}+V_{4}\right)  ^{-1}\left(  V_{4}i\Omega\right)  \left(
\sqrt{I+\left(  V_{4}\Omega\right)  ^{-2}}\right)  i\Omega\\
&  =V_{4}-\left(  \sqrt{I+\left(  V_{4}\Omega\right)  ^{-2}}\right)
V_{4}\left(  V_{5}+V_{4}\right)  ^{-1}V_{4}\left(  \sqrt{I+\left(  \Omega
V_{4}\right)  ^{-2}}\right)  ,
\end{align}
which concludes the proof.
\end{proof}

Even though it directly follows from the above that $V_6$ is a legitimate covariance matrix, the following proposition gives an alternative confirmation of this fact:

\begin{proposition}
\label{prop:norm-sandwich-sqrt}The matrix $V_{6}$ from
Proposition~\ref{prop:sandwich-with-square-root}\ is a legitimate covariance
matrix, and so we can conclude that%
\begin{equation}
\operatorname{Tr}\left\{  \exp\left[  -\frac{1}{2}\hat{x}^{T}H_{6}\hat
{x}\right]  \right\}  =\sqrt{\det(V_{6}+i\Omega/2)},
\end{equation}
where $H_{6}=2i\Omega\operatorname{arcoth}(V_{\rho}i\Omega)$.
\end{proposition}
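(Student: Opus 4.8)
The plan is to obtain legitimacy of $V_{6}$ for free from positivity of the corresponding operator, rather than by manipulating the explicit formula. First I would set $A\equiv\exp[-\tfrac12\hat{x}^{T}(H_{4}/2)\hat{x}]$ and $B\equiv\exp[-\tfrac12\hat{x}^{T}H_{5}\hat{x}]$; since $H_{4}/2$ and $H_{5}$ are positive-definite real matrices, both $A$ and $B$ are, up to normalization, faithful Gaussian states, hence positive-definite trace-class operators with trivial kernel (their partition functions being finite by \eqref{eq:norm-of-Gaussian-form}). By Proposition~\ref{prop:sandwich-with-square-root}, $\exp[-\tfrac12\hat{x}^{T}H_{6}\hat{x}]=ABA$.

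Next I would rewrite $ABA=(AB^{1/2})(AB^{1/2})^{\dagger}$, using $A^{\dagger}=A$ and $(B^{1/2})^{\dagger}=B^{1/2}=\exp[-\tfrac14\hat{x}^{T}H_{5}\hat{x}]$. Writing $X\equiv AB^{1/2}$, this operator is trace-class (a trace-class operator times a bounded one) and injective (a product of injective operators), so $\exp[-\tfrac12\hat{x}^{T}H_{6}\hat{x}]=XX^{\dagger}$ is a nonzero positive semidefinite trace-class operator with $\operatorname{Tr}\{XX^{\dagger}\}=\operatorname{Tr}\{X^{\dagger}X\}$ finite and strictly positive. Dividing by this trace yields a density operator of Gaussian form, i.e., a zero-mean Gaussian state, whose covariance matrix is $V_{6}$ by Proposition~\ref{prop:sandwich-with-square-root}; therefore $V_{6}+i\Omega\geq0$, as claimed. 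The trace identity then drops out of \eqref{eq:norm-of-Gaussian-form} (equivalently \eqref{eq:det-with-W}) applied to $H_{6}$: since $V_{6}$ is a legitimate covariance matrix with $H_{6}=2i\Omega\operatorname{arcoth}(V_{6}i\Omega)$, one gets $\operatorname{Tr}\{\exp[-\tfrac12\hat{x}^{T}H_{6}\hat{x}]\}=\sqrt{\det([V_{6}+i\Omega]/2)}$.

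The part that takes the most care — though it is routine, not deep — is the operator-theoretic bookkeeping behind the above: verifying that $\exp[-\tfrac14\hat{x}^{T}H_{5}\hat{x}]$ really is the positive square root of $\exp[-\tfrac12\hat{x}^{T}H_{5}\hat{x}]$; that $A$ trace-class together with $B^{1/2}$ bounded makes $X$ Hilbert--Schmidt, so that $XX^{\dagger}$ is trace-class; and that a positive trace-class operator of Gaussian form $\exp[-\tfrac12\hat{x}^{T}H\hat{x}]$ necessarily has $H$ real and positive-definite (self-adjointness forcing $H$ real, the trace-class property forcing positive-definiteness), which is what makes the covariance matrix read off from Proposition~\ref{prop:sandwich-with-square-root} legitimate, via \eqref{eq:H-PD-V-legit} and the Williamson normal form recalled in Section~\ref{sec:prelim}.

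It is worth noting that one could instead attempt a direct matrix-algebraic proof that $V_{6}+i\Omega\geq0$ from the closed form $V_{6}=V_{4}-\sqrt{I+(V_{4}\Omega)^{-2}}\,V_{4}(V_{5}+V_{4})^{-1}V_{4}\,\sqrt{I+(\Omega V_{4})^{-2}}$, using only that $V_{4}$ and $V_{5}$ are legitimate; the obstruction there is that $H_{4}$ and $H_{5}$ (equivalently $W_{4}$ and $W_{5}$) need not commute, so the inequality does not reduce to a scalar one, which is precisely why the operator-positivity route is cleaner.
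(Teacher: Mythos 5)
Your route is genuinely different from the paper's. The paper proves legitimacy directly from the closed form of $V_{6}$: since $H_{5}>0$ gives $V_{5}-i\Omega>0$, one has $V_{5}+V_{4}>V_{4}+i\Omega$, hence $-(V_{5}+V_{4})^{-1}>-(V_{4}+i\Omega)^{-1}$ by operator monotonicity of $x\mapsto-x^{-1}$; substituting this bound into the sandwich and collapsing it with Lemma~\ref{lem:simplify-sqrt-sandwich} (which evaluates $\sqrt{I+(V_{4}\Omega)^{-2}}\,V_{4}(V_{4}+i\Omega)^{-1}V_{4}\sqrt{I+(\Omega V_{4})^{-2}}=V_{4}-i\Omega$) yields $V_{6}-i\Omega>0$ in two lines. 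In particular, your closing remark that non-commutativity of $H_{4}$ and $H_{5}$ obstructs a direct matrix-algebraic proof is not correct: the paper's argument is exactly such a proof, and non-commutativity never enters because the only replacement made is $V_{5}\to i\Omega$ inside the inverse, after which the sandwich reduces to a single-variable functional identity.

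Your operator-positivity argument ($ABA=XX^{\dagger}\geq0$, trace class, normalize, read off the covariance matrix, invoke the uncertainty principle) is the natural ``physical'' alternative, but as written its load-bearing step is precisely what you defer to ``routine bookkeeping.'' To identify $V_{6}$ — which Proposition~\ref{prop:sandwich-with-square-root} produces only through the formal $W$-matrix algebra — as the genuine covariance matrix of the state $ABA/\operatorname{Tr}\{ABA\}$, and to invoke \eqref{eq:norm-of-Gaussian-form}, you need that the exponent $H_{6}$ is a real positive-definite matrix. Your claims that self-adjointness forces $H_{6}$ real and trace-class forces $H_{6}>0$ require, respectively, an injectivity statement for $H\mapsto\exp[-\frac{1}{2}\hat{x}^{T}H\hat{x}]$ (delicate because of branch ambiguities of the matrix exponential/logarithm in the Balian--Brezin calculus) and the normal-form classification of quadratic Hamiltonians (to rule out indefinite or degenerate real $H$ giving a positive trace-class operator); neither is supplied, and together they amount to more work than the paper's computation. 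Moreover, these claims essentially re-assert the statement of Proposition~\ref{prop:sandwich-with-square-root} that $H_{6}$ is real and positive definite, whereas the point of Proposition~\ref{prop:norm-sandwich-sqrt} is to confirm legitimacy of $V_{6}$ independently, from the covariance-matrix formula and \eqref{eq:H-PD-V-legit} alone. So the strategy is viable, but to make it a complete proof you must actually carry out the deferred operator-theoretic steps, or else fall back on the paper's monotonicity argument.
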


\begin{proof}
Since $V_{5}$ is a legitimate covariance matrix corresponding to
positive-definite real $H_{5}$, we have that $V_{5}-i\Omega>0$ which implies
that $V_{5}+V_{4}>V_{4}+i\Omega$ and in turn that $-\left(  V_{5}%
+V_{4}\right)  ^{-1}>-\left(  V_{4}+i\Omega\right)  ^{-1}$, by operator
monotonicity of the function $x\rightarrow-x^{-1}$. Then we find that%
\begin{align}
V_{6}-i\Omega &  =V_{4}-\left(  \sqrt{I+\left(  V_{4}\Omega\right)  ^{-2}%
}\right)  V_{4}\left(  V_{5}+V_{4}\right)  ^{-1}V_{4}\left(  \sqrt{I+\left(
\Omega V_{4}\right)  ^{-2}}\right)  -i\Omega\\
&  >V_{4}-\left(  \sqrt{I+\left(  V_{4}\Omega\right)  ^{-2}}\right)
V_{4}\left(  V_{4}+i\Omega\right)  ^{-1}V_{4}\left(  \sqrt{I+\left(  \Omega
V_{4}\right)  ^{-2}}\right)  -i\Omega\\
&  =V_{4}-\left(  V_{4}-i\Omega\right)  -i\Omega\\
&  =0.
\end{align}
In the above, the second equality follows from
Lemma~\ref{lem:simplify-sqrt-sandwich} below.
\end{proof}

\begin{lemma}
[{[\onlinecite{LDW17}]}]\label{lem:simplify-sqrt-sandwich}The following identity holds
for a covariance matrix $V$ such that $V+i\Omega>0$:%
\begin{equation}
\sqrt{I+(V\Omega)^{-2}}V(V+i\Omega)^{-1}V\sqrt{I+(\Omega V)^{-2}}=V-i\Omega.
\end{equation}

\end{lemma}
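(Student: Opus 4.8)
The plan is to put $V$ into Williamson normal form and reduce the claimed identity to a one-mode ($2\times2$) computation. First I would write $V=S(D\oplus D)S^{T}$ with $S$ symplectic and $D=\operatorname{diag}(\nu_{1},\dots,\nu_{n})$, noting that the hypothesis $V+i\Omega>0$ forces $\nu_{j}>1$ for every $j$ (so $I_{n}-D^{-2}>0$). The two facts I will lean on throughout are that symplecticity gives $S\Omega S^{T}=\Omega$, equivalently $S^{T}\Omega S=\Omega$ (hence $S^{T}\Omega=\Omega S^{-1}$ and $\Omega S=S^{-T}\Omega$), and that $D\oplus D$ commutes with $\Omega$.

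Next I would transport each factor of the left-hand side through this diagonalization. Using the symplectic relations together with $\Omega^{2}=-I$ one gets $(V\Omega)^{2}=-S(D^{2}\oplus D^{2})S^{-1}$, hence $I+(V\Omega)^{-2}=S\big((I_{n}-D^{-2})\oplus(I_{n}-D^{-2})\big)S^{-1}$; since the inner matrix is positive and functional calculus commutes with similarity transforms, $\sqrt{I+(V\Omega)^{-2}}=SRS^{-1}$ with $R:=\sqrt{I_{n}-D^{-2}}\oplus\sqrt{I_{n}-D^{-2}}$. Because $V$ is symmetric, $\Omega V=-(V\Omega)^{T}$, so $I+(\Omega V)^{-2}=(I+(V\Omega)^{-2})^{T}$ and therefore $\sqrt{I+(\Omega V)^{-2}}=(SRS^{-1})^{T}=S^{-T}RS^{T}$. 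Finally $V=S(D\oplus D)S^{T}$ and, since $i\Omega=iS\Omega S^{T}$, also $V\pm i\Omega=S\big((D\oplus D)\pm i\Omega\big)S^{T}$, so $(V+i\Omega)^{-1}=S^{-T}\big((D\oplus D)+i\Omega\big)^{-1}S^{-1}$.

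Substituting these into $\sqrt{I+(V\Omega)^{-2}}\,V\,(V+i\Omega)^{-1}\,V\,\sqrt{I+(\Omega V)^{-2}}$, every intermediate pair $S^{-1}S$ and $S^{T}S^{-T}$ cancels and the left-hand side becomes $S\big(R(D\oplus D)[(D\oplus D)+i\Omega]^{-1}(D\oplus D)R\big)S^{T}$. Comparing with $V-i\Omega=S\big((D\oplus D)-i\Omega\big)S^{T}$, it suffices to prove the $S$-free identity $R(D\oplus D)[(D\oplus D)+i\Omega]^{-1}(D\oplus D)R=(D\oplus D)-i\Omega$. Permuting the $2n$ coordinates so that the two quadratures of each mode sit together makes every matrix here block-diagonal with $n$ blocks of size $2$; on the block belonging to mode $j$ --- where $D\oplus D\mapsto\nu_{j}I_{2}$, $\Omega\mapsto J:=\left[\begin{smallmatrix}0&1\\-1&0\end{smallmatrix}\right]$, $R\mapsto\sqrt{1-\nu_{j}^{-2}}\,I_{2}$ --- the identity reads $\nu_{j}^{2}(1-\nu_{j}^{-2})(\nu_{j}I_{2}+iJ)^{-1}=\nu_{j}I_{2}-iJ$, which is immediate from $(\nu_{j}I_{2}+iJ)^{-1}=(\nu_{j}^{2}-1)^{-1}(\nu_{j}I_{2}-iJ)$ and $\nu_{j}^{2}(1-\nu_{j}^{-2})=\nu_{j}^{2}-1$.

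The step requiring the most care --- rather than a genuine obstacle --- is the bookkeeping for the matrix square roots: one fixes $\sqrt{\,\cdot\,}$ to be the principal branch via holomorphic functional calculus, observes that the relevant spectra are $\{1-\nu_{j}^{-2}\}\subset(0,1)$ (safely off the branch cut), and then $\sqrt{SXS^{-1}}=S\sqrt{X}\,S^{-1}$ and $\sqrt{X^{T}}=(\sqrt{X})^{T}$ hold, legitimizing the telescoping above. As an alternative that sidesteps Williamson's theorem entirely, one can argue with $W=-Vi\Omega$ directly: then $I+(V\Omega)^{-2}=I-W^{-2}$, $\sqrt{I+(\Omega V)^{-2}}=(\sqrt{I-W^{-2}})^{T}=i\Omega\sqrt{I-W^{-2}}\,i\Omega$ (using $i\Omega W i\Omega=-W^{T}$ and the evenness of $w\mapsto\sqrt{1-w^{-2}}$), $V+i\Omega=(I-W)i\Omega$ and $V-i\Omega=-(I+W)i\Omega$; since all functions of $W$ commute, the left-hand side collapses to $(I-W^{-2})W^{2}(I-W)^{-1}i\Omega=(W^{2}-I)(I-W)^{-1}i\Omega=-(I+W)i\Omega=V-i\Omega$.
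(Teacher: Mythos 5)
Your proof is correct, and your main argument takes a genuinely different route from the paper's. The paper never invokes Williamson's theorem here: it rewrites every factor as a function of the single matrix $Vi\Omega$ (using $V(V+i\Omega)^{-1}V = Vi\Omega(Vi\Omega+I)^{-1}Vi\Omega\, i\Omega$ together with $i\Omega$-conjugations like $i\Omega\sqrt{I-(i\Omega V)^{-2}}\,i\Omega$), so that all factors commute, and then collapses the scalar function $\sqrt{1-x^{-2}}\,x(x+1)^{-1}x\sqrt{1-x^{-2}}=x-1$ by functional calculus to land on $(Vi\Omega-I)i\Omega=V-i\Omega$. That is exactly the computation you sketch in your final paragraph with $W=-Vi\Omega$, so your ``alternative'' is essentially the paper's proof, while your primary argument -- symplectic diagonalization $V=S(D\oplus D)S^{T}$, telescoping of the $S$-factors, and a per-mode $2\times2$ verification via $(\nu I_2+iJ)^{-1}=(\nu^2-1)^{-1}(\nu I_2-iJ)$ -- is a different, more structural derivation. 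The trade-off: your Williamson route makes the spectral content transparent ($\nu_j>1$ is exactly what makes $I-D^{-2}$ positive and $V+i\Omega$ invertible) at the cost of extra bookkeeping about square roots of non-symmetric matrices, transposes, and the commutation of $D\oplus D$ with $\Omega$ (which does hold in the paper's $qqpp$ ordering, as you use); the paper's route avoids any decomposition and is shorter precisely because everything is a function of one matrix, but it hides the mode-by-mode picture. Both are complete and correct, including your handling of $\sqrt{I+(\Omega V)^{-2}}=(\sqrt{I+(V\Omega)^{-2}})^{T}$ and the principal-branch issue.
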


\begin{proof}
Consider that%
\begin{align}
&  \sqrt{I+(V\Omega)^{-2}}V(V+i\Omega)^{-1}V\sqrt{I+(\Omega V)^{-2}%
}\nonumber\\
&  =\sqrt{I+(V\Omega)^{-2}}Vi\Omega(Vi\Omega+I)^{-1}V\sqrt{I+(\Omega V)^{-2}%
}\\
&  =\sqrt{I-(Vi\Omega)^{-2}}Vi\Omega(Vi\Omega+I)^{-1}Vi\Omega i\Omega
\sqrt{I-(i\Omega V)^{-2}}i\Omega i\Omega\\
&  =\sqrt{I-(Vi\Omega)^{-2}}Vi\Omega(Vi\Omega+I)^{-1}Vi\Omega\sqrt{I-(i\Omega
i\Omega Vi\Omega Vi\Omega)^{-1}}i\Omega\\
&  =\left[  \sqrt{I-(Vi\Omega)^{-2}}Vi\Omega(Vi\Omega+I)^{-1}Vi\Omega
\sqrt{I-(Vi\Omega)^{-2}}\right]  i\Omega.
\end{align}
Now that the expression in square brackets has been reduced to a matrix
version of the scalar function $x\rightarrow\sqrt{1-x^{-2}}x(x+1)^{-1}%
x\sqrt{1-x^{-2}}$, we can use the fact that the scalar function collapses as%
\begin{equation}
\sqrt{1-x^{-2}}x(x+1)^{-1}x\sqrt{1-x^{-2}}=x-1,
\end{equation}
and we find that%
\begin{align}
&  \sqrt{I-(Vi\Omega)^{-2}}Vi\Omega(Vi\Omega+I)^{-1}Vi\Omega\sqrt
{I-(Vi\Omega)^{-2}}i\Omega\nonumber\\
&  =\left[  Vi\Omega-I\right]  i\Omega\\
&  =V-i\Omega.
\end{align}
This concludes the proof.
\end{proof}

The following proposition is again a consequence of Ref.~\onlinecite{Balian1969}, and the
particular form of the determinant in \eqref{eq:det-from-product-H1-H2}\ was
reported in Ref.~\onlinecite[Eq.~(3.5)]{MM12}.

\begin{proposition}
[{[\onlinecite{Balian1969},\ \onlinecite{MM12}]}]\label{prop:simple-product-normalization}Given
positive-definite real matrices $H_{1}$ and $H_{2}$, it follows that%
\begin{equation}
\exp\left[  -\frac{1}{2}\hat{x}^{T}H_{1}\hat{x}\right]  \exp\left[  -\frac
{1}{2}\hat{x}^{T}H_{2}\hat{x}\right]  =\exp\left[  -\frac{1}{2}\hat{x}%
^{T}H_{3}\hat{x}\right]  ,
\end{equation}
where $H_{3}$ is such that%
\begin{align}
\operatorname{Tr}\left\{  \exp\left[  -\frac{1}{2}\hat{x}^{T}H_{3}\hat
{x}\right]  \right\}   &  =\sqrt{\frac{\det(\left[  V_{1}+i\Omega\right]
/2)\det(\left[  V_{2}+i\Omega\right]  /2)}{\det(\left[  V_{1}+V_{2}\right]
/2)}},\label{eq:det-from-product-H1-H2}\\
V_{1}  &  =\coth(i\Omega H_{1}/2)i\Omega,\\
V_{2}  &  =\coth(i\Omega H_{2}/2)i\Omega.
\end{align}

\end{proposition}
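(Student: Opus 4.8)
The plan is to recognize the left-hand side through the composition rule already established and then evaluate its trace. The operator identity $\exp[-\frac{1}{2}\hat{x}^{T}H_{1}\hat{x}]\exp[-\frac{1}{2}\hat{x}^{T}H_{2}\hat{x}]=\exp[-\frac{1}{2}\hat{x}^{T}H_{3}\hat{x}]$ is exactly the content of Proposition~\ref{prop:product-rule-Gaussian} (a positive-definite real matrix is in particular symmetric), so that $V_{3}$ is given by \eqref{eq:V3-product}, which rearranges in one line to
\[
V_{3}+i\Omega=(V_{2}+i\Omega)(V_{1}+V_{2})^{-1}(V_{1}+i\Omega).
\]
Granting for the moment that $\operatorname{Tr}\{\exp[-\frac{1}{2}\hat{x}^{T}H_{3}\hat{x}]\}=\sqrt{\det([V_{3}+i\Omega]/2)}$, the result then follows purely from multiplicativity of the determinant and the fact that $\det(A/2)=2^{-2n}\det(A)$ for a $2n\times 2n$ matrix $A$: the powers of $2$ match up so that $\det([V_{3}+i\Omega]/2)$ equals $\det([V_{1}+i\Omega]/2)\det([V_{2}+i\Omega]/2)/\det([V_{1}+V_{2}]/2)$, which is \eqref{eq:det-from-product-H1-H2}.

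The one delicate point is that $H_{3}$ is generally a complex symmetric matrix, not a positive-definite real one: if $H_{1}$ and $H_{2}$ do not commute then $\exp[-\frac{1}{2}\hat{x}^{T}H_{1}\hat{x}]\exp[-\frac{1}{2}\hat{x}^{T}H_{2}\hat{x}]$ is a product of non-commuting positive operators, hence not self-adjoint, so the normalization formula \eqref{eq:norm-of-Gaussian-form} does not apply verbatim. One clean way to close this gap is analytic continuation: the product is trace class (a product of two trace-class positive operators), and both $\operatorname{Tr}\{\exp[-\frac{1}{2}\hat{x}^{T}H_{1}\hat{x}]\exp[-\frac{1}{2}\hat{x}^{T}H_{2}\hat{x}]\}$ and $\sqrt{\det([V_{3}+i\Omega]/2)}$ are analytic in the entries of $H_{2}$ on a complex neighbourhood of the positive-definite real cone, where they agree by \eqref{eq:norm-of-Gaussian-form} in the commuting case (equivalently, by the Gaussian-operator calculus of \cite{Balian1969}); hence they agree throughout, in particular for our $H_{2}$. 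A fully self-contained alternative avoids complex Hamiltonians via cyclicity: since $\exp[-\frac{1}{2}\hat{x}^{T}H_{1}\hat{x}]=(\exp[-\frac{1}{2}\hat{x}^{T}(H_{1}/2)\hat{x}])^{2}$ by Corollary~\ref{prop:rho-to-1/2}, cyclicity of the trace turns the left-hand side into $\operatorname{Tr}\{\exp[-\frac{1}{2}\hat{x}^{T}(H_{1}/2)\hat{x}]\exp[-\frac{1}{2}\hat{x}^{T}H_{2}\hat{x}]\exp[-\frac{1}{2}\hat{x}^{T}(H_{1}/2)\hat{x}]\}$; by Proposition~\ref{prop:sandwich-with-square-root} (with $H_{4}=H_{1}$, $H_{5}=H_{2}$) the operator inside is $\exp[-\frac{1}{2}\hat{x}^{T}H_{6}\hat{x}]$ with $H_{6}$ positive-definite real, and Proposition~\ref{prop:norm-sandwich-sqrt} gives that its trace is $\sqrt{\det([V_{6}+i\Omega]/2)}$; one then checks $\det(V_{6}+i\Omega)=\det(V_{1}+i\Omega)\det(V_{2}+i\Omega)/\det(V_{1}+V_{2})$ by writing $V_{6}+i\Omega=(V_{1}+i\Omega)-\sqrt{I+(V_{1}\Omega)^{-2}}\,V_{1}(V_{1}+V_{2})^{-1}V_{1}\sqrt{I+(\Omega V_{1})^{-2}}$, rewriting $V_{1}+i\Omega=\sqrt{I+(V_{1}\Omega)^{-2}}\,V_{1}(V_{1}-i\Omega)^{-1}V_{1}\sqrt{I+(\Omega V_{1})^{-2}}$ (the $-i\Omega$ variant of Lemma~\ref{lem:simplify-sqrt-sandwich}, valid since $V_{1}-i\Omega>0$), combining through $(V_{1}-i\Omega)^{-1}-(V_{1}+V_{2})^{-1}=(V_{1}-i\Omega)^{-1}(V_{2}+i\Omega)(V_{1}+V_{2})^{-1}$, and taking determinants, where the leftover factor $\det(I+(V_{1}\Omega)^{-2})=\det(V_{1}+i\Omega)\det(V_{1}-i\Omega)/\det(V_{1})^{2}$ (obtained from $(i\Omega)^{2}=I$ and $V_{1}\pm i\Omega=\pm(I\pm V_{1}i\Omega)i\Omega$) cancels exactly.

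The main obstacle is precisely this passage from the positive-definite real case to complex symmetric $H_{3}$; everything else is Proposition~\ref{prop:product-rule-Gaussian} together with determinant bookkeeping. In the write-up I would lead with the analytic-continuation route for brevity (the target form \eqref{eq:det-from-product-H1-H2} is already recorded in \cite[Eq.~(3.5)]{MM12}), and record the $V_{6}$ computation as the self-contained alternative.
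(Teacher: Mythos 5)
Your ``fully self-contained alternative'' is correct and, in its first half, is exactly the paper's own proof: the paper also begins by using cyclicity of the trace to rewrite $\operatorname{Tr}\{e^{-\frac{1}{2}\hat{x}^{T}H_{1}\hat{x}}e^{-\frac{1}{2}\hat{x}^{T}H_{2}\hat{x}}\}$ as the trace of the square-root sandwich, and then invokes Propositions~\ref{prop:sandwich-with-square-root} and \ref{prop:norm-sandwich-sqrt} so that only a determinant identity remains. Where you genuinely differ is the finish. The paper works in the $W$-picture: it shows that the sandwiched matrix $W^{\prime}$ is similar, via conjugation by $e^{i\Omega H_{1}/2}$, to $W^{\prime\prime}=I+(W_{2}-I)(W_{2}+W_{1})^{-1}(W_{1}-I)$, so that $\det(I-W^{\prime})$ factors immediately into the three determinants appearing in \eqref{eq:det-from-product-H1-H2}. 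You instead compute $\det(V_{6}+i\Omega)$ directly, using the $(-i\Omega)$ variant of Lemma~\ref{lem:simplify-sqrt-sandwich}, the resolvent identity $(V_{1}-i\Omega)^{-1}-(V_{1}+V_{2})^{-1}=(V_{1}-i\Omega)^{-1}(V_{2}+i\Omega)(V_{1}+V_{2})^{-1}$, and the cancellation $\det(I+(V_{1}\Omega)^{-2})=\det(V_{1}+i\Omega)\det(V_{1}-i\Omega)/\det(V_{1})^{2}$; I checked these steps and they are all valid (in particular $V_{1}-i\Omega>0$ holds because it is the transpose of $V_{1}+i\Omega>0$), and the powers of $2$ bookkeeping is right. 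The two finishes are of comparable length: the paper's avoids needing the lemma variant, yours avoids the similarity argument.

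The route you say you would lead with, however, has a genuine gap as stated. You rightly flag that $H_{3}$ from \eqref{eq:V3-product} is complex symmetric, so \eqref{eq:norm-of-Gaussian-form} does not apply verbatim; but the proposed repair --- both sides are analytic in $H_{2}$ and ``agree in the commuting case, hence agree throughout'' --- is not a legitimate use of the identity theorem. For fixed non-scalar $H_{1}$, the set of positive-definite $H_{2}$ commuting with $H_{1}$ is a proper, lower-dimensional submanifold of the cone, and agreement of analytic functions on such a thin set does not propagate to an open neighbourhood. To make that route rigorous you would have to either exhibit agreement on an open set of $H_{2}$ (which is exactly what is missing) or invoke the general complex-Hamiltonian trace formulas of \cite{Balian1969} directly --- i.e., precisely the machinery that the paper's statement of \eqref{eq:norm-of-Gaussian-form} (for positive-definite real $H$ only) does not assume, and which both the paper's proof and your alternative circumvent with the cyclicity/sandwich trick. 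So in the write-up, lead with the $V_{6}$ computation rather than the continuation argument.
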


\begin{proof}
Consider that%
\begin{align}
&  \operatorname{Tr}\left\{  \exp\left[  -\frac{1}{2}\hat{x}^{T}H_{1}\hat
{x}\right]  \exp\left[  -\frac{1}{2}\hat{x}^{T}H_{2}\hat{x}\right]  \right\}
\nonumber\\
&  =\operatorname{Tr}\left\{  \exp\left[  -\frac{1}{2}\hat{x}^{T}\left[
H_{1}/2\right]  \hat{x}\right]  \exp\left[  -\frac{1}{2}\hat{x}^{T}H_{2}%
\hat{x}\right]  \exp\left[  -\frac{1}{2}\hat{x}^{T}\left[  H_{1}/2\right]
\hat{x}\right]  \right\} \\
&  =\left[  \det\left(  \left[  V_{1}-\left(  \sqrt{I+\left(  V_{1}%
\Omega\right)  ^{-2}}\right)  V_{1}\left(  V_{2}+V_{1}\right)  ^{-1}%
V_{1}\left(  \sqrt{I+\left(  \Omega V_{1}\right)  ^{-2}}\right)
+i\Omega\right]  /2\right)  \right]  ^{1/2}\\
&  =\left[  \det\left(  \left[  I-W_{1}+\left(  \sqrt{I-W_{1}^{-2}}\right)
W_{1}\left(  W_{2}+W_{1}\right)  ^{-1}W_{1}\left(  \sqrt{I-W_{1}^{-2}}\right)
\right]  i\Omega/2\right)  \right]  ^{1/2}\\
&  =\left[  \det\left(  I-W_{1}+\left(  \sqrt{I-W_{1}^{-2}}\right)
W_{1}\left(  W_{2}+W_{1}\right)  ^{-1}W_{1}\left(  \sqrt{I-W_{1}^{-2}}\right)
\right)  \det\left(  i\Omega/2\right)  \right]  ^{1/2}.
\end{align}
The first equality follows from cyclicity of trace. The second equality
follows from Proposition~\ref{prop:norm-sandwich-sqrt}. The third equality
follows from \eqref{eq:W-sandwich-sqrt}\ and \eqref{eq:det-with-W}. We now
prove that the following matrices are similar%
\begin{align}
W^{\prime}  &  =W_{1}-\left(  \sqrt{I-W_{1}^{-2}}\right)  W_{1}\left(
W_{2}+W_{1}\right)  ^{-1}W_{1}\left(  \sqrt{I-W_{1}^{-2}}\right)  ,\\
W^{\prime\prime}  &  =I+\left(  W_{2}-I\right)  \left(  W_{2}+W_{1}\right)
^{-1}\left(  W_{1}-I\right)  ,
\end{align}
i.e., related as%
\begin{equation}
W^{\prime}=\exp(i\Omega H_{1}/2)W^{\prime\prime}\exp(-i\Omega H_{1}/2).
\end{equation}
To this end, consider from
\eqref{eq:exp-i-Om-compose}--\eqref{eq:W-composition-rule} that%
\begin{equation}
W^{\prime\prime}=\frac{I+\exp(i\Omega H_{2})\exp(i\Omega H_{1})}%
{I-\exp(i\Omega H_{2})\exp(i\Omega H_{1})},
\end{equation}
and from applying \eqref{eq:exp-i-Om-compose} twice and considering
\eqref{eq:sandwich-W-start}--\eqref{eq:W-sandwich-sqrt},%
\begin{equation}
W^{\prime}=\frac{I+\exp(i\Omega H_{1}/2)\exp(i\Omega H_{2})\exp(i\Omega
H_{1}/2)}{I-\exp(i\Omega H_{1}/2)\exp(i\Omega H_{2})\exp(i\Omega H_{1}/2)}.
\end{equation}
Then we find that%
\begin{align}
\exp(i\Omega H_{1}/2)W^{\prime\prime}\exp(-i\Omega H_{1}/2)  &  =\exp(i\Omega
H_{1}/2)\frac{I+\exp(i\Omega H_{2})\exp(i\Omega H_{1})}{I-\exp(i\Omega
H_{2})\exp(i\Omega H_{1})}\exp(-i\Omega H_{1}/2)\\
&  =\frac{I+\exp(i\Omega H_{1}/2)\exp(i\Omega H_{2})\exp(i\Omega H_{1}%
/2)}{I-\exp(i\Omega H_{1}/2)\exp(i\Omega H_{2})\exp(i\Omega H_{1}/2)}\\
&  =W^{\prime}.
\end{align}
Since these matrices are related by a similarity transformation, we find that%
\begin{align}
&  \left[  \det\left(  I-W_{1}+\left(  \sqrt{I-W_{1}^{-2}}\right)
W_{1}\left(  W_{2}+W_{1}\right)  ^{-1}W_{1}\left(  \sqrt{I-W_{1}^{-2}}\right)
\right)  \det\left(  i\Omega/2\right)  \right]  ^{1/2}\nonumber\\
&  =\left[  \det\left(  I-\left[  I+\left(  W_{2}-I\right)  \left(
W_{2}+W_{1}\right)  ^{-1}\left(  W_{1}-I\right)  \right]  \right)  \det\left(
i\Omega/2\right)  \right]  ^{1/2}\\
&  =\left[  \det\left(  -\left(  W_{2}-I\right)  \left(  W_{2}+W_{1}\right)
^{-1}\left(  W_{1}-I\right)  \right)  \det\left(  i\Omega/2\right)  \right]
^{1/2}\\
&  =\left[  \det\left(  \left(  I-W_{2}\right)  \left(  i\Omega/2\right)
\left(  -2i\Omega\right)  \left(  W_{2}+W_{1}\right)  ^{-1}\left(
I-W_{1}\right)  i\Omega/2\right)  \right]  ^{1/2}\\
&  =\left[  \det\left(  \left(  I-W_{2}\right)  \left(  i\Omega/2\right)
\right)  \det(\left(  -2i\Omega\right)  \left(  W_{2}+W_{1}\right)  ^{-1}%
)\det(\left(  I-W_{1}\right)  i\Omega/2)\right]  ^{1/2}\\
&  =\left[  \det\left(  \left[  V_{2}+i\Omega\right]  /2\right)  \det(\left(
-2i\Omega\right)  \left(  -V_{2}i\Omega-V_{1}i\Omega\right)  ^{-1})\det\left(
\left[  V_{1}+i\Omega\right]  /2\right)  \right]  ^{1/2}\\
&  =\left[  \det\left(  \left[  V_{2}+i\Omega\right]  /2\right)  \det(\left(
\left[  V_{2}+V_{1}\right]  /2\right)  ^{-1})\det\left(  \left[  V_{1}%
+i\Omega\right]  /2\right)  \right]  ^{1/2}\\
&  =\left[  \frac{\det\left(  \left[  V_{2}+i\Omega\right]  /2\right)
\det\left(  \left[  V_{1}+i\Omega\right]  /2\right)  }{\det(\left[
V_{2}+V_{1}\right]  /2)}\right]  ^{1/2}.
\end{align}
This concludes the proof.
\end{proof}

\begin{proposition}
\label{prop:inverse-sandwiched-by-sqrt}Given positive-definite real matrices
$H_{4}$ and $H_{5}$, we have that%
\begin{equation}
\exp\left[  -\frac{1}{2}\hat{x}^{T}\left[  H_{4}/2\right]  \hat{x}\right]
\exp\left[  -\frac{1}{2}\hat{x}^{T}\left[  -H_{5}\right]  \hat{x}\right]
\exp\left[  -\frac{1}{2}\hat{x}^{T}\left[  H_{4}/2\right]  \hat{x}\right]
=\exp\left[  -\frac{1}{2}\hat{x}^{T}H_{8}\hat{x}\right]  .
\end{equation}
In the above, $H_{8}$ is real and positive definite if $V_{5}>V_{4}$ and is
such that%
\begin{align}
\operatorname{Tr}\left\{  \exp\left[  -\frac{1}{2}\hat{x}^{T}H_{8}\hat
{x}\right]  \right\}   &  =\sqrt{\det(\left[  V_{8}+i\Omega\right]
/2)}\label{eq:H8-normalization}\\
&  =\sqrt{\frac{\det(\left[  V_{4}+i\Omega\right]  /2)\det(\left[
V_{5}+i\Omega\right]  /2)}{\det(\left[  V_{5}-V_{4}\right]  /2)}%
},\label{eq:alt-normalization}\\
V_{8}  &  =V_{4}+\left(  \sqrt{I+\left(  V_{4}\Omega\right)  ^{-2}}\right)
V_{4}\left(  V_{5}-V_{4}\right)  ^{-1}V_{4}\left(  \sqrt{I+\left(  \Omega
V_{4}\right)  ^{-2}}\right)  .
\end{align}

\end{proposition}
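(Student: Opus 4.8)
The plan is to deduce this proposition from Propositions~\ref{prop:sandwich-with-square-root} and \ref{prop:simple-product-normalization} by the formal substitution $H_5\mapsto -H_5$, keeping track of exactly where positive-definiteness of $H_5$ was actually used in those arguments and replacing it by the hypothesis $V_5>V_4$. First, since $H_4/2$ and $-H_5$ are symmetric matrices, two applications of the composition rule of Proposition~\ref{prop:product-rule-Gaussian} show that the triple product equals $\exp[-\frac12\hat{x}^{T}H_{8}\hat{x}]$ for some symmetric $H_{8}$, with $\exp(i\Omega H_{8})=\exp(i\Omega H_{4}/2)\exp(-i\Omega H_{5})\exp(i\Omega H_{4}/2)$. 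Now every manipulation in the derivation of the formula for $V_{6}$ in the proof of Proposition~\ref{prop:sandwich-with-square-root} is purely algebraic in $W_{4}$ (equivalently $V_{4}$) and $W_{5}$ (equivalently $V_{5}$); positive-definiteness of $H_{5}$ was never invoked there, only afterwards in Proposition~\ref{prop:norm-sandwich-sqrt} to guarantee legitimacy of the resulting covariance matrix. Under $H_{5}\mapsto -H_{5}$ we have $W_{5}\mapsto -W_{5}$ (from \eqref{eq:H-to-W}, since $\coth$ is odd), so \eqref{eq:W-sandwich-sqrt} becomes $W_{8}=W_{4}-\sqrt{I-W_{4}^{-2}}\,W_{4}(W_{4}-W_{5})^{-1}W_{4}\sqrt{I-W_{4}^{-2}}$, and converting to covariance matrices exactly as in that proof yields
\begin{equation}
V_{8}=V_{4}+\left(  \sqrt{I+\left(  V_{4}\Omega\right)  ^{-2}}\right)  V_{4}\left(  V_{5}-V_{4}\right)  ^{-1}V_{4}\left(  \sqrt{I+\left(  \Omega V_{4}\right)  ^{-2}}\right)  ,
\end{equation}
which is the asserted form; here $V_{5}-V_{4}$ is invertible precisely because $V_{5}>V_{4}$.

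Next I would establish that $H_{8}$ is real and positive definite when $V_{5}>V_{4}$. Realness is immediate, since the composition rule maps real symmetric matrices to real symmetric matrices, so $H_{8}=2i\Omega\operatorname{arcoth}(V_{8}i\Omega)$ is real symmetric; by \eqref{eq:H-PD-V-legit} it then suffices to check $V_{8}+i\Omega>0$. Setting $B=V_{4}\sqrt{I+\left(  \Omega V_{4}\right)  ^{-2}}$, one has $B^{T}=\sqrt{I+\left(  V_{4}\Omega\right)  ^{-2}}\,V_{4}$ (using $\Omega^{T}=-\Omega$ and symmetry of $V_{4}$), and $B$ is invertible because $V_{4}$ is and because $V_{4}i\Omega$ has all eigenvalues of modulus strictly greater than one (as already used in the proof of Proposition~\ref{prop:rho-to-alpha}), so $I+\left(  \Omega V_{4}\right)  ^{-2}$ is invertible. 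Hence
\begin{equation}
V_{8}+i\Omega=\left(  V_{4}+i\Omega\right)  +B^{T}\left(  V_{5}-V_{4}\right)  ^{-1}B,
\end{equation}
where $V_{4}+i\Omega\geq 0$ since $V_{4}$ is a legitimate covariance matrix, and $B^{T}\left(  V_{5}-V_{4}\right)  ^{-1}B>0$ since $V_{5}-V_{4}>0$ and $B$ is invertible; therefore $V_{8}+i\Omega>0$, so $H_{8}>0$.

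For the normalization, once $V_{8}$ is known to be a legitimate covariance matrix, \eqref{eq:norm-of-Gaussian-form} immediately gives $\operatorname{Tr}\{\exp[-\frac12\hat{x}^{T}H_{8}\hat{x}]\}=\sqrt{\det([V_{8}+i\Omega]/2)}$, which is \eqref{eq:H8-normalization}. For \eqref{eq:alt-normalization} I would first use cyclicity of the trace together with $\left(  \exp[-\frac12\hat{x}^{T}(H_{4}/2)\hat{x}]\right)  ^{2}=\exp[-\frac12\hat{x}^{T}H_{4}\hat{x}]$ to reduce the trace to $\operatorname{Tr}\{\exp[-\frac12\hat{x}^{T}H_{4}\hat{x}]\exp[-\frac12\hat{x}^{T}(-H_{5})\hat{x}]\}$, and then run the determinant computation from the proof of Proposition~\ref{prop:simple-product-normalization} verbatim with $V_{1}=V_{4}$ and $V_{2}=-V_{5}$: the only matrix inversions appearing there are of $i\Omega$ and of $V_{1}+V_{2}=V_{4}-V_{5}$, the latter invertible since $V_{5}>V_{4}$, so positive-definiteness of the second factor is never needed. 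This produces $\sqrt{\det([V_{4}+i\Omega]/2)\det([-V_{5}+i\Omega]/2)/\det([V_{4}-V_{5}]/2)}$, and then $\det([-V_{5}+i\Omega]/2)=\det([V_{5}+i\Omega]/2)$ (transpose, using $\Omega^{T}=-\Omega$ and $(-1)^{2n}=1$) and $\det([V_{4}-V_{5}]/2)=\det([V_{5}-V_{4}]/2)$ yield \eqref{eq:alt-normalization}, consistently with \eqref{eq:H8-normalization} via Proposition~\ref{prop:norm-sandwich-sqrt} applied to $V_{8}$.

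I expect the main obstacle to be bookkeeping rather than any new idea: one must verify that neither the derivation of the $V_{6}$-formula in Proposition~\ref{prop:sandwich-with-square-root} nor the determinant chain in the proof of Proposition~\ref{prop:simple-product-normalization} secretly relied on $H_{5}>0$ beyond what is restored by assuming $V_{5}>V_{4}$, and one must confirm invertibility of $B$ and of $V_{5}-V_{4}$ at every point where a division is carried out.
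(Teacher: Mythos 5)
Your proposal is correct and follows essentially the same route as the paper: substitute $H_{5}\mapsto-H_{5}$ (hence $W_{5}\mapsto-W_{5}$) in Propositions~\ref{prop:sandwich-with-square-root} and \ref{prop:simple-product-normalization}, verify $V_{8}+i\Omega>0$ from $V_{5}>V_{4}$ to get $H_{8}>0$ and \eqref{eq:H8-normalization}, and use the transpose trick $\det([-V_{5}+i\Omega]/2)=\det([V_{5}+i\Omega]/2)$ for \eqref{eq:alt-normalization}. The only slight imprecision is the claim that the composition rule sends real symmetric inputs to a real symmetric output (a single two-factor product generally yields a complex symmetric exponent); but this is harmless since, exactly as in the paper, realness and positivity of $H_{8}$ follow from the explicit real formula for $V_{8}$ together with $V_{8}+i\Omega>0$ and \eqref{eq:H-PD-V-legit}.
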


\begin{proof}
The proof of this proposition amounts to examining again the proofs of
Propositions~\ref{prop:product-rule-Gaussian}\ and
\ref{prop:sandwich-with-square-root} and instead substituting $-H_{5}$ for
$H_{5}$. Consider that the product rule from
Proposition~\ref{prop:product-rule-Gaussian} holds for symmetric $H_{1}$ and
$H_{2}$%
\begin{equation}
\exp\left[  -\frac{1}{2}\hat{x}^{T}H_{1}\hat{x}\right]  \exp\left[  -\frac
{1}{2}\hat{x}^{T}H_{2}\hat{x}\right]  =\exp\left[  -\frac{1}{2}\hat{x}%
^{T}H_{3}\hat{x}\right]  .
\end{equation}
From Ref.~\onlinecite{Balian1969}, we know that the symmetric matrix $H_{3}$ that
satisfies \eqref{eq:product-Gaussian-forms} is the same one that satisfies the
following equation:%
\begin{equation}
\exp\left[  -i\Omega H_{1}\right]  \exp\left[  -i\Omega H_{2}\right]
=\exp\left[  -i\Omega H_{3}\right]  .
\end{equation}
Note that, by taking inverses, this latter equation is equivalent to%
\begin{equation}
\exp\left[  i\Omega H_{3}\right]  =\exp\left[  i\Omega H_{2}\right]
\exp\left[  i\Omega H_{1}\right]  .
\end{equation}
Recalling that $\exp\left(  i\Omega H\right)  =\frac{W-I}{W+I}$, we find that%
\begin{equation}
\exp\left(  -i\Omega H\right)  =\left[  \exp\left(  i\Omega H\right)  \right]
^{-1}=\left[  \frac{W-I}{W+I}\right]  ^{-1}=\frac{W+I}{W-I}=\frac{-\left[
-W-I\right]  }{W-I}=\frac{-W-I}{-W+I}, \label{eq:minus-H-to-minus-W}%
\end{equation}
which implies that the transformation $H\rightarrow-H$ induces the
transformation $W\rightarrow-W$, as observed in Ref.~\onlinecite{LDW17}. Now propagating
this minus sign throughout all of the calculations in the proofs of
Propositions~\ref{prop:product-rule-Gaussian}\ and
\ref{prop:sandwich-with-square-root}, we find that%
\begin{align}
V_{8}  &  =V_{4}-\left(  \sqrt{I+\left(  V_{4}\Omega\right)  ^{-2}}\right)
V_{4}\left(  -V_{5}+V_{4}\right)  ^{-1}V_{4}\left(  \sqrt{I+\left(  \Omega
V_{4}\right)  ^{-2}}\right) \\
&  =V_{4}+\left(  \sqrt{I+\left(  V_{4}\Omega\right)  ^{-2}}\right)
V_{4}\left(  V_{5}-V_{4}\right)  ^{-1}V_{4}\left(  \sqrt{I+\left(  \Omega
V_{4}\right)  ^{-2}}\right)  .
\end{align}
The latter is a legitimate covariance matrix when $V_{5}-V_{4}>0$ because%
\begin{align}
V_{8}+i\Omega &  =V_{4}+i\Omega+\left(  \sqrt{I+\left(  V_{4}\Omega\right)
^{-2}}\right)  V_{4}\left(  V_{5}-V_{4}\right)  ^{-1}V_{4}\left(
\sqrt{I+\left(  \Omega V_{4}\right)  ^{-2}}\right) \\
&  \geq\left(  \sqrt{I+\left(  V_{4}\Omega\right)  ^{-2}}\right)  V_{4}\left(
V_{5}-V_{4}\right)  ^{-1}V_{4}\left(  \sqrt{I+\left(  \Omega V_{4}\right)
^{-2}}\right) \\
&  >0.
\end{align}
In the above, we used the fact that $V_{4}$ is a legitimate covariance matrix
satisfying $V_{4}+i\Omega\geq0$ and the assumption that $V_{5}-V_{4}>0$. By
\eqref{eq:H-PD-V-legit}, this implies that $H_{8}>0$. Since $V_{8}$ is a
legitimate covariance matrix corresponding to $H_{8}$, we conclude \eqref{eq:H8-normalization}.

A proof for \eqref{eq:alt-normalization} follows by examining again the proof
of Proposition~\ref{prop:simple-product-normalization} and considering again
that the transformation $H\rightarrow-H$ induces the transformation
$W\rightarrow-W$. Propagating the minus sign throughout the calculation, we
arrive at%
\begin{align}
&  \left[  \frac{\det\left(  \left[  -V_{2}+i\Omega\right]  /2\right)
\det\left(  \left[  V_{1}+i\Omega\right]  /2\right)  }{\det(\left[
-V_{2}+V_{1}\right]  /2)}\right]  ^{1/2}\nonumber\\
&  =\left[  \frac{\det\left(  \left[  V_{2}-i\Omega\right]  /2\right)
\det\left(  \left[  V_{1}+i\Omega\right]  /2\right)  }{\det(\left[
V_{2}-V_{1}\right]  /2)}\right]  ^{1/2}\\
&  =\left[  \frac{\det\left(  \left[  V_{2}+i\Omega\right]  /2\right)
\det\left(  \left[  V_{1}+i\Omega\right]  /2\right)  }{\det(\left[
V_{2}-V_{1}\right]  /2)}\right]  ^{1/2},
\end{align}
where the last equality follows because $\left[  V_{2}-i\Omega\right]
^{T}=V_{2}+i\Omega$ and the determinant is invariant with respect to transposition.
\end{proof}

We close this section by remarking that many of the above calculations can be completed by considering the approach developed in Ref.~\onlinecite[Appendix~A]{LDW17}.

\subsection{Mean vectors and displacement operators}

We begin by recalling some standard properties of the operator in
\eqref{eq: displacement Op}. For detailed proofs, see, e.g., Ref.~\onlinecite{S17}, but
note that they follow from the Baker--Campbell--Hausdorff formula and its corollaries.

\begin{proposition}
\label{prop: displ op properties}The displacement operator in
\eqref{eq: displacement Op} (extended to $s\in\mathbb{C}^{2n}$) satisfies the
following properties:
\begin{align}
D(s)^{-1}  &  =D(-s),\label{eq: dagger is neg}\\
D(s)D(t)  &  =D(s+t)e^{-\frac{1}{2}s^{T}i\Omega t}, \label{eq: prod of displs}%
\\
D(s)\hat{x}D(-s)  &  =\hat{x}+s,\label{eq: displ effect on x}\\
\exp\left[  -\frac{1}{2}\hat{x}^{T}H\hat{x}\right]  \hat{{x}}  &  =\exp\left[
i\Omega H\right]  \hat{x}\exp\left[  -\frac{1}{2}\hat{x}^{T}H\hat{x}\right]  ,
\label{eq: Gaussian sandwich of x}%
\end{align}
where $s,t\in\mathbb{C}^{2n}$ and $H$ is a symmetric matrix with complex entries. If
$s\in\mathbb{R}^{2n}$, then $D(s)^{-1}=D(s)^{\dag}$.
\end{proposition}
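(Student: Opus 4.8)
The plan is to obtain all five identities from the canonical commutation relations \eqref{eq:symplectic-form} together with two standard corollaries of the Baker--Campbell--Hausdorff formula: the product rule $e^{A}e^{B}=e^{A+B}e^{[A,B]/2}$, valid whenever $[A,B]$ is central, and the Hadamard expansion $e^{A}Be^{-A}=\sum_{k\ge 0}\frac{1}{k!}(\operatorname{ad}_{A})^{k}(B)$, where $\operatorname{ad}_{A}(B)=[A,B]$. Throughout I will use the matrix facts recorded in Section~\ref{sec:prelim}, namely $\Omega^{T}=-\Omega$ and $(i\Omega)^{2}=I$, and, for \eqref{eq: Gaussian sandwich of x}, the symmetry of $H$.

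For \eqref{eq: dagger is neg} there is nothing to do beyond reading $D(s)^{-1}=\exp[-s^{T}i\Omega\hat x]=D(-s)$ off the definition \eqref{eq: displacement Op}. For \eqref{eq: prod of displs}, set $A=s^{T}i\Omega\hat x$ and $B=t^{T}i\Omega\hat x$ and compute $[A,B]=\sum_{j,k}(s^{T}i\Omega)_{j}(t^{T}i\Omega)_{k}\,i\Omega_{j,k}=(s^{T}i\Omega)(i\Omega)(i\Omega)^{T}t$, which collapses to $-s^{T}i\Omega t$ via $(i\Omega)^{2}=I$ and $\Omega^{T}=-\Omega$; since this is a $c$-number it is central, so the product rule applies and gives $D(s)D(t)=e^{A+B}e^{[A,B]/2}=D(s+t)e^{-\frac12 s^{T}i\Omega t}$. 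These two arguments are insensitive to whether $s,t$ are real or complex. For \eqref{eq: displ effect on x}, apply the Hadamard expansion with the same $A$: the single commutator $[A,\hat x_{k}]=\sum_{j}(s^{T}i\Omega)_{j}\,i\Omega_{j,k}=(s^{T}(i\Omega)^{2})_{k}=s_{k}$ is again a $c$-number, so all higher-order terms vanish and $D(s)\hat x D(-s)=\hat x+s$.

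For \eqref{eq: Gaussian sandwich of x}, write $G=-\tfrac12\hat x^{T}H\hat x$ and use $[\hat x_{j}\hat x_{l},\hat x_{k}]=\hat x_{j}\,i\Omega_{l,k}+i\Omega_{j,k}\,\hat x_{l}$; symmetry of $H$ makes the two resulting sums equal, and one finds $[G,\hat x]=i\Omega H\hat x$. Iterating, $(\operatorname{ad}_{G})^{k}(\hat x)=(i\Omega H)^{k}\hat x$, so $e^{G}\hat x e^{-G}=e^{i\Omega H}\hat x$, which is precisely \eqref{eq: Gaussian sandwich of x} after multiplying by $e^{-G}=\exp[-\tfrac12\hat x^{T}H\hat x]$ on the right. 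Finally, for the last assertion note that when $s\in\mathbb{R}^{2n}$ each component $(s^{T}i\Omega)_{j}$ is purely imaginary, so $s^{T}i\Omega\hat x$ is anti-Hermitian (using that $\hat x$ is a vector of self-adjoint operators); hence $D(s)=\exp[s^{T}i\Omega\hat x]$ is unitary and $D(s)^{\dag}=D(-s)=D(s)^{-1}$ by \eqref{eq: dagger is neg}.

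The main obstacle here is not conceptual, since every identity reduces to a one-line commutator computation; it is bookkeeping (transposes of $\Omega$, placement of the $i\Omega$ factors, symmetry of $H$) together with the fact that the generators $s^{T}i\Omega\hat x$ and $\hat x^{T}H\hat x$ are unbounded operators on an infinite-dimensional space, so the BCH manipulations must be understood on a suitable dense domain of analytic vectors rather than as formal power-series identities. The cleanest route, and the one I would follow, is to arrange every expansion so that it either terminates at a central commutator (as in \eqref{eq: prod of displs} and \eqref{eq: displ effect on x}) or becomes a convergent geometric series in $i\Omega H$ (as in \eqref{eq: Gaussian sandwich of x}), so that questions of operator ordering and convergence never become genuine difficulties; the careful functional-analytic justification is the one already carried out in \cite{S17}, to which we refer.
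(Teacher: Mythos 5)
Your proof is correct and follows exactly the route the paper indicates (the paper itself omits the argument, remarking only that the identities follow from the Baker--Campbell--Hausdorff formula and its corollaries, with details deferred to \cite{S17}): central-commutator BCH for \eqref{eq: prod of displs} and \eqref{eq: displ effect on x}, and the Hadamard expansion with $[G,\hat{x}]=i\Omega H\hat{x}$ for \eqref{eq: Gaussian sandwich of x}. The only blemish is a harmless notational slip at the end of the third paragraph: with $G=-\tfrac{1}{2}\hat{x}^{T}H\hat{x}$ one multiplies $e^{G}\hat{x}e^{-G}=e^{i\Omega H}\hat{x}$ on the right by $e^{G}=\exp\left[-\tfrac{1}{2}\hat{x}^{T}H\hat{x}\right]$, not by $e^{-G}$.
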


The following corollary generalizes some statements from Ref.~\onlinecite{BBP15}:

\begin{corollary}
\label{cor: displ op}The following equalities involving the displacement
operator and exponential quadratic forms hold%
\begin{equation}
\exp\left[  -\frac{1}{2}\hat{x}^{T}H\hat{x}\right]  D(s)=D\left(  \exp\left[
-i\Omega H\right]  s\right)  \exp\left[  -\frac{1}{2}\hat{x}^{T}H\hat
{x}\right]  , \label{eq: displ gauss 1}%
\end{equation}
and for $l=\left(  \exp\left[  -i\Omega H\right]  -I\right)  s,$
\begin{equation}
D(l)\exp\left[  -\frac{1}{2}\hat{x}^{T}H\hat{x}\right]  =D(-s)\exp\left[
-\frac{1}{2}\hat{x}^{T}H\hat{x}\right]  D(s)e^{\frac{1}{4}l^{T}i\Omega Wl},
\label{eq: displ gauss 2}%
\end{equation}
where $s\in\mathbb{C}^{2n}$, $H$ is a symmetric matrix with complex entries, and $W$ is
%as defined in \eqref{eq:def-W} and
related to $H$ by \eqref{eq:H-to-W}.
\end{corollary}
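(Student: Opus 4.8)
The plan is to establish \eqref{eq: displ gauss 1} directly by conjugating a displacement operator through the Gaussian exponential, and then to deduce \eqref{eq: displ gauss 2} from it together with the Weyl composition rule \eqref{eq: prod of displs}.

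For \eqref{eq: displ gauss 1}, I would write $G \equiv \exp[-\tfrac{1}{2}\hat{x}^{T}H\hat{x}]$ and invoke \eqref{eq: Gaussian sandwich of x}, which says that conjugation by $G$ acts linearly on the quadratures, $G\hat{x}G^{-1} = \exp[i\Omega H]\hat{x}$. Expanding $D(s) = \exp[s^{T}i\Omega\hat{x}]$ as a power series in the generators $\hat{x}_k$ and applying this linear action term by term gives $GD(s)G^{-1} = \exp[\,s^{T}i\Omega\exp[i\Omega H]\hat{x}\,]$. To recognize the right-hand side as a displacement operator $D(s')$, I would use that $i\Omega$ is involutory to read off $(s')^{T} = s^{T}i\Omega\exp[i\Omega H]i\Omega$; the functional-analytic identity $Mf(L)M^{-1} = f(MLM^{-1})$ with $M = i\Omega$ gives $i\Omega\exp[i\Omega H]i\Omega = \exp[Hi\Omega]$, and transposing (using $H^{T} = H$ and $\Omega^{T} = -\Omega$) yields $s' = \exp[-i\Omega H]s$. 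Thus $GD(s)G^{-1} = D(\exp[-i\Omega H]s)$, and multiplying on the right by $G$ gives \eqref{eq: displ gauss 1}.

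For \eqref{eq: displ gauss 2}, I would start from the right-hand side: \eqref{eq: displ gauss 1} moves $G$ to the left of $D(s)$, and then \eqref{eq: prod of displs} combines the two displacement operators, so that
\[
D(-s)\,G\,D(s) = D(-s)\,D(\exp[-i\Omega H]s)\,G = D(l)\,G\,e^{\frac{1}{2}s^{T}i\Omega\exp[-i\Omega H]s},
\]
where I used $-s + \exp[-i\Omega H]s = l$. Hence the claim reduces to the scalar identity $l^{T}i\Omega W l = -2\,s^{T}i\Omega\exp[-i\Omega H]s$, valid for all $s \in \mathbb{C}^{2n}$, since the two displayed scalar prefactors must agree with $e^{\frac14 l^T i\Omega W l}$ as an operator identity.

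Verifying this scalar identity is the main obstacle, and I would carry it out entirely at the matrix level. Write $E \equiv \exp[i\Omega H]$, so $l = (E^{-1}-I)s$ and, by \eqref{eq:H-to-W}, $W = (I+E)(I-E)^{-1}$. The key input is the symplectic-type relation $E^{T}i\Omega = i\Omega E^{-1}$ (equivalently $E^{-T}i\Omega = i\Omega E$), which follows from $E^{T} = \exp[-Hi\Omega]$ and $Mf(L)M^{-1} = f(MLM^{-1})$. Using it, $(E^{-T}-I)i\Omega = i\Omega(E-I)$, so $l^{T}i\Omega W l = s^{T}i\Omega(E-I)W(E^{-1}-I)s$; since $E-I = -(I-E)$ commutes with $W$, one gets $(E-I)W = -(I+E)$ and hence $(E-I)W(E^{-1}-I) = -(I+E)(E^{-1}-I) = E - E^{-1}$, so $l^{T}i\Omega W l = s^{T}i\Omega(E - E^{-1})s$. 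The identity then reduces to $s^{T}i\Omega(E+E^{-1})s = 0$, which holds because $i\Omega(E+E^{-1})$ is antisymmetric: using $E^{T} = i\Omega E^{-1}i\Omega$ and $E^{-T} = i\Omega E i\Omega$ one checks $(i\Omega(E+E^{-1}))^{T} = -i\Omega(E+E^{-1})$, and $v^{T}Av = 0$ for any antisymmetric $A$ and any $v$. This closes the argument, modulo the standard caveat that $G$ and $G^{-1}$ are unbounded, so strictly the manipulations above are read as identities of densely defined operators, as elsewhere in the paper.
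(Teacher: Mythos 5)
Your proposal is correct and follows essentially the same route as the paper: \eqref{eq: displ gauss 1} via the conjugation rule \eqref{eq: Gaussian sandwich of x} plus the identity $i\Omega\exp[i\Omega H]i\Omega=\exp[Hi\Omega]$, and \eqref{eq: displ gauss 2} via \eqref{eq: prod of displs} followed by an algebraic identification of the scalar prefactor using \eqref{eq:H-to-W}. The only cosmetic difference is that you expand $l^{T}i\Omega Wl$ in terms of $s$ and invoke antisymmetry of $i\Omega(E+E^{-1})$, whereas the paper substitutes $s=(\exp[-i\Omega H]-I)^{-1}l$ and uses the equivalent ``a scalar equals its transpose'' step; the content is the same.
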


\begin{proof}
Consider that%
\begin{equation}
\exp\left[  -\frac{1}{2}\hat{x}^{T}H\hat{x}\right]  \exp\left[  s^{T}%
i\Omega\hat{x}\right]  
=\exp\left[  s^{T}i\Omega\exp\left[  i\Omega H\right]  \hat{x}\right]  
\exp\left[-  \frac{1}{2}\hat{x}^{T}H\hat{x}\right],
\end{equation}
which follows from applying \eqref{eq: Gaussian sandwich of x} of Proposition
\ref{prop: displ op properties} to $\exp\left[  s^{T}i\Omega\hat{x}\right]  $.
This implies that%
\begin{align}
\exp\left[  -\frac{1}{2}\hat{x}^{T}H\hat{x}\right]  \exp\left[  s^{T}%
i\Omega\hat{x}\right]   &  =\exp\left[  s^{T}i\Omega\exp\left[  i\Omega
H\right]  \hat{x}\right]  \exp\left[  -\frac{1}{2}\hat{x}^{T}H\hat{x}\right]
\\
&  =\exp\left[  s^{T}\exp\left[  Hi\Omega\right]  i\Omega\hat{x}\right]
\exp\left[  -\frac{1}{2}\hat{x}^{T}H\hat{x}\right] \\
&  =\exp\left[  \left(  \exp\left[  -i\Omega H\right]  s\right)  ^{T}%
i\Omega\hat{x}\right]  \exp\left[  -\frac{1}{2}\hat{x}^{T}H\hat{x}\right]  ,
\end{align}
where we have used the following:%
\begin{equation}
i\Omega\exp\left[  i\Omega H\right]  =i\Omega\exp\left[  i\Omega H\right]
i\Omega i\Omega=\exp\left[  i\Omega i\Omega Hi\Omega\right]  i\Omega
=\exp\left[  Hi\Omega\right]  i\Omega.
\end{equation}
This establishes \eqref{eq: displ gauss 1}.

To see \eqref{eq: displ gauss 2}, consider that
\begin{align}
D(-s)\exp\left[  -\frac{1}{2}\hat{x}^{T}H\hat{x}\right]  D(s)  &
=D(-s)D(\exp\left[  -i\Omega H\right]  s)\exp\left[  -\frac{1}{2}\hat{x}%
^{T}H\hat{x}\right] \\
&  =e^{\frac{1}{2}s^{T}i\Omega\exp\left[  -i\Omega H\right]  s}D(l)\exp\left[
-\frac{1}{2}\hat{x}^{T}H\hat{x}\right]  ,
\end{align}
where
\begin{equation}
l=\left(  \exp\left[  -i\Omega H\right]  -I\right)  s.
\end{equation}
In the above, the first equality follows from applying
\eqref{eq: displ gauss 1}, while the second equality results from applying
\eqref{eq: prod of displs} of Proposition \ref{prop: displ op properties}.
Thus,
\begin{equation}
D(l)\exp\left[  -\frac{1}{2}\hat{x}^{T}H\hat{x}\right]  =D(-s)\exp\left[
-\frac{1}{2}\hat{x}^{T}H\hat{x}\right]  D(s)e^{-\frac{1}{2}s^{T}i\Omega
\exp\left[  -i\Omega H\right]  s}. \label{eq: towards displ gauss 2}%
\end{equation}
Furthermore, consider that
\begin{equation}
s^{T}i\Omega\exp\left[  -i\Omega H\right]  s=\frac{1}{2}s^{T}i\Omega\left(
\exp\left[  -i\Omega H\right]  -\exp\left[  i\Omega H\right]  \right)  s,
\label{eq:  s phase s real}%
\end{equation}
which follows because a scalar is equal to its transpose. Thus,
\begin{align}
&  \frac{1}{2}s^{T}i\Omega\exp\left[  -i\Omega H\right]  s\nonumber\\
&  =\frac{1}{4}s^{T}i\Omega\left(  \exp\left[  -i\Omega H\right]  -\exp\left[
i\Omega H\right]  \right)  s\label{eq: expression K beginning}\\
&  =\frac{1}{4}\left(  \left(  \exp\left[  -i\Omega H\right]  -I\right)
^{-1}l\right)  ^{T}i\Omega\left(  \exp\left[  -i\Omega H\right]  -\exp\left[
i\Omega H\right]  \right)  \left(  \exp\left[  -i\Omega H\right]  -I\right)
^{-1}l\\
&  =\frac{1}{4}l^{T}\left(  \exp\left[  Hi\Omega\right]  -I\right)
^{-1}i\Omega\left(  \exp\left[  -i\Omega H\right]  -\exp\left[  i\Omega
H\right]  \right)  \left(  \exp\left[  -i\Omega H\right]  -I\right)  ^{-1}l\\
&  =\frac{1}{4}l^{T}i\Omega\left(  \exp\left[  i\Omega H\right]  -I\right)
^{-1}\left(  \exp\left[  -i\Omega H\right]  -\exp\left[  i\Omega H\right]
\right)  \left(  \exp\left[  -i\Omega H\right]  -I\right)  ^{-1}l.
\end{align}
Now that we have expressed the middle operator in terms of the scalar function%
\begin{equation}
x\rightarrow\left[  x-1\right]  ^{-1}\left(  x^{-1}-x\right)  \left[
x^{-1}-1\right]  ^{-1}=\frac{x+1}{x-1},
\end{equation}
we can conclude that%
\begin{align}
\frac{1}{2}s^{T}i\Omega\exp\left[  -i\Omega H\right]  s  &  =\frac{1}{4}%
l^{T}i\Omega\frac{\left(  \exp\left[  i\Omega H\right]  +I\right)  }{\left(
\exp\left[  i\Omega H\right]  -I\right)  }l\\
&  =-\frac{1}{4}l^{T}i\Omega Wl. \label{eq: expression K}%
\end{align}
Equations \eqref{eq: towards displ gauss 2} and \eqref{eq: expression K}
together establish \eqref{eq: displ gauss 2}.
\end{proof}

\begin{remark}
\label{rem: Corollary displ Gaussian inverse}In Corollary \ref{cor: displ op},
if $H\rightarrow-H$, (i.e., if the inverse of an exponential quadratic form is
considered), then the above statements change as
\begin{align}
\exp\left[  -\frac{1}{2}\hat{x}^{T}\left(  -H\right)  \hat{x}\right]  D(s)  &
=D\left(  \exp\left[  -i\Omega\left(  -H\right)  \right]  s\right)
\exp\left[  -\frac{1}{2}\hat{x}^{T}\left(  -H\right)  \hat{x}\right]  ,\\
D(l)\exp\left[  -\frac{1}{2}\hat{x}^{T}\left(  -H\right)  \hat{x}\right]   &
=D(-s)\exp\left[  -\frac{1}{2}\hat{x}^{T}\left(  -H\right)  \hat{x}\right]
D(s)e^{\frac{1}{4}l^{T}i\Omega\left(  -W\right)  l},
\end{align}
for $l=\left(  \exp\left[  -i\Omega\left(  -H\right)  \right]  -I\right)  s,$
where $s\in\mathbb{C}^{2n}$, and $W$ is
related to $H$ by \eqref{eq:H-to-W}.
\end{remark}

\begin{lemma}
[{[\onlinecite{BBP15}]}]\label{lem: displ term} For positive-definite real matrices
$H_{9}$ and $H_{10}$ such that
\begin{equation}
\exp\left[  -i\Omega H_{9}\right]  \exp\left[  -i\Omega H_{10}\right]
=\exp\left[  -i\Omega H_{11}\right]  ,
\end{equation}
and%
\begin{equation}
l=\left(  \exp\left[  -i\Omega H_{9}\right]  -I\right)  s,\qquad
s\in\mathbb{C}^{2n},
\end{equation}
the following equality holds%
\begin{equation}
-\frac{1}{4}l^{T}i\Omega W_{9}l+\frac{1}{4}l^{T}i\Omega W_{11}l=-s^{T}\left(
V_{9}+V_{10}\right)  ^{-1}s, \label{eq: lemma with means}%
\end{equation}
where $V_{9}$ and $V_{10}$ are related to $H_{9}$ and $H_{10},$ respectively,
by \eqref{eq:gibbs-to-cov}, $W_{9}$ is related to $H_{9}$ by
\eqref{eq:H-to-W}, and
\begin{equation}
W_{11}=\frac{I+\exp\left(  i\Omega H_{11}\right)  }{I-\exp\left(  i\Omega
H_{11}\right)  }=\frac{I+\exp\left(  i\Omega H_{10}\right)  \exp\left(
i\Omega H_{9}\right)  }{I-\exp\left(  i\Omega H_{10}\right)  \exp\left(
i\Omega H_{9}\right)  }.
\end{equation}

\end{lemma}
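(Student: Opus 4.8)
The plan is to reduce the identity to the composition rule for the $W$-matrices that was derived inside the proof of Proposition~\ref{prop:product-rule-Gaussian}, after which only scalar-level algebra and the symplectic symmetry relations remain. First I would rewrite the vector $l$ entirely in terms of $W_{9}$: by \eqref{eq:W-to-H} we have $\exp\left(i\Omega H_{9}\right)=\left(W_{9}-I\right)\left(W_{9}+I\right)^{-1}$, so taking inverses gives $\exp\left(-i\Omega H_{9}\right)=\left(W_{9}+I\right)\left(W_{9}-I\right)^{-1}$, and since these factors commute,
\begin{equation}
l=\left(\exp\left[-i\Omega H_{9}\right]-I\right)s=2\left(W_{9}-I\right)^{-1}s .
\end{equation}
Next I would observe that the hypothesis $\exp\left[-i\Omega H_{9}\right]\exp\left[-i\Omega H_{10}\right]=\exp\left[-i\Omega H_{11}\right]$ is precisely the situation of Proposition~\ref{prop:product-rule-Gaussian} with $H_{1},H_{2},H_{3}$ replaced by $H_{9},H_{10},H_{11}$, so \eqref{eq:W-composition-rule} gives $W_{11}=I+\left(W_{10}-I\right)\left(W_{9}+W_{10}\right)^{-1}\left(W_{9}-I\right)$. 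Writing $I-W_{9}=-\left(W_{9}+W_{10}\right)\left(W_{9}+W_{10}\right)^{-1}\left(W_{9}-I\right)$ and collecting terms then yields the key identity
\begin{equation}
W_{11}-W_{9}=-\left(W_{9}+I\right)\left(W_{9}+W_{10}\right)^{-1}\left(W_{9}-I\right).
\end{equation}

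Substituting the two displays above into the left-hand side of \eqref{eq: lemma with means}, which equals $\tfrac{1}{4}l^{T}i\Omega\left(W_{11}-W_{9}\right)l$, the right factor $\left(W_{9}-I\right)^{-1}$ from $l$ cancels the factor $\left(W_{9}-I\right)$ inside $W_{11}-W_{9}$, leaving
\begin{equation}
\tfrac{1}{4}l^{T}i\Omega\left(W_{11}-W_{9}\right)l=-s^{T}\left[\left(W_{9}-I\right)^{-1}\right]^{T}i\Omega\left(W_{9}+I\right)\left(W_{9}+W_{10}\right)^{-1}s .
\end{equation}
To finish I would simplify $\left[\left(W_{9}-I\right)^{-1}\right]^{T}i\Omega\left(W_{9}+I\right)$. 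Since $H_{9}$ is positive definite and real it is symmetric, hence by the discussion around \eqref{eq:gibbs-to-cov} the matrix $V_{9}$ is symmetric; combined with $W_{9}=-V_{9}i\Omega$ and $\Omega^{T}=-\Omega$ this gives $W_{9}^{T}=i\Omega V_{9}=-i\Omega W_{9}i\Omega$, and therefore $\left(W_{9}-I\right)^{T}=-i\Omega\left(W_{9}+I\right)i\Omega$. Taking inverses and using $\left(i\Omega\right)^{2}=I$ yields $\left[\left(W_{9}-I\right)^{-1}\right]^{T}i\Omega\left(W_{9}+I\right)=-i\Omega$, so the expression collapses to $s^{T}i\Omega\left(W_{9}+W_{10}\right)^{-1}s$. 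Finally $W_{9}+W_{10}=-\left(V_{9}+V_{10}\right)i\Omega$ gives $\left(W_{9}+W_{10}\right)^{-1}=-i\Omega\left(V_{9}+V_{10}\right)^{-1}$, and one more application of $\left(i\Omega\right)^{2}=I$ produces $-s^{T}\left(V_{9}+V_{10}\right)^{-1}s$, as claimed.

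I expect the main obstacle to be the bookkeeping around the transposes and the non-commuting factors: because $l$ involves $\exp\left[-i\Omega H_{9}\right]$, which is neither real nor symmetric, the factor $l^{T}$ brings in its transpose, and the only genuinely structural step is the collapse of $\left[\left(W_{9}-I\right)^{-1}\right]^{T}i\Omega\left(W_{9}+I\right)$ to $-i\Omega$, which relies on $\Omega^{T}=-\Omega$, on $\left(i\Omega\right)^{2}=I$, and on the symmetry of $V_{9}$ inherited from that of $H_{9}$. Everything else is either a direct appeal to \eqref{eq:W-composition-rule} in Proposition~\ref{prop:product-rule-Gaussian} or routine scalar manipulation.
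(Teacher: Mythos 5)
Your proof is correct and follows essentially the same route as the paper's: both arguments reduce the left-hand side to $s^{T}i\Omega\left(W_{9}+W_{10}\right)^{-1}s$ by invoking the $W$-composition rule \eqref{eq:use-later-sqrt-sand-last} from Proposition~\ref{prop:product-rule-Gaussian} together with the transpose identities that follow from the symmetry of $H_{9}$ (hence of $V_{9}$) and $\Omega^{T}=-\Omega$. The only difference is organizational: you substitute $l=2\left(W_{9}-I\right)^{-1}s$ and combine the two terms into $\tfrac{1}{4}l^{T}i\Omega\left(W_{11}-W_{9}\right)l$ at the outset, whereas the paper keeps the exponentials $\exp\left[\pm i\Omega H_{9}\right]$ explicit and simplifies the two terms separately before the same cancellation occurs.
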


\begin{proof}
From \eqref{eq: expression K beginning}--\eqref{eq: expression K} of
Corollary~\ref{cor: displ op}, we have that
\begin{equation}
-\frac{1}{4}l^{T}i\Omega W_{9}l=\frac{1}{4}s^{T}i\Omega\left(  \exp\left[
-i\Omega H_{9}\right]  -\exp\left[  i\Omega H_{9}\right]  \right)  s.
\end{equation}
We also have that%
\begin{align}
\frac{1}{4}l^{T}i\Omega W_{11}l  &  =\frac{1}{4}s^{T}\left(  \exp\left[
-i\Omega H_{9}\right]  -I\right)  ^{T}i\Omega W_{11}\left(  \exp\left[
-i\Omega H_{9}\right]  -I\right)  s\\
&  =\frac{1}{4}s^{T}i\Omega\left(  \exp\left[  i\Omega H_{9}\right]
-I\right)  W_{11}\left(  \exp\left[  -i\Omega H_{9}\right]  -I\right)  s,
\end{align}
so that the total expression can be written as
\begin{multline}
-\frac{1}{4}l^{T}i\Omega W_{9}l+\frac{1}{4}l^{T}i\Omega W_{11}l\\
=\frac{1}{4}s^{T}i\Omega\left[  \left(  \exp\left[  -i\Omega H_{9}\right]
-\exp\left[  i\Omega H_{9}\right]  \right)  +\left(  \exp\left[  i\Omega
H_{9}\right]  -I\right)  W_{11}\left(  \exp\left[  -i\Omega H_{9}\right]
-I\right)  \right]  s, \label{eq: displ lemma full expression}%
\end{multline}
The following equalities hold by exploiting \eqref{eq:W-to-H}%
\begin{align}
\exp\left[  -i\Omega H_{9}\right]  -\exp\left[  i\Omega H_{9}\right]   &
=\frac{W_{9}+I}{W_{9}-I}-\frac{W_{9}-I}{W_{9}+I}=\frac{4W_{9}}{W_{9}^{2}%
-I},\label{eq: displ lemma term 1}\\
\left(  \exp\left[  i\Omega H_{9}\right]  -I\right)  W_{11}\left(  \exp\left[
-i\Omega H_{9}\right]  -I\right)   &  =-\frac{2}{W_{9}+I}W_{11}\frac{2}%
{W_{9}-I}. \label{eq: displ lemma term 2}%
\end{align}
From \eqref{eq:use-later-sqrt-sand-last}, we have that
\begin{equation}
W_{11}=W_{9}-\left(  W_{9}+I\right)  \left(  W_{9}+W_{10}\right)  ^{-1}\left(
W_{9}-I\right)  , \label{eq: W'' in terms of W and W'}%
\end{equation}
Using \eqref{eq: W'' in terms of W and W'}, we thus have that%
\begin{align}
&  -\frac{2}{W_{9}+I}W_{11}\frac{2}{W_{9}-I}\nonumber\\
&  =-\frac{2}{W_{9}+I}\left[  W_{9}-\left(  W_{9}+I\right)  \left(
W_{9}+W_{10}\right)  ^{-1}\left(  W_{9}-I\right)  \right]  \frac{2}{W_{9}-I}\\
&  =-\frac{2}{W_{9}+I}W_{9}\frac{2}{W_{9}-I}+\frac{2}{W_{9}+I}\left(
W_{9}+I\right)  \left(  W_{9}+W_{10}\right)  ^{-1}\left(  W_{9}-I\right)
\frac{2}{W_{9}-I}\\
&  =-\frac{4W_{9}}{W_{9}^{2}-I}+4\left(  W_{9}+W_{10}\right)  ^{-1}.
\label{eq: displ lemma term 2 simplified}%
\end{align}
Combining \eqref{eq: displ lemma term 1} and
\eqref{eq: displ lemma term 2 simplified}, we obtain that
\begin{multline}
\frac{1}{4}s^{T}i\Omega\left[  \left(  \exp\left[  -i\Omega H_{9}\right]
-\exp\left[  i\Omega H_{9}\right]  \right)  +\left(  \exp\left[  i\Omega
H_{9}\right]  -I\right)  W_{11}\left(  \exp\left[  -i\Omega H_{9}\right]
-I\right)  \right]  s\\
=s^{T}i\Omega\left(  W_{9}+W_{10}\right)  ^{-1}s=-s^{T}\left(  V_{9}%
+V_{10}\right)  ^{-1}s,
\end{multline}
which is the statement of the lemma.
\end{proof}

\begin{lemma}
\label{lem: displ term with inverse}Let $H_{9}$ and $H_{10}$ be
positive-definite real matrices such that
\begin{equation}
\exp\left[  -i\Omega\left(  -H_{9}\right)  \right]  \exp\left[  -i\Omega
H_{10}\right]  =\exp\left[  -i\Omega H_{11}\right]  ,
\end{equation}
with $H_{11}$ satisfying the above, and let%
\begin{equation}
l=\left(  \exp\left[  - i\Omega(-H_{9})\right]  -I\right)  s,\qquad
s\in\mathbb{C}^{2n}.
\end{equation}
If $V_{9}>V_{10}$, then the following equality holds%
\begin{equation}
-\frac{1}{4}l^{T}i\Omega(-W_{9})l+\frac{1}{4}l^{T}i\Omega W_{11}l=
s^{T}\left(  V_{9} - V_{10}\right)  ^{-1}s, \label{eq: lemma with means-1}%
\end{equation}
where $V_{9}$ and $V_{10}$ are related to $H_{9}$ and $H_{10},$ respectively,
by \eqref{eq:gibbs-to-cov}, $W_{9}$ is related to $H_{9}$ by
\eqref{eq:H-to-W}, and
\begin{equation}
W_{11}=\frac{I+\exp\left(  i\Omega H_{11}\right)  }{I-\exp\left(  i\Omega
H_{11}\right)  }=\frac{I+\exp\left(  i\Omega H_{10}\right)  \exp\left(
-i\Omega H_{9}\right)  }{I-\exp\left(  i\Omega H_{10}\right)  \exp\left(
-i\Omega H_{9}\right)  }.
\end{equation}

\end{lemma}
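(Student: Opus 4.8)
The plan is to run the proof of Lemma~\ref{lem: displ term} essentially verbatim, feeding in the single structural change recorded in Remark~\ref{rem: Corollary displ Gaussian inverse} and in \eqref{eq:minus-H-to-minus-W}: replacing a Hamiltonian matrix $H$ by $-H$ sends the associated matrix $W$ to $-W$. Concretely, everything is obtained from Lemma~\ref{lem: displ term} by the substitution $H_9\to -H_9$, $W_9\to -W_9$, while $H_{10}$, $V_{10}$, $W_{10}$ are left untouched and $l=(\exp[-i\Omega(-H_9)]-I)s=(\exp[i\Omega H_9]-I)s$.

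First I would apply the inverted form of \eqref{eq: expression K beginning}--\eqref{eq: expression K} (Corollary~\ref{cor: displ op} with $H\to -H_9$, cf.\ Remark~\ref{rem: Corollary displ Gaussian inverse}), giving
\[ -\tfrac14 l^{T}i\Omega(-W_9)l=\tfrac14 s^{T}i\Omega\big(\exp[i\Omega H_9]-\exp[-i\Omega H_9]\big)s. \]
Next, using the transpose identity $(\exp[i\Omega H_9]-I)^{T}i\Omega=i\Omega(\exp[-i\Omega H_9]-I)$ (a consequence of $\Omega^{T}=-\Omega$, the functional relation $Mf(L)M^{-1}=f(MLM^{-1})$, and $(i\Omega)^2=I$), I would write $\tfrac14 l^{T}i\Omega W_{11}l=\tfrac14 s^{T}i\Omega(\exp[-i\Omega H_9]-I)W_{11}(\exp[i\Omega H_9]-I)s$ and add the two contributions.

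Then I would express everything through $W_9$ by means of \eqref{eq:W-to-H}, which yields $\exp[i\Omega H_9]-\exp[-i\Omega H_9]=-4W_9(W_9^2-I)^{-1}$, $\exp[-i\Omega H_9]-I=2(W_9-I)^{-1}$, and $\exp[i\Omega H_9]-I=-2(W_9+I)^{-1}$. For $W_{11}$ I would invoke \eqref{eq:use-later-sqrt-sand-last} applied to the composition $\exp(i\Omega H_{11})=\exp(i\Omega H_{10})\exp(-i\Omega H_9)$ (the inverse of the defining relation), which in the dictionary $W_1\to -W_9$, $W_2\to W_{10}$, $W_3\to W_{11}$ reads $W_{11}=-W_9-(W_9-I)(W_{10}-W_9)^{-1}(W_9+I)$. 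Substituting this into $-4(W_9-I)^{-1}W_{11}(W_9+I)^{-1}$, the diagonal piece $-W_9$ contributes $4W_9(W_9^2-I)^{-1}$, which cancels the term coming from $\exp[i\Omega H_9]-\exp[-i\Omega H_9]$, and the remaining piece collapses by associativity to $4(W_{10}-W_9)^{-1}$. Hence the full expression equals $s^{T}i\Omega(W_{10}-W_9)^{-1}s$. Finally \eqref{eq:def-W} gives $W_{10}-W_9=(V_9-V_{10})i\Omega$, so $i\Omega(W_{10}-W_9)^{-1}=(V_9-V_{10})^{-1}$, and \eqref{eq: lemma with means-1} follows. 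The hypothesis $V_9>V_{10}$ is precisely what makes $V_9-V_{10}$ invertible (positive definite), so that the right-hand side is well-defined; it also guarantees, as in Proposition~\ref{prop:inverse-sandwiched-by-sqrt}, that $H_{11}$ is a legitimate positive-definite Hamiltonian matrix.

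I expect the only delicate point to be the bookkeeping in the cancellation: because $W_{11}$ itself depends on $W_9$, one must check carefully that the $W_9$-only terms arising from $\exp[i\Omega H_9]-\exp[-i\Omega H_9]$ and from the diagonal part of $W_{11}$ actually cancel, and that the orderings—which differ slightly from those in the proof of Lemma~\ref{lem: displ term} because the $\pm i\Omega H_9$ factors swap roles under the transpose identities—are handled correctly. This is, however, routine matrix algebra carried out within the commutative algebra generated by $W_9$ together with plain associativity, so that $W_9$ and $W_{10}$ are never required to commute.
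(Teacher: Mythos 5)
Your proposal is correct and follows essentially the same route as the paper: the paper's proof simply says to rerun the proof of Lemma~\ref{lem: displ term} under the substitution $H_{9}\to-H_{9}$ (hence $W_{9}\to-W_{9}$, $V_{9}\to-V_{9}$), which turns $-s^{T}(V_{9}+V_{10})^{-1}s$ into $s^{T}(V_{9}-V_{10})^{-1}s$, with $V_{9}>V_{10}$ invoked only to guarantee invertibility of $V_{9}-V_{10}$. You carry out exactly this substitution, just with the intermediate identities (the sign-flipped version of \eqref{eq: expression K}, the transposition step, and the composition rule \eqref{eq:use-later-sqrt-sand-last} with $W_{1}\to-W_{9}$) written out explicitly, and the algebra checks out.
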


\begin{proof}
This amounts to reexamining the proof of Lemma~\ref{lem: displ term},
and noting that, from the discussion around \eqref{eq:minus-H-to-minus-W},
$W\rightarrow-W$ and $V\rightarrow-V$ when $H\rightarrow-H,$ i.e., under the
inverse of an exponential quadratic form. This implies that $-s^{T}\left(
V_{9}+V_{10}\right)  ^{-1}s\rightarrow-s^{T}\left(  -V_{9}+V_{10}\right)
^{-1}s=s^{T}\left(  V_{9}-V_{10}\right)  ^{-1}s$. The condition $V_{9}>V_{10}$
suffices to guarantee that the matrix $V_{9}-V_{10}$ is invertible, which is
used throughout the calculations in the proof of Lemma~\ref{lem: displ term}.
\end{proof}

\section{Petz--R\'{e}nyi relative entropy}

\label{sec:Petz-Renyi}

We now determine a formula for the Petz--R\'{e}nyi relative entropy
\cite{P86}\ of two Gaussian states $\rho$ and $\sigma$. The Petz--R\'{e}nyi
relative entropy is defined for $\alpha\in(0,1)\cup(1,\infty)$ as%
\begin{equation}
D_{\alpha}(\rho\Vert\sigma)\equiv\frac{1}{\alpha-1}\ln Q_{\alpha}(\rho
\Vert\sigma),
\end{equation}
where $Q_{\alpha}(\rho\Vert\sigma)$ denotes the Petz--R\'{e}nyi relative
quasi-entropy:%
\begin{equation}
Q_{\alpha}(\rho\Vert\sigma)\equiv\operatorname{Tr}\left\{  \rho^{\alpha}%
\sigma^{1-\alpha}\right\}  . \label{eq:Petz-quasi}%
\end{equation}
We first consider the case when $\alpha\in(0,1)$, and then we move on to the
case when $\alpha\in(1,\infty)$.

A formula for the Petz--R\'{e}nyi relative entropy was previously given in
Ref.~\onlinecite{PL08} for $\alpha\in(0,1)$. The formula given in
Theorem~\ref{thm:petz-renyi-alpha<1}\ below is expressed differently from
the one given in Ref.~\onlinecite{PL08} because it depends directly on the covariance
matrices of the states involved.

\begin{theorem}
\label{thm:petz-renyi-alpha<1}
Let $\alpha\in(0,1)$, and let $\rho$ and $\sigma$ denote two
Gaussian states.
Then the Petz--R\'{e}nyi relative quasi-entropy
$Q_{\alpha}(\rho\Vert\sigma)$ as defined in \eqref{eq:Petz-quasi}  is given by%
\begin{equation}
Q_{\alpha}(\rho\Vert\sigma)=\frac{Z_{\rho(\alpha)}Z_{\sigma(1-\alpha)}%
}{Z_{\rho}^{\alpha}Z_{\sigma}^{1-\alpha}\left[  \det\left(  \left[
V_{\rho(\alpha)}+V_{\sigma(1-\alpha)}\right]  /2\right)  \right]  ^{1/2}}%
\exp\left\{  -\delta s^{T}\left(  V_{\rho(\alpha)}+V_{\sigma(1-\alpha
)}\right)  ^{-1}\delta s\right\}  , \label{eq:petz-renyi-alpha<1}%
\end{equation}
where%
\begin{align}
V_{\rho(\alpha)}  &  =\frac{\left(  I+\left(  V_{\rho}i\Omega\right)
^{-1}\right)  ^{\alpha}+\left(  I-\left(  V_{\rho}i\Omega\right)
^{-1}\right)  ^{\alpha}}{\left(  I+\left(  V_{\rho}i\Omega\right)
^{-1}\right)  ^{\alpha}-\left(  I-\left(  V_{\rho}i\Omega\right)
^{-1}\right)  ^{\alpha}}i\Omega,\label{eq:CM-rho-to-alpha}\\
V_{\sigma(1-\alpha)}  &  =\frac{\left(  I+\left(  V_{\sigma}i\Omega\right)
^{-1}\right)  ^{1-\alpha}+\left(  I-\left(  V_{\sigma}i\Omega\right)
^{-1}\right)  ^{1-\alpha}}{\left(  I+\left(  V_{\sigma}i\Omega\right)
^{-1}\right)  ^{1-\alpha}-\left(  I-\left(  V_{\sigma}i\Omega\right)
^{-1}\right)  ^{1-\alpha}}i\Omega,\label{eq:CM-sigma-to-1-alpha}\\
Z_{\rho(\alpha)}  &  =\sqrt{\det(\left[  V_{\rho(\alpha)}+i\Omega\right]
/2)},\\
Z_{\sigma(1-\alpha)}  &  =\sqrt{\det(\left[  V_{\sigma(1-\alpha)}%
+i\Omega\right]  /2)},\\
\delta s  &  =s_{\rho}-s_{\sigma}.
\end{align}

\end{theorem}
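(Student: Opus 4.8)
The plan is to reduce the quasi-entropy $Q_\alpha(\rho\Vert\sigma)=\operatorname{Tr}\{\rho^\alpha\sigma^{1-\alpha}\}$ to a trace of a product of displaced Gaussian exponential forms and then to assemble the answer from the composition and displacement results of Section~\ref{sec:computations-guassian}. Since $\alpha\in(0,1)$, both $H_1\equiv\alpha H_\rho$ and $H_2\equiv(1-\alpha)H_\sigma$ are positive-definite real matrices, so $\rho^\alpha=Z_\rho^{-\alpha}\exp[-\tfrac{1}{2}(\hat x-s_\rho)^T H_1(\hat x-s_\rho)]$ and $\sigma^{1-\alpha}=Z_\sigma^{-(1-\alpha)}\exp[-\tfrac{1}{2}(\hat x-s_\sigma)^T H_2(\hat x-s_\sigma)]$ are again (unnormalized) Gaussian exponential forms; this positivity, which is precisely what fails when $\alpha>1$, is what makes the present case accessible and is the reason a different argument is needed in Theorem~\ref{prop:petz-renyi-alpha>1}. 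By Proposition~\ref{prop:rho-to-alpha}, the covariance matrices attached to $H_1$ and $H_2$ are exactly $V_{\rho(\alpha)}$ and $V_{\sigma(1-\alpha)}$ as written in \eqref{eq:CM-rho-to-alpha}--\eqref{eq:CM-sigma-to-1-alpha}.

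First I would use $D(-s)\hat x D(s)=\hat x-s$ from Proposition~\ref{prop: displ op properties} to write each displaced form as $D(-s)\,\exp[-\tfrac{1}{2}\hat x^T H\hat x]\,D(s)$, insert the two expressions into the trace, and collapse the four displacement operators: cyclicity of the trace plus the product rule \eqref{eq: prod of displs} leaves $Z_\rho^{-\alpha}Z_\sigma^{-(1-\alpha)}\operatorname{Tr}\{D(-\delta s)\,G_1\,D(\delta s)\,G_2\}$ with $\delta s=s_\rho-s_\sigma$ and $G_j\equiv\exp[-\tfrac{1}{2}\hat x^T H_j\hat x]$, after a one-line check that the residual phases cancel because $s_\sigma^T i\Omega s_\rho=-s_\rho^T i\Omega s_\sigma$. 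Next, \eqref{eq: displ gauss 1} moves $D(\delta s)$ through $G_1$, and combining it with $D(-\delta s)$ produces a single displacement $D(l)$ with $l=(\exp[-i\Omega H_1]-I)\delta s$ together with an explicit Gaussian phase factor. Now the composition rule (Proposition~\ref{prop:product-rule-Gaussian}) gives $G_1G_2=G_3\equiv\exp[-\tfrac{1}{2}\hat x^T H_3\hat x]$ with $\exp[-i\Omega H_1]\exp[-i\Omega H_2]=\exp[-i\Omega H_3]$, Proposition~\ref{prop:simple-product-normalization} supplies $\operatorname{Tr}\{G_3\}=[\det([V_{\rho(\alpha)}+i\Omega]/2)\det([V_{\sigma(1-\alpha)}+i\Omega]/2)/\det([V_{\rho(\alpha)}+V_{\sigma(1-\alpha)}]/2)]^{1/2}$, and \eqref{eq: displ gauss 2} with cyclicity reduces $\operatorname{Tr}\{D(l)G_3\}$ to $\operatorname{Tr}\{G_3\}\,e^{l^T i\Omega W_3 l/4}$. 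Collecting the normalization constants already reproduces the prefactor $Z_{\rho(\alpha)}Z_{\sigma(1-\alpha)}/(Z_\rho^\alpha Z_\sigma^{1-\alpha}[\det([V_{\rho(\alpha)}+V_{\sigma(1-\alpha)}]/2)]^{1/2})$ in \eqref{eq:petz-renyi-alpha<1}.

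The main work — and the step I expect to require the most care — is to show that the two leftover phases combine into the advertised displacement term. Using \eqref{eq: expression K} of Corollary~\ref{cor: displ op} with $H=H_1$ and $s=\delta s$ rewrites the phase from \eqref{eq: displ gauss 1} as $-l^T i\Omega W_1 l/4$ (here $W_1=W_{\rho(\alpha)}$), so the total exponent becomes $-l^T i\Omega W_1 l/4+l^T i\Omega W_3 l/4$. Because $\exp[-i\Omega H_1]\exp[-i\Omega H_2]=\exp[-i\Omega H_3]$, this is exactly the left-hand side of \eqref{eq: lemma with means} in Lemma~\ref{lem: displ term} under the identifications $H_9=H_1$, $H_{10}=H_2$, $H_{11}=H_3$; that lemma then evaluates it to $-\delta s^T(V_{\rho(\alpha)}+V_{\sigma(1-\alpha)})^{-1}\delta s$, finishing the proof. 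The remaining subtleties are routine: keeping track of the signs and of the Gaussian phase factors generated by \eqref{eq: prod of displs} and \eqref{eq: displ gauss 2}, and noting that the matrices appearing as inverses ($\exp[-i\Omega H_3]-I$ and $V_{\rho(\alpha)}+V_{\sigma(1-\alpha)}$) are invertible because $H_1,H_2>0$ and $V_{\rho(\alpha)}$, $V_{\sigma(1-\alpha)}$ are legitimate covariance matrices.
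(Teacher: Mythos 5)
Your proposal is correct and follows essentially the same route as the paper: write each state as a displaced zero-mean Gaussian form, collapse the displacements to a single $D(l)$ via Corollary~\ref{cor: displ op}, evaluate the leftover phases with Lemma~\ref{lem: displ term} (with $H_{9}=\alpha H_{\rho}$, $H_{10}=(1-\alpha)H_{\sigma}$), and compute the zero-mean trace via Propositions~\ref{prop:rho-to-alpha} and~\ref{prop:simple-product-normalization}. The only differences (explicitly forming $G_{1}G_{2}=G_{3}$ via Proposition~\ref{prop:product-rule-Gaussian} before applying \eqref{eq: displ gauss 2}, and rederiving the first phase from \eqref{eq: displ gauss 1} and \eqref{eq: expression K}) are cosmetic rearrangements of the same steps.
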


\begin{proof}
Let $\rho_{0}$ and $\sigma_{0}$ denote the following operators:%
\begin{equation}
\rho_{0}=\exp\left[  -\frac{1}{2}\hat{x}^{T}H_{\rho}\hat{x}\right]
,\qquad\sigma_{0}=\exp\left[  -\frac{1}{2}\hat{x}^{T}H_{\sigma}\hat{x}\right]
.
\end{equation}
Consider that%
\begin{align}
\operatorname{Tr}\{\rho^{\alpha}\sigma^{1-\alpha}\}  &  =\operatorname{Tr}%
\left\{  \left[  D(-s_{\rho})\left(  \frac{\rho_{0}}{Z_{\rho}}\right)
D(s_{\rho})\right]  ^{\alpha}\left[  D(-s_{\sigma})\left(  \frac{\sigma_{0}%
}{Z_{\sigma}}\right)  D(s_{\sigma})\right]  ^{1-\alpha}\right\} \\
&  =\frac{1}{Z_{\rho}^{\alpha}Z_{\sigma}^{1-\alpha}}\operatorname{Tr}\left\{
D(-s_{\rho})\left(  \rho_{0}\right)  ^{\alpha}D(s_{\rho})D(-s_{\sigma})\left(
\sigma_{0}\right)  ^{1-\alpha}D(s_{\sigma})\right\} \\
&  =\frac{1}{Z_{\rho}^{\alpha}Z_{\sigma}^{1-\alpha}}\operatorname{Tr}\left\{
D(-\delta s)\left(  \rho_{0}\right)  ^{\alpha}D\left(  \delta s\right)
\left(  \sigma_{0}\right)  ^{1-\alpha}\right\}  ,
\end{align}
where $\delta s=s_{\rho}-s_{\sigma}$. Using \eqref{eq: displ gauss 2} of
Corollary~\ref{cor: displ op} for $D(-\delta s)\left(  \rho_{0}\right)
^{\alpha}D\left(  \delta s\right)  $, we have that%
\begin{equation}
\left(  \frac{1}{Z_{\rho}^{\alpha}Z_{\sigma}^{1-\alpha}}\right)
^{-1}\operatorname{Tr}\{\rho^{\alpha}\sigma^{1-\alpha}\}=e^{-\frac{1}{4}%
l^{T}i\Omega W_{\rho\left(  \alpha\right)  }l}\operatorname{Tr}\{D(l)\left(
\rho_{0}\right)  ^{\alpha}\left(  \sigma_{0}\right)  ^{1-\alpha}\},
\end{equation}
where $W_{\rho\left(  \alpha\right)  }$ is related to $H_{\rho\left(
\alpha\right)  }=\alpha H_{\rho}$ by \eqref{eq:H-to-W}, and $l$ is given by
\begin{equation}
l=\left(  \exp\left[  -i\Omega\alpha H_{\rho}\right]  -I\right)  \delta s.
\end{equation}
Using \eqref{eq: displ gauss 2} of Corollary \ref{cor: displ op} once again,
we get that%
\begin{align}
\left(  \frac{1}{Z_{\rho}^{\alpha}Z_{\sigma}^{1-\alpha}}\right)
^{-1}\operatorname{Tr}\{\rho^{\alpha}\sigma^{1-\alpha}\}  &  =e^{-\frac{1}%
{4}l^{T}i\Omega W_{\rho\left(  \alpha\right)  }l}e^{\frac{1}{4}l^{T}i\Omega
W_{\xi}l}\operatorname{Tr}\{D(-t)\left(  \rho_{0}\right)  ^{\alpha}\left(
\sigma_{0}\right)  ^{1-\alpha}D\left(  t\right)  \}\\
&  =e^{-\frac{1}{4}l^{T}i\Omega W_{\rho\left(  \alpha\right)  }l}e^{\frac
{1}{4}l^{T}i\Omega W_{\xi}l}\operatorname{Tr}\{\left(  \rho_{0}\right)
^{\alpha}\left(  \sigma_{0}\right)  ^{1-\alpha}\},
\end{align}
where%
\begin{equation}
W_{\xi}=\frac{I+\exp\left[  i\Omega\left(  1-\alpha\right)  H_{\sigma}\right]
\exp\left[  i\Omega\alpha H_{\rho}\right]  }{I-\exp\left[  i\Omega\left(
1-\alpha\right)  H_{\sigma}\right]  \exp\left[  i\Omega\alpha H_{\rho}\right]
},
\end{equation}
and we have used
\begin{equation}
t=\left(  \exp\left[  -i\Omega\alpha H_{\rho}\right]  \exp\left[
-i\Omega\left(  1-\alpha\right)  H_{\sigma}\right]  -I\right)  ^{-1}\left(
\exp\left[  -i\Omega\alpha H_{\rho}\right]  -I\right)  \delta s.
\end{equation}
Note that the particular value of $t$ is irrelevant because the operators
$D\left(  t\right)  $ and $D(-t)$ cancel each other in the trace operation.
Applying Lemma~\ref{lem: displ term} with $H_{9}=\alpha H_{\rho}%
=H_{\rho\left(  \alpha\right)  }$ and $H_{10}=\left(  1-\alpha\right)
H_{\sigma}=H_{\sigma\left(  1-\alpha\right)  },$ we see that
\begin{equation}
e^{-\frac{1}{4}l^{T}i\Omega W_{\rho\left(  \alpha\right)  }l}e^{\frac{1}%
{4}l^{T}i\Omega W_{\xi}l}=\exp\left[  -\delta s^{T}\left(  V_{\rho\left(
\alpha\right)  }+V_{\sigma\left(  1-\alpha\right)  }\right)  ^{-1}\delta
s\right]  . \label{eq: Petz renyi prefactor a<1}%
\end{equation}

What remains now is to evaluate
\begin{equation}
\operatorname{Tr}\{\left(  \rho_{0}\right)  ^{\alpha}\left(  \sigma
_{0}\right)  ^{1-\alpha}\}=\operatorname{Tr}\left\{  \exp\left[  -\frac{1}%
{2}\hat{x}^{T}\left(  \alpha H_{\rho}\right)  \hat{x}\right]  \exp\left[
-\frac{1}{2}\hat{x}^{T}\left(  \left(  1-\alpha\right)  H_{\sigma}\right)
\hat{x}\right]  \right\}  .
\end{equation}
By Proposition~\ref{prop:rho-to-alpha}, the covariance matrix corresponding to
$\alpha H_{\rho}$ is given in \eqref{eq:CM-rho-to-alpha}, and the covariance
matrix corresponding to $\left(  1-\alpha\right)  H_{\sigma}$ is given in
\eqref{eq:CM-sigma-to-1-alpha}. We now apply
Proposition~\ref{prop:simple-product-normalization} to conclude that%
\begin{equation}
\operatorname{Tr}\left\{  \exp\left[  -\frac{1}{2}\hat{x}^{T}\left(  \alpha
H_{\rho}\right)  \hat{x}\right]  \exp\left[  -\frac{1}{2}\hat{x}^{T}\left(
\left(  1-\alpha\right)  H_{\sigma}\right)  \hat{x}\right]  \right\}
=\frac{Z_{\rho(\alpha)}Z_{\sigma(1-\alpha)}}{\left[  \det\left(  \left[
V_{\rho(\alpha)}+V_{\sigma(1-\alpha)}\right]  /2\right)  \right]  ^{1/2}}.
\end{equation}
This concludes the proof.
\end{proof}

\begin{theorem}
\label{prop:petz-renyi-alpha>1}
Let $\alpha\in(1,\infty)$, and let 
$\rho$ and $\sigma$ denote
two
Gaussian states such that
$V_{\sigma(\alpha-1)}>V_{\rho(\alpha)}$.
Then the Petz--R\'{e}nyi relative quasi-entropy
$Q_{\alpha}(\rho\Vert\sigma)$ as  defined in \eqref{eq:Petz-quasi} is given by%
\begin{equation}
Q_{\alpha}(\rho\Vert\sigma)=\frac{Z_{\sigma}^{\alpha-1}}{Z_{\rho}^{\alpha}%
}\frac{Z_{\rho(\alpha)}Z_{\sigma(\alpha-1)}}{\left[  \det\left(  \left[
V_{\sigma(\alpha-1)}-V_{\rho(\alpha)}\right]  /2\right)  \right]  ^{1/2}}%
\exp\left[  \delta s^{T}\left(  V_{\sigma\left(  \alpha-1\right)  }%
-V_{\rho\left(  \alpha\right)  }\right)  ^{-1}\delta s\right]  ,
\label{eq:petz-renyi-alpha>1}%
\end{equation}
where%
\begin{align}
V_{\rho(\alpha)}  &  =\frac{\left(  I+\left(  V_{\rho}i\Omega\right)
^{-1}\right)  ^{\alpha}+\left(  I-\left(  V_{\rho}i\Omega\right)
^{-1}\right)  ^{\alpha}}{\left(  I+\left(  V_{\rho}i\Omega\right)
^{-1}\right)  ^{\alpha}-\left(  I-\left(  V_{\rho}i\Omega\right)
^{-1}\right)  ^{\alpha}}i\Omega,\label{eq:CM-rho-alpha-10}\\
V_{\sigma(\alpha-1)}  &  =\frac{\left(  I+\left(  V_{\sigma}i\Omega\right)
^{-1}\right)  ^{\alpha-1}+\left(  I-\left(  V_{\sigma}i\Omega\right)
^{-1}\right)  ^{\alpha-1}}{\left(  I+\left(  V_{\sigma}i\Omega\right)
^{-1}\right)  ^{\alpha-1}-\left(  I-\left(  V_{\sigma}i\Omega\right)
^{-1}\right)  ^{\alpha-1}}i\Omega,\label{eq:CM-sigma-alpha-1-10}\\
Z_{\rho(\alpha)}  &  =\sqrt{\det(\left[  V_{\rho(\alpha)}+i\Omega\right]
/2)},\\
Z_{\sigma(\alpha-1)}  &  =\sqrt{\det(\left[  V_{\sigma(\alpha-1)}%
+i\Omega\right]  /2)},\\
\delta s  &  =s_{\rho}-s_{\sigma}.
\end{align}

\end{theorem}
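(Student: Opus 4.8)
The plan is to mirror the proof of Theorem~\ref{thm:petz-renyi-alpha<1}, but now exploiting the fact that for $\alpha>1$ the exponent $1-\alpha$ is negative, so that $\sigma^{1-\alpha}$ corresponds to an exponential quadratic form with a \emph{negative}-definite Hamiltonian, i.e.\ an operator of the type $\exp[-\tfrac12\hat x^T(-H)\hat x]$ already studied in Proposition~\ref{prop:inverse-sandwiched-by-sqrt} and Lemma~\ref{lem: displ term with inverse}. Concretely, I would first write $\rho=D(-s_\rho)(\rho_0/Z_\rho)D(s_\rho)$ and $\sigma=D(-s_\sigma)(\sigma_0/Z_\sigma)D(s_\sigma)$ with $\rho_0=\exp[-\tfrac12\hat x^TH_\rho\hat x]$, $\sigma_0=\exp[-\tfrac12\hat x^TH_\sigma\hat x]$, and then compute
\begin{equation}
\operatorname{Tr}\{\rho^\alpha\sigma^{1-\alpha}\}=\frac{1}{Z_\rho^\alpha Z_\sigma^{1-\alpha}}\operatorname{Tr}\{D(-\delta s)(\rho_0)^\alpha D(\delta s)(\sigma_0)^{1-\alpha}\},
\end{equation}
using $D(s_\rho)D(-s_\sigma)=D(\delta s)$ up to a phase that cancels, exactly as before. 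Note $(\sigma_0)^{1-\alpha}=\exp[-\tfrac12\hat x^T(-( \alpha-1)H_\sigma)\hat x]$, so this is the inverse-type form with $H_\sigma(\alpha-1)=(\alpha-1)H_\sigma$ playing the role of $H_5$.

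Next I would apply Corollary~\ref{cor: displ op} (in the $H>0$ form of \eqref{eq: displ gauss 2}) to $D(-\delta s)(\rho_0)^\alpha D(\delta s)$, producing a factor $e^{-\frac14 l^T i\Omega W_{\rho(\alpha)}l}$ with $l=(\exp[-i\Omega\alpha H_\rho]-I)\delta s$ and $D(l)$ remaining, and then apply it once more to sweep the $D(l)$ past $(\rho_0)^\alpha(\sigma_0)^{1-\alpha}$, which now contributes $W_\xi$ built from $\exp[i\Omega\alpha H_\rho]$ and $\exp[-i\Omega(\alpha-1)H_\sigma]$ (the sign flip coming from $\sigma^{1-\alpha}$). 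The residual displacement operators $D(t),D(-t)$ cancel inside the trace. The phase factors then combine via Lemma~\ref{lem: displ term with inverse} with $H_9=\alpha H_\rho=H_{\rho(\alpha)}$ and $H_{10}=(\alpha-1)H_\sigma=H_{\sigma(\alpha-1)}$ — here the hypothesis $V_{\sigma(\alpha-1)}-V_{\rho(\alpha)}>0$ is exactly the condition $V_9>V_{10}$ needed to invoke that lemma — yielding $e^{\delta s^T(V_{\sigma(\alpha-1)}-V_{\rho(\alpha)})^{-1}\delta s}$.

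It remains to evaluate the zero-mean trace $\operatorname{Tr}\{(\rho_0)^\alpha(\sigma_0)^{1-\alpha}\}=\operatorname{Tr}\{\exp[-\tfrac12\hat x^T(\alpha H_\rho)\hat x]\exp[-\tfrac12\hat x^T(-(\alpha-1)H_\sigma)\hat x]\}$. By Proposition~\ref{prop:rho-to-alpha}, the covariance matrices associated with $\alpha H_\rho$ and $(\alpha-1)H_\sigma$ are $V_{\rho(\alpha)}$ and $V_{\sigma(\alpha-1)}$ as in \eqref{eq:CM-rho-alpha-10}--\eqref{eq:CM-sigma-alpha-1-10}; then Proposition~\ref{prop:inverse-sandwiched-by-sqrt}, specifically \eqref{eq:alt-normalization} with $(V_4,V_5)=(V_{\rho(\alpha)},V_{\sigma(\alpha-1)})$ and the standing assumption $V_{\sigma(\alpha-1)}-V_{\rho(\alpha)}>0$, gives
\begin{equation}
\operatorname{Tr}\{(\rho_0)^\alpha(\sigma_0)^{1-\alpha}\}=\sqrt{\frac{\det([V_{\rho(\alpha)}+i\Omega]/2)\det([V_{\sigma(\alpha-1)}+i\Omega]/2)}{\det([V_{\sigma(\alpha-1)}-V_{\rho(\alpha)}]/2)}}=\frac{Z_{\rho(\alpha)}Z_{\sigma(\alpha-1)}}{[\det([V_{\sigma(\alpha-1)}-V_{\rho(\alpha)}]/2)]^{1/2}}.
\end{equation}
Collecting the prefactor $1/(Z_\rho^\alpha Z_\sigma^{1-\alpha})=Z_\sigma^{\alpha-1}/Z_\rho^\alpha$, the displacement phase, and this trace yields \eqref{eq:petz-renyi-alpha>1}. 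The main obstacle I anticipate is \emph{bookkeeping of signs}: one must be careful that Proposition~\ref{prop:inverse-sandwiched-by-sqrt} is stated with the sandwiching square-root form $\exp[-\tfrac12\hat x^T(H_4/2)\hat x]$, so to use it for the product $\rho^\alpha\sigma^{1-\alpha}$ one first symmetrizes under the trace via cyclicity (as in the proof of Proposition~\ref{prop:simple-product-normalization}), and one must track that the transformation $H\to -H$ induces $W\to -W$, $V\to -V$ consistently through both the displacement-phase computation and the normalization computation, so that the domain condition $V_{\sigma(\alpha-1)}-V_{\rho(\alpha)}>0$ appears in the same place in both threads.
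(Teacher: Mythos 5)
Your overall architecture matches the paper's: split off the mean-vector contribution with the displacement identities of Corollary~\ref{cor: displ op}, cancel the residual $D(\pm t)$ under the trace, and evaluate the zero-mean trace via Propositions~\ref{prop:rho-to-alpha} and \ref{prop:inverse-sandwiched-by-sqrt}; your use of \eqref{eq:alt-normalization} and the prefactor bookkeeping $1/(Z_\rho^\alpha Z_\sigma^{1-\alpha})=Z_\sigma^{\alpha-1}/Z_\rho^\alpha$ are correct. The gap is in the phase-combination step. In your configuration the displacements are absorbed into the factor with the \emph{positive} Hamiltonian $\alpha H_\rho$, so $l=(\exp[-i\Omega\,\alpha H_\rho]-I)\delta s$ and the first phase carries $-\tfrac14 l^T i\Omega W_{\rho(\alpha)} l$, while the negated Hamiltonian $-(\alpha-1)H_\sigma$ enters only through the composite $W_\xi$. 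Lemma~\ref{lem: displ term with inverse} does not cover this situation: there the negated Hamiltonian is $H_9$ itself, $l=(\exp[+i\Omega H_9]-I)s$, and the first phase appears with $+\tfrac14 l^T i\Omega W_9 l$. Moreover, with your assignment $H_9=\alpha H_\rho$, $H_{10}=(\alpha-1)H_\sigma$, the lemma's hypothesis $V_9>V_{10}$ reads $V_{\rho(\alpha)}>V_{\sigma(\alpha-1)}$ --- the \emph{reverse} of the theorem's assumption, contrary to what you assert --- and its conclusion would give $\delta s^T\left(V_{\rho(\alpha)}-V_{\sigma(\alpha-1)}\right)^{-1}\delta s$, i.e.\ an exponent of the wrong sign. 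So as written the middle step does not go through.

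Two repairs are available. Either (i) do what the paper does: write $\operatorname{Tr}\{\rho^\alpha\sigma^{1-\alpha}\}=\frac{Z_\sigma^{\alpha-1}}{Z_\rho^\alpha}\operatorname{Tr}\{D(\delta s)\left[(\sigma_0)^{\alpha-1}\right]^{-1}D(-\delta s)\,(\rho_0)^\alpha\}$ and absorb the displacements into the inverse factor using Remark~\ref{rem: Corollary displ Gaussian inverse}; then Lemma~\ref{lem: displ term with inverse} applies verbatim with $H_9=H_{\sigma(\alpha-1)}$ and $H_{10}=H_{\rho(\alpha)}$, its hypothesis $V_9>V_{10}$ is exactly $V_{\sigma(\alpha-1)}-V_{\rho(\alpha)}>0$, and the exponent comes out as $+\,\delta s^T\left(V_{\sigma(\alpha-1)}-V_{\rho(\alpha)}\right)^{-1}\delta s$. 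Or (ii) keep your configuration but prove the missing variant of Lemma~\ref{lem: displ term} in which the \emph{second} Hamiltonian is negated ($H_{10}\to-H_{10}$, hence $W_{10}\to-W_{10}$ and $V_{10}\to-V_{10}$ by \eqref{eq:minus-H-to-minus-W}); propagating that sign through the proof of Lemma~\ref{lem: displ term} turns $-s^T(V_9+V_{10})^{-1}s$ into $-\delta s^T\left(V_{\rho(\alpha)}-V_{\sigma(\alpha-1)}\right)^{-1}\delta s=+\,\delta s^T\left(V_{\sigma(\alpha-1)}-V_{\rho(\alpha)}\right)^{-1}\delta s$, with the invertibility of $V_{\sigma(\alpha-1)}-V_{\rho(\alpha)}$ supplied by the standing hypothesis. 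With either fix, the rest of your argument (trace cancellation of $D(\pm t)$, symmetrization under the trace, and the zero-mean normalization) stands.
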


\begin{proof}
Let $\rho_{0}$ and $\sigma_{0}$ denote the following operators:%
\begin{equation}
\rho_{0}=\exp\left[  -\frac{1}{2}\hat{x}^{T}H_{\rho}\hat{x}\right]
,\qquad\sigma_{0}=\exp\left[  -\frac{1}{2}\hat{x}^{T}H_{\sigma}\hat{x}\right]
.
\end{equation}
Consider that%
\begin{align}
\operatorname{Tr}\{\rho^{\alpha}\sigma^{1-\alpha}\}  &  =\operatorname{Tr}%
\{\rho^{\alpha}\left[  \sigma^{\alpha-1}\right]  ^{-1}\}=\operatorname{Tr}%
\{\left[  \sigma^{\alpha-1}\right]  ^{-1}\rho^{\alpha}\}\\
&  =\operatorname{Tr}\left\{  \left[  \left[  D(-s_{\sigma})\left(
\frac{\sigma_{0}}{Z_{\sigma}}\right)  D(s_{\sigma})\right]  ^{\alpha
-1}\right]  ^{-1}\left[  D(-s_{\rho})\left(  \frac{\rho_{0}}{Z_{\rho}}\right)
D(s_{\rho})\right]  ^{\alpha}\right\} \\
&  =\frac{Z_{\sigma}^{\alpha-1}}{Z_{\rho}^{\alpha}}\operatorname{Tr}\left\{
D(-s_{\sigma})\left[  \left(  \sigma_{0}\right)  ^{\alpha-1}\right]
^{-1}D(s_{\sigma})D(-s_{\rho})\left(  \rho_{0}\right)  ^{\alpha}D(s_{\rho
})\right\} \\
&  =\frac{Z_{\sigma}^{\alpha-1}}{Z_{\rho}^{\alpha}}\operatorname{Tr}\left\{
D(\delta s)\left[  \left(  \sigma_{0}\right)  ^{\alpha-1}\right]
^{-1}D\left(  -\delta s\right)  \left(  \rho_{0}\right)  ^{\alpha}\right\}  .
\end{align}
Using steps similar to those in the proof of
Theorem~\ref{thm:petz-renyi-alpha<1}, and based on
Remark~\ref{rem: Corollary displ Gaussian inverse}, we arrive at
\begin{equation}
\left(  \frac{Z_{\sigma}^{\alpha-1}}{Z_{\rho}^{\alpha}}\right)  ^{-1}%
\operatorname{Tr}\{\rho^{\alpha}\sigma^{1-\alpha}\}=e^{\frac{1}{4}l^{T}i\Omega
W_{\sigma\left(  \alpha-1\right)  }l}e^{\frac{1}{4}l^{T}i\Omega W_{\xi
^{\prime}}l}\operatorname{Tr}\left\{  \left(  \rho_{0}\right)  ^{\alpha
}\left[  \left(  \sigma_{0}\right)  ^{\alpha-1}\right]  ^{-1}\right\}  ,
\end{equation}
where
\begin{align}
W_{\xi^{\prime}}  &  =\frac{I+\exp\left[  i\Omega\alpha H_{\rho}\right]
\exp\left[  -i\Omega\left(  \alpha-1\right)  H_{\sigma}\right]  }%
{I-\exp\left[  i\Omega\alpha H_{\rho}\right]  \exp\left[  -i\Omega\left(
\alpha-1\right)  H_{\sigma}\right]  },\\
l  &  =\left(  \exp\left[  i\Omega\left(  \alpha-1\right)  H_{\sigma}\right]
-I\right)  \left(  -\delta s\right)  ,
\end{align}
and $W_{\sigma\left(  \alpha-1\right)  }$ is related to the operator
$H_{\sigma\left(  \alpha-1\right)  }=\left(  \alpha-1\right)  H_{\sigma}$ by
\eqref{eq:H-to-W}. Using Lemma~\ref{lem: displ term with inverse} with
$H_{9}=\left(  \alpha-1\right)  H_{\sigma}=H_{\sigma\left(  \alpha-1\right)
}$ and $H_{10}=\alpha H_{\rho}=H_{\rho\left(  \alpha\right)  }$, we then have
that
\begin{equation}
e^{\frac{1}{4}l^{T}i\Omega W_{\sigma\left(  \alpha-1\right)  }l}e^{\frac{1}%
{4}l^{T}i\Omega W_{\xi^{\prime}}l}=\exp\left[  \delta s^{T}\left(
V_{\sigma\left(  \alpha-1\right)  }-V_{\rho\left(  \alpha\right)  }\right)
^{-1}\delta s\right]  .
\end{equation}

To finish the proof, consider that%
\begin{align}
&  \operatorname{Tr}\{\rho_{0}^{\alpha}\sigma_{0}^{1-\alpha}\}\nonumber\\
&  =\operatorname{Tr}\{\rho_{0}^{\alpha/2}\left[  \sigma_{0}^{\alpha
-1}\right]  ^{-1}\rho_{0}^{\alpha/2}\}\\
&  =\operatorname{Tr}\left\{  \left[  \exp\left[  -\frac{1}{2}\hat{x}%
^{T}H_{\rho}\hat{x}\right]  \right]  ^{\alpha/2}\left[  \left[  \exp\left[
-\frac{1}{2}\hat{x}^{T}\left(  H_{\sigma}\right)  \hat{x}\right]  \right]
^{\alpha-1}\right]  ^{-1}\left[  \exp\left[  -\frac{1}{2}\hat{x}^{T}H_{\rho
}\hat{x}\right]  \right]  ^{\alpha/2}\right\} \\
&  =\operatorname{Tr}\left\{  \left[  \exp\left[  -\frac{1}{2}\hat{x}%
^{T}\left(  \alpha H_{\rho}\right)  \hat{x}\right]  \right]  ^{1/2}\left[
\exp\left[  -\frac{1}{2}\hat{x}^{T}\left(  \left(  \alpha-1\right)  H_{\sigma
}\right)  \hat{x}\right]  \right]  ^{-1}\left[  \exp\left[  -\frac{1}{2}%
\hat{x}^{T}\left(  \alpha H_{\rho}\right)  \hat{x}\right]  \right]
^{1/2}\right\}  .
\end{align}
By Proposition~\ref{prop:rho-to-alpha}, the covariance matrix corresponding to
$\alpha H_{\rho}$ is given by \eqref{eq:CM-rho-alpha-10}, and the covariance
matrix for $\left(  \alpha-1\right)  H_{\sigma}$ is given by
\eqref{eq:CM-sigma-alpha-1-10}. We can then apply
Proposition~\ref{prop:inverse-sandwiched-by-sqrt} to find that%
\begin{multline}
\operatorname{Tr}\left\{  \left[  \exp\left[  -\frac{1}{2}\hat{x}^{T}\left(
\alpha H_{\rho}\right)  \hat{x}\right]  \right]  ^{1/2}\left[  \exp\left[
-\frac{1}{2}\hat{x}^{T}\left(  \left(  \alpha-1\right)  H_{\sigma}\right)
\hat{x}\right]  \right]  ^{-1}\left[  \exp\left[  -\frac{1}{2}\hat{x}%
^{T}\left(  \alpha H_{\rho}\right)  \hat{x}\right]  \right]  ^{1/2}\right\} \\
=\frac{Z_{\rho(\alpha)}Z_{\sigma(\alpha-1)}}{\left[  \det\left(  \left[
V_{\sigma(\alpha-1)}-V_{\rho(\alpha)}\right]  /2\right)  \right]  ^{1/2}}.
\end{multline}
For this equality to hold, it suffices that $V_{\sigma(\alpha-1)}%
-V_{\rho(\alpha)}>0$, as discussed in the proof of
Proposition~\ref{prop:inverse-sandwiched-by-sqrt}. Putting everything
together, we arrive at \eqref{eq:petz-renyi-alpha>1}.
\end{proof}

\bigskip

The quasi-entropy $\operatorname{Tr}\{\rho^{2}\sigma^{-1}\}$ has a number of
applications that are discussed in Section~\ref{sec:covert-app}. As a special
case of Theorem~\ref{prop:petz-renyi-alpha>1}, we arrive at the following
expression for the quasi-entropy $\operatorname{Tr}\{\rho^{2}\sigma^{-1}\}$
after applying Corollary~\ref{cor:rho-to-the-2}:

\begin{corollary}
Let $\alpha=2$, and let $\rho$ and $\sigma$ denote two Gaussian states such that
$V_{\sigma}>V_{\rho(2)}$.
Then the Petz--R\'{e}nyi relative quasi-entropy as defined in \eqref{eq:Petz-quasi}
 is given by
\begin{align}
Q_{2}(\rho\Vert\sigma)  &  =\operatorname{Tr}\{\rho^{2}\sigma^{-1}\}\\
&  =\frac{Z_{\sigma}^{2}}{Z_{\rho}^{2}}\frac{Z_{\rho(2)}}{\left[  \det\left(
\left[  V_{\sigma}-V_{\rho(2)}\right]  /2\right)  \right]  ^{1/2}}\exp\left[
\delta s^{T}\left(  V_{\sigma}-V_{\rho\left(  2\right)  }\right)  ^{-1}\delta
s\right]  ,
\end{align}
where%
\begin{align}
V_{\rho(2)}  &  =\frac{1}{2}\left(  V_{\rho}+\Omega V_{\rho}^{-1}\Omega
^{T}\right)  ,\\
Z_{\rho(2)}  &  =\sqrt{\det(\left[  V_{\rho(2)}+i\Omega\right]  /2)},\\
Z_{\sigma}  &  =\sqrt{\det(\left[  V_{\sigma}+i\Omega\right]  /2)}.
\end{align}

\end{corollary}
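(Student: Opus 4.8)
The plan is to obtain this corollary as the $\alpha=2$ specialization of Theorem~\ref{prop:petz-renyi-alpha>1}, using Corollary~\ref{cor:rho-to-the-2} to put $V_{\rho(2)}$ in closed form. The identity $Q_{2}(\rho\Vert\sigma)=\operatorname{Tr}\{\rho^{2}\sigma^{-1}\}$ is immediate from the definition \eqref{eq:Petz-quasi} with $\alpha=2$, since then $1-\alpha=-1$. For the main formula, I would set $\alpha=2$ throughout \eqref{eq:petz-renyi-alpha>1}. The key observation is that $\alpha-1=1$, so the operator $\sigma^{\alpha-1}$ is simply $\sigma$; hence its covariance matrix is $V_{\sigma(1)}=V_{\sigma}$ and its normalization constant is $Z_{\sigma(1)}=Z_{\sigma}$. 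One may also verify this directly from Proposition~\ref{prop:rho-to-alpha} at $\alpha=1$, where $H^{(1)}=H$ and the expression \eqref{eq:alpha-cov-matrix} collapses to $V^{(1)}=V$ upon using $(i\Omega)(i\Omega)=I$. Consequently the hypothesis $V_{\sigma(\alpha-1)}-V_{\rho(\alpha)}>0$ of Theorem~\ref{prop:petz-renyi-alpha>1} becomes precisely the stated condition $V_{\sigma}-V_{\rho(2)}>0$.

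With these substitutions the prefactor simplifies: $\frac{Z_{\sigma}^{\alpha-1}}{Z_{\rho}^{\alpha}}Z_{\rho(\alpha)}Z_{\sigma(\alpha-1)}=\frac{Z_{\sigma}}{Z_{\rho}^{2}}Z_{\rho(2)}Z_{\sigma}=\frac{Z_{\sigma}^{2}}{Z_{\rho}^{2}}Z_{\rho(2)}$, while the determinant in the denominator and the exponent of the displacement term become $[\det([V_{\sigma}-V_{\rho(2)}]/2)]^{1/2}$ and $\delta s^{T}(V_{\sigma}-V_{\rho(2)})^{-1}\delta s$, respectively. It then remains only to insert the explicit form $V_{\rho(2)}=\frac{1}{2}(V_{\rho}+\Omega V_{\rho}^{-1}\Omega^{T})$ supplied by Corollary~\ref{cor:rho-to-the-2}, together with $Z_{\rho(2)}=\sqrt{\det([V_{\rho(2)}+i\Omega]/2)}$ and $Z_{\sigma}=\sqrt{\det([V_{\sigma}+i\Omega]/2)}$, to arrive at the displayed expression.

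There is essentially no obstacle here: the corollary is a pure substitution into two previously established results. The only point meriting explicit mention is the boundary value $\alpha-1=1$, where one must note that the map $\sigma\mapsto\sigma^{\alpha-1}$ is the identity, so that $V_{\sigma(1)}=V_{\sigma}$ and $Z_{\sigma(1)}=Z_{\sigma}$, in order to cleanly match the hypotheses of Theorem~\ref{prop:petz-renyi-alpha>1}.
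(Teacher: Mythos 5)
Your proposal is correct and follows exactly the paper's route: the paper obtains this corollary by specializing Theorem~\ref{prop:petz-renyi-alpha>1} to $\alpha=2$ (so that $V_{\sigma(\alpha-1)}=V_{\sigma}$ and $Z_{\sigma(\alpha-1)}=Z_{\sigma}$) and inserting the expression for $V_{\rho(2)}$ from Corollary~\ref{cor:rho-to-the-2}. Your additional check that \eqref{eq:alpha-cov-matrix} collapses to $V^{(1)}=V$ at $\alpha=1$ is a harmless and correct verification of the same substitution.
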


\section{Sandwiched R\'{e}nyi relative entropy}

\label{sec:sandwiched-Renyi}

We now determine a formula for the sandwiched R\'{e}nyi relative entropy
\cite{MDSFT13,WWY14}\ of two quantum Gaussian states $\rho$ and $\sigma$. The
sandwiched R\'{e}nyi relative entropy is defined for $\alpha\in(0,1)\cup
(1,\infty)$ as%
\begin{equation}
\widetilde{D}_{\alpha}(\rho\Vert\sigma)\equiv\frac{1}{\alpha-1}\ln
\widetilde{Q}_{\alpha}(\rho\Vert\sigma),
\end{equation}
where $\widetilde{Q}_{\alpha}(\rho\Vert\sigma)$ denotes the sandwiched
R\'{e}nyi relative quasi-entropy:%
\begin{align}
\widetilde{Q}_{\alpha}(\rho\Vert\sigma)  &  \equiv\operatorname{Tr}\left\{
\left(  \sigma^{\left(  1-\alpha\right)  /2\alpha}\rho\sigma^{\left(
1-\alpha\right)  /2\alpha}\right)  ^{\alpha}\right\}  ,\\
&  =\operatorname{Tr}\left\{  \left(  \rho^{1/2}\sigma^{\left(  1-\alpha
\right)  /\alpha}\rho^{1/2}\right)  ^{\alpha}\right\}  .
\label{eq:sandwiched-quasi}%
\end{align}
The second equality follows because the eigenvalues of 
$  \sigma^{\left(  1-\alpha\right)  /2\alpha}\rho\sigma^{\left(
1-\alpha\right)  /2\alpha}$ and 
$\rho^{1/2}\sigma^{\left(  1-\alpha
\right)  /\alpha}\rho^{1/2}$ are equal, but the latter expression is easier to work with, and thus we do so in what follows.

\begin{theorem}
\label{thm: Sandwiched RRE a<1}
Let $\alpha\in(0,1)$, and let
$\rho$ and $\sigma$ denote two Gaussian states.
Then the sandwiched R\'{e}nyi relative quasi-entropy
$\widetilde{Q}_{\alpha}(\rho\Vert\sigma)$ as defined in
\eqref{eq:sandwiched-quasi}
 is given by%
\begin{equation}
\widetilde{Q}_{\alpha}(\rho\Vert\sigma)=\frac{1}{Z_{\rho}^{\alpha}Z_{\sigma
}^{1-\alpha}}\left[  \det\left(  \left[  V_{\xi(\alpha)}+i\Omega\right]
/2\right)  \right]  ^{1/2}\exp\left\{  -\alpha\ \delta s^{T}\left(
V_{\sigma\left(  \beta\right)  }+V_{\rho}\right)  ^{-1}\delta s\right\}  ,
\end{equation}
where%
\begin{align}
V_{\xi(\alpha)}  &  =\frac{\left(  I+\left(  V_{\xi}i\Omega\right)
^{-1}\right)  ^{\alpha}+\left(  I-\left(  V_{\xi}i\Omega\right)  ^{-1}\right)
^{\alpha}}{\left(  I+\left(  V_{\xi}i\Omega\right)  ^{-1}\right)  ^{\alpha
}-\left(  I-\left(  V_{\xi}i\Omega\right)  ^{-1}\right)  ^{\alpha}}%
i\Omega,\label{eq:sandwich-V-ksi-alpha}\\
V_{\xi}  &  =V_{\rho}-\sqrt{I+(V_{\rho}\Omega)^{-2}}V_{\rho}(V_{\sigma(\beta
)}+V_{\rho})^{-1}V_{\rho}\sqrt{I+(\Omega V_{\rho})^{-2}}%
,\label{eq:sandwich-intermediate-CM}\\
V_{\sigma(\beta)}  &  =\frac{\left(  I+\left(  V_{\sigma}i\Omega\right)
^{-1}\right)  ^{\beta}+\left(  I-\left(  V_{\sigma}i\Omega\right)
^{-1}\right)  ^{\beta}}{\left(  I+\left(  V_{\sigma}i\Omega\right)
^{-1}\right)  ^{\beta}-\left(  I-\left(  V_{\sigma}i\Omega\right)
^{-1}\right)  ^{\beta}}i\Omega,\label{eq:sigma-beta-sandwich-q}\\
\beta &  =\left(  1-\alpha\right)  /\alpha,\\
\delta s  &  =s_{\rho}-s_{\sigma}.
\end{align}

\end{theorem}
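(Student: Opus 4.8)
The plan is to mirror the proof of Theorem~\ref{thm:petz-renyi-alpha<1}: first peel the displacement operators off of $\rho$ and $\sigma$ so that the computation reduces to a zero-mean one, then evaluate the zero-mean part with the Gaussian-form machinery of Section~\ref{sec:computations-guassian}, and finally collect the $\delta s$-dependent scalar by invoking Lemma~\ref{lem: displ term}. Write $\rho_{0}=\exp[-\tfrac12\hat{x}^{T}H_{\rho}\hat{x}]$, $\sigma_{0}=\exp[-\tfrac12\hat{x}^{T}H_{\sigma}\hat{x}]$, $\beta=(1-\alpha)/\alpha>0$, and $\delta s=s_{\rho}-s_{\sigma}$, so that $\rho^{1/2}=Z_{\rho}^{-1/2}D(-s_{\rho})\rho_{0}^{1/2}D(s_{\rho})$ and $\sigma^{\beta}=Z_{\sigma}^{-\beta}D(-s_{\sigma})\sigma_{0}^{\beta}D(s_{\sigma})$. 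Expanding $\widetilde{Q}_{\alpha}(\rho\Vert\sigma)=\operatorname{Tr}\{(\rho^{1/2}\sigma^{\beta}\rho^{1/2})^{\alpha}\}$ and pulling out the normalizations produces the factor $Z_{\rho}^{-\alpha}Z_{\sigma}^{-\alpha\beta}=Z_{\rho}^{-\alpha}Z_{\sigma}^{-(1-\alpha)}$; the two Baker--Campbell--Hausdorff phases arising from the products $D(s_{\rho})D(-s_{\sigma})$ and $D(s_{\sigma})D(-s_{\rho})$ cancel via \eqref{eq: prod of displs} (a scalar equals its transpose and $(i\Omega)^{T}=-i\Omega$), and since $(D(-s_{\rho})XD(s_{\rho}))^{\alpha}=D(-s_{\rho})X^{\alpha}D(s_{\rho})$ for $s_{\rho}$ real, cyclicity of the trace leaves
\[
\widetilde{Q}_{\alpha}(\rho\Vert\sigma)=\frac{1}{Z_{\rho}^{\alpha}Z_{\sigma}^{1-\alpha}}\operatorname{Tr}\big\{\big[\rho_{0}^{1/2}D(\delta s)\sigma_{0}^{\beta}D(-\delta s)\rho_{0}^{1/2}\big]^{\alpha}\big\}.
\]

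For the zero-mean part, Proposition~\ref{prop:rho-to-alpha} gives that $\sigma_{0}^{\beta}=\exp[-\tfrac12\hat{x}^{T}(\beta H_{\sigma})\hat{x}]$ has covariance matrix $V_{\sigma(\beta)}$ as in \eqref{eq:sigma-beta-sandwich-q}, and Proposition~\ref{prop:sandwich-with-square-root} with $H_{4}=H_{\rho}$ (so $H_{4}/2=H_{\rho}/2$ matches $\rho_{0}^{1/2}$) and $H_{5}=\beta H_{\sigma}$ yields $\rho_{0}^{1/2}\sigma_{0}^{\beta}\rho_{0}^{1/2}=\exp[-\tfrac12\hat{x}^{T}H_{\xi}\hat{x}]$ with $V_{\xi}$ exactly as in \eqref{eq:sandwich-intermediate-CM}; here $H_{\xi}$ is positive-definite real (the sandwich $\rho_{0}^{1/2}\sigma_{0}^{\beta}\rho_{0}^{1/2}$ is a positive-definite operator, and Proposition~\ref{prop:norm-sandwich-sqrt} confirms that $V_{\xi}$ is a legitimate covariance matrix). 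Raising to the power $\alpha$ and applying Proposition~\ref{prop:rho-to-alpha} once more shows $(\rho_{0}^{1/2}\sigma_{0}^{\beta}\rho_{0}^{1/2})^{\alpha}=\exp[-\tfrac12\hat{x}^{T}(\alpha H_{\xi})\hat{x}]$ has covariance matrix $V_{\xi(\alpha)}$ as in \eqref{eq:sandwich-V-ksi-alpha}, hence trace $\sqrt{\det([V_{\xi(\alpha)}+i\Omega]/2)}$ by \eqref{eq:norm-of-Gaussian-form}.

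It remains to extract the $\delta s$-dependent scalar from $\operatorname{Tr}\{[\rho_{0}^{1/2}D(\delta s)\sigma_{0}^{\beta}D(-\delta s)\rho_{0}^{1/2}]^{\alpha}\}$. First I would use \eqref{eq: displ gauss 2} of Corollary~\ref{cor: displ op} to rewrite $D(\delta s)\sigma_{0}^{\beta}D(-\delta s)=D(l)\sigma_{0}^{\beta}\,e^{-\frac14 l^{T}i\Omega W_{\sigma(\beta)}l}$ with $l=(e^{-i\Omega\beta H_{\sigma}}-I)(-\delta s)$, then use \eqref{eq: displ gauss 1} to carry $D(l)$ leftward through $\rho_{0}^{1/2}$, turning the expression into a scalar times $D(e^{-i\Omega H_{\rho}/2}l)\exp[-\tfrac12\hat{x}^{T}H_{\xi}\hat{x}]$, and finally apply \eqref{eq: displ gauss 2} a second time to bring it to the form $\kappa\,D(-t)\exp[-\tfrac12\hat{x}^{T}H_{\xi}\hat{x}]D(t)$ for some $t\in\mathbb{C}^{2n}$ whose value is immaterial (it cancels under the trace after raising to the power $\alpha$, exactly as in the proof of Theorem~\ref{thm:petz-renyi-alpha<1}). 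Thus the trace equals $\kappa^{\alpha}\sqrt{\det([V_{\xi(\alpha)}+i\Omega]/2)}$, and it suffices to show $\kappa=\exp\{-\delta s^{T}(V_{\sigma(\beta)}+V_{\rho})^{-1}\delta s\}$. The exponent of $\kappa$ is $-\tfrac14 l^{T}i\Omega W_{\sigma(\beta)}l+\tfrac14\big(e^{-i\Omega H_{\rho}/2}l\big)^{T}i\Omega W_{\xi}\big(e^{-i\Omega H_{\rho}/2}l\big)$; using $Mf(L)M^{-1}=f(MLM^{-1})$ together with $e^{i\Omega H_{\xi}}=e^{i\Omega H_{\rho}/2}e^{i\Omega\beta H_{\sigma}}e^{i\Omega H_{\rho}/2}$, the second term rewrites as $\tfrac14 l^{T}i\Omega\,\frac{I+e^{i\Omega H_{\rho}}e^{i\Omega\beta H_{\sigma}}}{I-e^{i\Omega H_{\rho}}e^{i\Omega\beta H_{\sigma}}}\,l$, which is precisely the quantity appearing in Lemma~\ref{lem: displ term} with $H_{9}=\beta H_{\sigma}=H_{\sigma(\beta)}$, $H_{10}=H_{\rho}$, and the lemma's vector taken to be $-\delta s$; the lemma then gives $-(-\delta s)^{T}(V_{\sigma(\beta)}+V_{\rho})^{-1}(-\delta s)=-\delta s^{T}(V_{\sigma(\beta)}+V_{\rho})^{-1}\delta s$. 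Combining this with the zero-mean evaluation yields the stated formula.

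The main obstacle I anticipate is the displacement bookkeeping in the last step: in particular, recognizing that conjugating $W_{\xi}$ by $e^{i\Omega H_{\rho}/2}$ collapses the two half-Hamiltonians $H_{\rho}/2$ back into a single $H_{\rho}$, so that Lemma~\ref{lem: displ term} applies with $H_{10}=H_{\rho}$ (producing $V_{\rho}$, rather than the covariance matrix associated with $H_{\rho}/2$), all while keeping the signs, orderings, and transposes of $i\Omega$ under control throughout the Baker--Campbell--Hausdorff manipulations and in the passage between $i\Omega H$ and $H i\Omega$. A minor point worth recording separately is that $H_{\xi}$, and hence $\alpha H_{\xi}$, is positive-definite real, which is what makes \eqref{eq:norm-of-Gaussian-form} applicable; this is what Proposition~\ref{prop:norm-sandwich-sqrt} supplies.
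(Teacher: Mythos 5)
Your proposal is correct and follows essentially the same route as the paper's own proof: the same displacement bookkeeping via Corollary~\ref{cor: displ op}, the same application of Lemma~\ref{lem: displ term} with $H_{9}=H_{\sigma(\beta)}$ and $H_{10}=H_{\rho}$ after conjugating $W_{\xi}$ by $e^{i\Omega H_{\rho}/2}$, and the same zero-mean evaluation through Propositions~\ref{prop:rho-to-alpha}, \ref{prop:sandwich-with-square-root}, and \ref{prop:norm-sandwich-sqrt}. The only differences are cosmetic (you evaluate the zero-mean trace before the mean-vector factor and make the cancellation of the Baker--Campbell--Hausdorff phases explicit, which the paper leaves implicit).
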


\begin{proof}
Let $\rho_{0}$ and $\sigma_{0}$ denote the following operators:
\begin{equation}
\rho_{0}=\exp\left[  -\frac{1}{2}\hat{x}^{T}H_{\rho}\hat{x}\right]
,\qquad\sigma_{0}=\exp\left[  -\frac{1}{2}\hat{x}^{T}H_{\sigma}\hat{x}\right]
.
\end{equation}
To evaluate the expression for the sandwiched R\'{e}nyi relative
quasi-entropy, consider that
\begin{align}
&  \operatorname{Tr}\left\{  \left(  \rho^{1/2}\sigma^{\left(  1-\alpha
\right)  /\alpha}\rho^{1/2}\right)  ^{\alpha}\right\} \nonumber\\
&  =\operatorname{Tr}\left\{  \left(  \rho^{1/2}\sigma^{\beta}\rho
^{1/2}\right)  ^{\alpha}\right\} \\
&  =\operatorname{Tr}\left\{  \left(  \left[  D(-s_{\rho})\left(  \frac
{\rho_{0}}{Z_{\rho}}\right)  D(s_{\rho})\right]  ^{\frac{1}{2}}\left[
D(-s_{\sigma})\left(  \frac{\sigma_{0}}{Z_{\sigma}}\right)  D(s_{\sigma
})\right]  ^{\beta}\left[  D(-s_{\rho})\left(  \frac{\rho_{0}}{Z_{\rho}%
}\right)  D(s_{\rho})\right]  ^{\frac{1}{2}}\right)  ^{\alpha}\right\} \\
&  =\operatorname{Tr}\left\{  \left(  D(-s_{\rho})\left(  \frac{\rho_{0}%
}{Z_{\rho}}\right)  ^{\frac{1}{2}}D\left(  \delta s\right)  \left(
\frac{\sigma_{0}}{Z_{\sigma}}\right)  ^{\beta}D\left(  -\delta s\right)
\left(  \frac{\rho_{0}}{Z_{\rho}}\right)  ^{\frac{1}{2}}D(s_{\rho})\right)
^{\alpha}\right\} \\
&  =\frac{1}{Z_{\rho}^{\alpha}Z_{\sigma}^{1-\alpha}}\operatorname{Tr}\left\{
D(-s_{\rho})\left(  \left(  \rho_{0}\right)  ^{\frac{1}{2}}D\left(  \delta
s\right)  \left(  \sigma_{0}\right)  ^{\beta}D\left(  -\delta s\right)
\left(  \rho_{0}\right)  ^{\frac{1}{2}}\right)  ^{\alpha}D(s_{\rho})\right\}
\\
&  =\frac{1}{Z_{\rho}^{\alpha}Z_{\sigma}^{1-\alpha}}\operatorname{Tr}\left\{
\left(  \left(  \rho_{0}\right)  ^{\frac{1}{2}}D\left(  \delta s\right)
\left(  \sigma_{0}\right)  ^{\beta}D\left(  -\delta s\right)  \left(  \rho
_{0}\right)  ^{\frac{1}{2}}\right)  ^{\alpha}\right\}  .
\label{eq:sandwiched-1st-block-lastline}%
\end{align}
Using \eqref{eq: displ gauss 2} of Corollary \ref{cor: displ op} for $D(\delta
s)\left(  \sigma_{0}\right)  ^{\beta}D\left(  -\delta s\right)  $, we obtain
\begin{equation}
\left(  \frac{1}{Z_{\rho}^{\alpha}Z_{\sigma}^{1-\alpha}}\right)
^{-1}\operatorname{Tr}\left\{  \left(  \rho^{1/2}\sigma^{\left(
1-\alpha\right)  /\alpha}\rho^{1/2}\right)  ^{\alpha}\right\}  =e^{-\frac
{\alpha}{4}l^{T}i\Omega W_{\sigma\left(  \beta\right)  }l}\operatorname{Tr}%
\Bigg\{\Bigg(\left(  \rho_{0}\right)  ^{\frac{1}{2}}D(l)\left(  \sigma
_{0}\right)  ^{\beta}\left(  \rho_{0}\right)  ^{\frac{1}{2}}\Bigg)^{\alpha
}\Bigg\},
\end{equation}
where $W_{\sigma\left(  \beta\right)  }$ is related to $H_{\sigma\left(
\beta\right)  }=\beta H_{\sigma}$ by \eqref{eq:H-to-W}, and $l$ is given by
\begin{equation}
l=\left(  \exp\left[  -i\Omega H_{\sigma\left(  \beta\right)  }\right]
-I\right)  \left(  -\delta s\right)  .
\end{equation}
Continuing further, using \eqref{eq: displ gauss 1} of Corollary
\ref{cor: displ op}, we get
\begin{multline}
\left(  \frac{1}{Z_{\rho}^{\alpha}Z_{\sigma}^{1-\alpha}}\right)
^{-1}\operatorname{Tr}\left\{  \left(  \rho^{1/2}\sigma^{\left(
1-\alpha\right)  /\alpha}\rho^{1/2}\right)  ^{\alpha}\right\} \\
=e^{-\frac{\alpha}{4}l^{T}i\Omega W_{\sigma\left(  \beta\right)  }%
l}\operatorname{Tr}\Bigg\{\Bigg(D\left(  \exp\left[  -i\Omega H_{\rho
}/2\right]  l\right)  \left(  \rho_{0}\right)  ^{\frac{1}{2}}\left(
\sigma_{0}\right)  ^{\beta}\left(  \rho_{0}\right)  ^{\frac{1}{2}%
}\Bigg)^{\alpha}\Bigg\}.
\end{multline}
By applying \eqref{eq: displ gauss 2} of Corollary \ref{cor: displ op} once
again, we obtain
\begin{align}
&  \left(  \frac{1}{Z_{\rho}^{\alpha}Z_{\sigma}^{1-\alpha}}\right)
^{-1}\operatorname{Tr}\left\{  \left(  \rho^{1/2}\sigma^{\left(
1-\alpha\right)  /\alpha}\rho^{1/2}\right)  ^{\alpha}\right\} \nonumber\\
&  =\left(  e^{-\frac{1}{4}l^{T}i\Omega W_{\sigma\left(  \beta\right)  }%
l}e^{\frac{1}{4}\left(  \exp\left[  -i\Omega H_{\rho}/2\right]  l\right)
^{T}i\Omega W_{\xi}\exp\left[  -i\Omega H_{\rho}/2\right]  l}\right)
^{\alpha}\operatorname{Tr}\Bigg\{\Bigg(D(-t)\left(  \rho_{0}\right)
^{\frac{1}{2}}\left(  \sigma_{0}\right)  ^{\beta}\left(  \rho_{0}\right)
^{\frac{1}{2}}D\left(  t\right)  \Bigg)^{\alpha}\Bigg\}\\
&  =\left(  e^{-\frac{1}{4}l^{T}i\Omega W_{\sigma\left(  \beta\right)  }%
l}e^{\frac{1}{4}\left(  \exp\left[  -i\Omega H_{\rho}/2\right]  l\right)
^{T}i\Omega W_{\xi}\exp\left[  -i\Omega H_{\rho}/2\right]  l}\right)
^{\alpha}\operatorname{Tr}\Bigg\{\Bigg(\left(  \rho_{0}\right)  ^{\frac{1}{2}%
}\left(  \sigma_{0}\right)  ^{\beta}\left(  \rho_{0}\right)  ^{\frac{1}{2}%
}\Bigg)^{\alpha}\Bigg\}, \label{eq: Sandwiched RRE imp 1}%
\end{align}
where
\begin{equation}
W_{\xi}=\frac{I+\exp\left[  i\Omega H_{\rho}/2\right]  \exp\left[
i\Omega\beta H_{\sigma}\right]  \exp\left[  i\Omega H_{\rho}/2\right]
}{I-\exp\left[  i\Omega H_{\rho}/2\right]  \exp\left[  i\Omega\beta H_{\sigma
}\right]  \exp\left[  i\Omega H_{\rho}/2\right]  },
\end{equation}
and we have used
\begin{equation}
t=\left(  \exp\left[  -i\Omega H_{\rho}/2\right]  \exp\left[  -i\Omega\beta
H_{\sigma}\right]  \exp\left[  -i\Omega H_{\rho}/2\right]  -I\right)
^{-1}\exp\left[  -i\Omega H_{\rho}/2\right]  \left(  \exp\left[  -i\Omega
H_{\sigma\left(  \beta\right)  }\right]  -I\right)  \left(  -\delta s\right)
.
\end{equation}
Note that the particular value of $t$ is irrelevant because the operators
$D\left(  t\right)  $ and $D(-t)$ cancel each other in the trace operation. We
now simplify the expression in \eqref{eq: Sandwiched RRE imp 1} term by term.
First, consider the exponent in the first prefactor:
\begin{align}
&  -\frac{1}{4}l^{T}i\Omega W_{\sigma\left(  \beta\right)  }l\nonumber\\
&  =-\frac{1}{4}\delta s^{T}i\Omega\left(  \exp\left[  i\Omega H_{\sigma
\left(  \beta\right)  }\right]  -I\right)  \left(  \frac{\exp\left[  -i\Omega
H_{\sigma\left(  \beta\right)  }\right]  +I}{\exp\left[  -i\Omega
H_{\sigma\left(  \beta\right)  }\right]  -I}\right)  \left(  \exp\left[
-i\Omega H_{\sigma\left(  \beta\right)  }\right]  -I\right)  \delta s\\
&  =\frac{1}{4}\delta s^{T}i\Omega\left(  \exp\left[  -i\Omega H_{\sigma
\left(  \beta\right)  }\right]  -\exp\left[  i\Omega H_{\sigma\left(
\beta\right)  }\right]  \right)  \delta s.
\label{eq: sandwiched RRE prefactor 1}%
\end{align}
Second, consider the exponent in the second prefactor in
\eqref{eq: Sandwiched RRE imp 1}:
\begin{align}
&  \frac{1}{4}\left(  \exp\left[  -i\Omega H_{\rho}/2\right]  l\right)
^{T}i\Omega W_{\xi}\exp\left[  -i\Omega H_{\rho}/2\right]  l\nonumber\\
&  =\frac{1}{4}l^{T}i\Omega\exp\left[  i\Omega H_{\rho}/2\right]  W_{\xi}%
\exp\left[  -i\Omega H_{\rho}/2\right]
l\label{eq: sandwiched RRE prefactor 2 beginning}\\
&  =\left(  \left(  \exp\left[  -i\Omega H_{\sigma\left(  \beta\right)
}\right]  -I\right)  \delta s\right)  ^{T}i\Omega\exp\left[  i\Omega H_{\rho
}/2\right]  W_{\xi}\exp\left[  -i\Omega H_{\rho}/2\right]  \left(  \exp\left[
-i\Omega H_{\sigma\left(  \beta\right)  }\right]  -I\right)  \delta s\\
&  =\delta s^{T}\left(  \exp\left[  H_{\sigma\left(  \beta\right)  }%
i\Omega\right]  -I\right)  i\Omega\exp\left[  i\Omega H_{\rho}/2\right]
W_{\xi}\exp\left[  -i\Omega H_{\rho}/2\right]  \left(  \exp\left[  -i\Omega
H_{\sigma\left(  \beta\right)  }\right]  -I\right)  \delta s\\
&  =\delta s^{T}i\Omega\left(  \exp\left[  i\Omega H_{\sigma\left(
\beta\right)  }\right]  -I\right)  \exp\left[  i\Omega H_{\rho}/2\right]
W_{\xi}\exp\left[  -i\Omega H_{\rho}/2\right]  \left(  \exp\left[  -i\Omega
H_{\sigma\left(  \beta\right)  }\right]  -I\right)  \delta s\\
&  =\delta s^{T}i\Omega\left(  \exp\left[  i\Omega H_{\sigma\left(
\beta\right)  }\right]  -I\right)  W_{\xi^{\prime}}\left(  \exp\left[
-i\Omega H_{\sigma\left(  \beta\right)  }\right]  -I\right)  \delta s,
\label{eq: sandwiched RRE prefactor 2}%
\end{align}
where
\begin{equation}
W_{\xi^{\prime}}=\frac{I+\exp\left[  i\Omega H_{\rho}\right]  \exp\left[
i\Omega H_{\sigma\left(  \beta\right)  }\right]  }{I-\exp\left[  i\Omega
H_{\rho}\right]  \exp\left[  i\Omega H_{\sigma\left(  \beta\right)  }\right]
}.
\end{equation}
Based on \eqref{eq: sandwiched RRE prefactor 1} and
\eqref{eq: sandwiched RRE prefactor 2}, and applying Lemma
\ref{lem: displ term} with $H_{9}=H_{\sigma\left(  \beta\right)  }$ and
$H_{10}=H_{\rho}$, we arrive at
\begin{equation}
\left(  e^{-\frac{1}{4}l^{T}i\Omega W_{\sigma\left(  \beta\right)  }l}%
e^{\frac{1}{4}\left(  \exp\left[  -i\Omega H_{\rho}/2\right]  l\right)
^{T}i\Omega W_{\xi}\exp\left[  -i\Omega H_{\rho}/2\right]  l}\right)
^{\alpha}=\exp\left\{  -\alpha\ \delta s^{T}\left(  V_{\sigma\left(
\beta\right)  }+V_{\rho}\right)  ^{-1}\delta s\right\}  .
\end{equation}

Finally, we evaluate the term
\begin{equation}
\operatorname{Tr}\Bigg\{\Bigg(\left(  \rho_{0}\right)  ^{\frac{1}{2}}\left(
\sigma_{0}\right)  ^{\beta}\left(  \rho_{0}\right)  ^{\frac{1}{2}%
}\Bigg)^{\alpha}\Bigg\}.
\end{equation}
By Proposition~\ref{prop:rho-to-alpha}, the covariance matrix $V_{\sigma
(\beta)}$ corresponding to $\beta H_{\sigma}$ is as given in
\eqref{eq:sigma-beta-sandwich-q}. By
Proposition~\ref{prop:sandwich-with-square-root}, we can write
\begin{align}
\left(  \rho_{0}\right)  ^{\frac{1}{2}}\left(  \sigma_{0}\right)  ^{\beta
}\left(  \rho_{0}\right)  ^{\frac{1}{2}}
& =\left[  \exp\left[  -\frac{1}{2}\hat{x}^{T}H_{\rho}\hat{x}\right]  \right]
^{\frac{1}{2}}\exp\left[  -\frac{1}{2}\hat{x}^{T}\beta H_{\sigma}\hat
{x}\right]  \left[  \exp\left[  -\frac{1}{2}\hat{x}^{T}H_{\rho}\hat{x}\right]
\right]  ^{\frac{1}{2}}\\
&=\exp\left[  -\frac{1}{2}\hat{x}^{T}H_{\xi}\hat
{x}\right]  ,
\end{align}
where the covariance matrix corresponding to $H_{\xi}$ is $V_{\xi}$, given in
\eqref{eq:sandwich-intermediate-CM}. Then we have that
\begin{multline}
\operatorname{Tr}\left\{  \left(  \left[  \exp\left[  -\frac{1}{2}\hat{x}%
^{T}H_{\rho}\hat{x}\right]  \right]  ^{\frac{1}{2}}\exp\left[  -\frac{1}%
{2}\hat{x}^{T}\beta H_{\sigma}\hat{x}\right]  \left[  \exp\left[  -\frac{1}%
{2}\hat{x}^{T}H_{\rho}\hat{x}\right]  \right]  ^{\frac{1}{2}}\right)
^{\alpha}\right\} \\
=\operatorname{Tr}\left\{  \left[  \exp\left[  -\frac{1}{2}\hat{x}^{T}H_{\xi
}\hat{x}\right]  \right]  ^{\alpha}\right\}  =\operatorname{Tr}\left\{
\exp\left[  -\frac{1}{2}\hat{x}^{T}\alpha H_{\xi}\hat{x}\right]  \right\}  .
\end{multline}
By Proposition~\ref{prop:rho-to-alpha}, the covariance matrix corresponding to
$\alpha H_{\xi}$ is $V_{\xi(\alpha)}$, given in
\eqref{eq:sandwich-V-ksi-alpha}. This finally implies that
\begin{equation}
\operatorname{Tr}\left\{  \exp\left[  -\frac{1}{2}\hat{x}^{T}\alpha H_{\xi
}\hat{x}\right]  \right\}  =\left[  \det\left(  V_{\xi(\alpha)}+i\Omega
/2\right)  \right]  ^{1/2}.
\end{equation}
Combining the different terms, we arrive at the statement of the theorem.
\end{proof}

\begin{theorem}
\label{thm:sandwiched-alpha>1}
Let $\alpha\in(1,\infty)$, and let 
$\rho$ and $\sigma$ denote 
two Gaussian states such that $V_{\sigma(\gamma)}>V_{\rho}$.
Then the sandwiched R\'{e}nyi relative quasi-entropy
$\widetilde{Q}_{\alpha}(\rho\Vert\sigma)$ as defined in
\eqref{eq:sandwiched-quasi}  is given
by%
\begin{equation}
\widetilde{Q}_{\alpha}(\rho\Vert\sigma)=\frac{Z_{\sigma}^{\alpha-1}}{Z_{\rho
}^{\alpha}}\left[  \det\left(  \left[  V_{\xi(\alpha)}+i\Omega\right]
/2\right)  \right]  ^{1/2}\exp\left\{  \alpha\ \delta s^{T}\left(
V_{\sigma\left(  \gamma\right)  }-V_{\rho}\right)  ^{-1}\delta s\right\}  ,
\label{eq:sandwiched-Renyi-alpha>1}%
\end{equation}
where%
\begin{align}
V_{\xi(\alpha)}  &  =\frac{\left(  I+\left(  V_{\xi}i\Omega\right)
^{-1}\right)  ^{\alpha}+\left(  I-\left(  V_{\xi}i\Omega\right)  ^{-1}\right)
^{\alpha}}{\left(  I+\left(  V_{\xi}i\Omega\right)  ^{-1}\right)  ^{\alpha
}-\left(  I-\left(  V_{\xi}i\Omega\right)  ^{-1}\right)  ^{\alpha}}%
i\Omega,\label{eq:V-ksi-alpha-sandwiched-alpha>1}\\
V_{\xi}  &  =V_{\rho}+\sqrt{I+(V_{\rho}\Omega)^{-2}}V_{\rho}(V_{\sigma
(\gamma)}-V_{\rho})^{-1}V_{\rho}\sqrt{I+(\Omega V_{\rho})^{-2}}%
,\label{eq:V-ksi-sandwiched-alpha>1}\\
V_{\sigma(\gamma)}  &  =\frac{\left(  I+\left(  V_{\sigma}i\Omega\right)
^{-1}\right)  ^{\gamma}+\left(  I-\left(  V_{\sigma}i\Omega\right)
^{-1}\right)  ^{\gamma}}{\left(  I+\left(  V_{\sigma}i\Omega\right)
^{-1}\right)  ^{\gamma}-\left(  I-\left(  V_{\sigma}i\Omega\right)
^{-1}\right)  ^{\gamma}}i\Omega,\label{eq:V-sigma-beta-alpha>1}\\
\gamma &  =\left(  \alpha-1\right)  /\alpha,\\
\delta s  &  =s_{\rho}-s_{\sigma}.
\end{align}

\end{theorem}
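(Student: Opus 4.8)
The plan is to follow the proof of Theorem~\ref{thm: Sandwiched RRE a<1} almost line by line, making the single substitution $\sigma^{(1-\alpha)/\alpha}\to[\sigma^{\gamma}]^{-1}$ with $\gamma=(\alpha-1)/\alpha>0$, and replacing each ingredient applied to the $\sigma$-factor by its ``inverse-of-an-exponential-quadratic-form'' analogue: Remark~\ref{rem: Corollary displ Gaussian inverse} in place of Corollary~\ref{cor: displ op}, Lemma~\ref{lem: displ term with inverse} in place of Lemma~\ref{lem: displ term}, and Proposition~\ref{prop:inverse-sandwiched-by-sqrt} in place of Proposition~\ref{prop:sandwich-with-square-root}. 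First I would write $\rho=D(-s_{\rho})(\rho_{0}/Z_{\rho})D(s_{\rho})$ and $\sigma=D(-s_{\sigma})(\sigma_{0}/Z_{\sigma})D(s_{\sigma})$ with $\rho_{0}=\exp[-\frac{1}{2}\hat{x}^{T}H_{\rho}\hat{x}]$, $\sigma_{0}=\exp[-\frac{1}{2}\hat{x}^{T}H_{\sigma}\hat{x}]$, observe that $\sigma^{(1-\alpha)/\alpha}=Z_{\sigma}^{\gamma}[(\sigma_{0})^{\gamma}]^{-1}$ with $[(\sigma_{0})^{\gamma}]^{-1}=\exp[-\frac{1}{2}\hat{x}^{T}(-\gamma H_{\sigma})\hat{x}]$, pull the overall normalization $Z_{\sigma}^{\alpha-1}/Z_{\rho}^{\alpha}$ out of the $\alpha$-th power, and use cyclicity of the trace---with the phase factors from \eqref{eq: prod of displs} cancelling in pairs---to reduce the quasi-entropy to $\frac{Z_{\sigma}^{\alpha-1}}{Z_{\rho}^{\alpha}}\operatorname{Tr}\{((\rho_{0})^{1/2}D(\delta s)[(\sigma_{0})^{\gamma}]^{-1}D(-\delta s)(\rho_{0})^{1/2})^{\alpha}\}$, where $\delta s=s_{\rho}-s_{\sigma}$.

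Next I would commute the displacement operators $D(\pm\delta s)$ successively through the three exponential quadratic forms, exactly as in \eqref{eq:sandwiched-1st-block-lastline}--\eqref{eq: sandwiched RRE prefactor 2}, but using Remark~\ref{rem: Corollary displ Gaussian inverse} for the inverted middle factor (so that, as recorded around \eqref{eq:minus-H-to-minus-W}, the relevant $W$ and $V$ pick up a sign flip) and the unmodified Corollary~\ref{cor: displ op} for the two $(\rho_{0})^{1/2}$ factors. This strips off scalar phase prefactors; collapsing the inner product $(\rho_{0})^{1/2}[(\sigma_{0})^{\gamma}]^{-1}(\rho_{0})^{1/2}=\exp[-\frac{1}{2}\hat{x}^{T}H_{\xi}\hat{x}]$ via Proposition~\ref{prop:inverse-sandwiched-by-sqrt} and conjugating the intermediate $W$-matrix by $\exp[i\Omega H_{\rho}/2]$, one evaluates the accumulated phase with Lemma~\ref{lem: displ term with inverse}, taking $H_{9}=\gamma H_{\sigma}=H_{\sigma(\gamma)}$ and $H_{10}=H_{\rho}$, to obtain the prefactor $\exp\{\alpha\,\delta s^{T}(V_{\sigma(\gamma)}-V_{\rho})^{-1}\delta s\}$; the hypothesis $V_{\sigma(\gamma)}-V_{\rho}>0$ is exactly what makes $V_{\sigma(\gamma)}-V_{\rho}$ invertible here. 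After the remaining displacement operators cancel in the trace, what is left is $\operatorname{Tr}\{((\rho_{0})^{1/2}[(\sigma_{0})^{\gamma}]^{-1}(\rho_{0})^{1/2})^{\alpha}\}$.

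To finish, I would evaluate this trace: by Proposition~\ref{prop:rho-to-alpha} the covariance matrix of $\gamma H_{\sigma}$ is $V_{\sigma(\gamma)}$ as in \eqref{eq:V-sigma-beta-alpha>1}; by Proposition~\ref{prop:inverse-sandwiched-by-sqrt} applied with $H_{4}=H_{\rho}$ and $H_{5}=H_{\sigma(\gamma)}$ (whose positivity requirement $V_{\sigma(\gamma)}>V_{\rho}$ is precisely our hypothesis), the sandwiched form equals $\exp[-\frac{1}{2}\hat{x}^{T}H_{\xi}\hat{x}]$ with $H_{\xi}$ positive definite and $V_{\xi}$ as in \eqref{eq:V-ksi-sandwiched-alpha>1}; raising to the power $\alpha$ and invoking Proposition~\ref{prop:rho-to-alpha} once more gives $\exp[-\frac{1}{2}\hat{x}^{T}(\alpha H_{\xi})\hat{x}]$ with covariance matrix $V_{\xi(\alpha)}$ as in \eqref{eq:V-ksi-alpha-sandwiched-alpha>1}; and since $\alpha H_{\xi}>0$, \eqref{eq:norm-of-Gaussian-form} gives $\operatorname{Tr}\{\exp[-\frac{1}{2}\hat{x}^{T}(\alpha H_{\xi})\hat{x}]\}=[\det([V_{\xi(\alpha)}+i\Omega]/2)]^{1/2}$. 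Multiplying the normalization, the Gaussian-displacement prefactor, and this determinant yields \eqref{eq:sandwiched-Renyi-alpha>1}. I expect the main obstacle to be the careful bookkeeping of the displacement-operator phases under the replacement $\sigma^{\gamma}\to[\sigma^{\gamma}]^{-1}$: one must verify that the sign flips $W\to-W$, $V\to-V$ propagate so that the ``$+$'' structure $V_{\sigma(\beta)}+V_{\rho}$ of the $\alpha\in(0,1)$ case becomes the ``$-$'' structure $V_{\sigma(\gamma)}-V_{\rho}$ throughout, and confirm that the single positivity hypothesis $V_{\sigma(\gamma)}-V_{\rho}>0$ simultaneously licenses both Proposition~\ref{prop:inverse-sandwiched-by-sqrt} and Lemma~\ref{lem: displ term with inverse}.
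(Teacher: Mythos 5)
Your proposal is correct and follows essentially the same route as the paper's own proof: the same reduction to $\frac{Z_{\sigma}^{\alpha-1}}{Z_{\rho}^{\alpha}}\operatorname{Tr}\{((\rho_{0})^{1/2}D(\delta s)[\sigma_{0}^{\gamma}]^{-1}D(-\delta s)(\rho_{0})^{1/2})^{\alpha}\}$, the same use of Remark~\ref{rem: Corollary displ Gaussian inverse} and Lemma~\ref{lem: displ term with inverse} (with $H_{9}=H_{\sigma(\gamma)}$, $H_{10}=H_{\rho}$) to extract the prefactor $\exp\{\alpha\,\delta s^{T}(V_{\sigma(\gamma)}-V_{\rho})^{-1}\delta s\}$, and the same evaluation of the remaining trace via Propositions~\ref{prop:rho-to-alpha} and \ref{prop:inverse-sandwiched-by-sqrt} under the hypothesis $V_{\sigma(\gamma)}-V_{\rho}>0$. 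No gaps to flag.
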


\begin{proof}
Let $\rho_{0}$ and $\sigma_{0}$ denote the following operators:
\begin{equation}
\rho_{0}=\exp\left[  -\frac{1}{2}\hat{x}^{T}H_{\rho}\hat{x}\right]
,\qquad\sigma_{0}=\exp\left[  -\frac{1}{2}\hat{x}^{T}H_{\sigma}\hat{x}\right]
.
\end{equation}
Consider that
\begin{align}
&  \operatorname{Tr}\left\{  \left(  \rho^{1/2}\sigma^{\left(  1-\alpha
\right)  /\alpha}\rho^{1/2}\right)  ^{\alpha}\right\} \nonumber\\
&  =\operatorname{Tr}\left\{  \left(  \rho^{1/2}\left[  \sigma^{\left(
\alpha-1\right)  /\alpha}\right]  ^{-1}\rho^{1/2}\right)  ^{\alpha}\right\}
=\operatorname{Tr}\left\{  \left(  \rho^{1/2}\left[  \sigma^{\gamma}\right]
^{-1}\rho^{1/2}\right)  ^{\alpha}\right\} \\
&  =\operatorname{Tr}\left\{  \left(  \left[  D(-s_{\rho})\left(  \frac
{\rho_{0}}{Z_{\rho}}\right)  D(s_{\rho})\right]  ^{\frac{1}{2}}\left[  \left[
D(-s_{\sigma})\left(  \frac{\sigma_{0}}{Z_{\sigma}}\right)  D(s_{\sigma
})\right]  ^{\gamma}\right]  ^{-1}\left[  D(-s_{\rho})\left(  \frac{\rho_{0}%
}{Z_{\rho}}\right)  D(s_{\rho})\right]  ^{\frac{1}{2}}\right)  ^{\alpha
}\right\} \\
&  =\operatorname{Tr}\left\{  \left(  D(-s_{\rho})\left(  \frac{\rho_{0}%
}{Z_{\rho}}\right)  ^{\frac{1}{2}}D\left(  \delta s\right)  \left(  \left[
\frac{\sigma_{0}}{Z_{\sigma}}\right]  ^{\gamma}\right)  ^{-1}D\left(  -\delta
s\right)  \left(  \frac{\rho_{0}}{Z_{\rho}}\right)  ^{\frac{1}{2}}D(s_{\rho
})\right)  ^{\alpha}\right\} \\
&  =\frac{Z_{\sigma}^{\alpha-1}}{Z_{\rho}^{\alpha}}\operatorname{Tr}\left\{
D(-s_{\rho})\left(  \left(  \rho_{0}\right)  ^{\frac{1}{2}}D\left(  \delta
s\right)  \left[  \sigma_{0}^{\gamma}\right]  ^{-1}D\left(  -\delta s\right)
\left(  \rho_{0}\right)  ^{\frac{1}{2}}\right)  ^{\alpha}D(s_{\rho})\right\}
\\
&  =\frac{Z_{\sigma}^{\alpha-1}}{Z_{\rho}^{\alpha}}\operatorname{Tr}\left\{
\left(  \left(  \rho_{0}\right)  ^{\frac{1}{2}}D\left(  \delta s\right)
\left[  \sigma_{0}^{\gamma}\right]  ^{-1}D\left(  -\delta s\right)  \left(
\rho_{0}\right)  ^{\frac{1}{2}}\right)  ^{\alpha}\right\}  .
\label{eq:sandwiched Inverse 1st block last line}%
\end{align}
Using steps similar to those in Theorem~\ref{thm: Sandwiched RRE a<1}, and
based on Remark~\ref{rem: Corollary displ Gaussian inverse}, we arrive at
\begin{multline}
\left(  \frac{Z_{\sigma}^{\alpha-1}}{Z_{\rho}^{\alpha}}\right)  ^{-1}%
\operatorname{Tr}\left\{  \left(  \rho^{1/2}\left[  \sigma^{\gamma}\right]
^{-1}\rho^{1/2}\right)  ^{\alpha}\right\} \\
=\left(  e^{\frac{1}{4}l^{T}i\Omega W_{\sigma\left(  \gamma\right)  }%
l}e^{\frac{1}{4}\left(  \exp\left[  -i\Omega H_{\rho}/2\right]  l\right)
^{T}i\Omega W_{\xi}\exp\left[  -i\Omega H_{\rho}/2\right]  l}\right)
^{\alpha}\operatorname{Tr}\Bigg\{\Bigg(\left(  \rho_{0}\right)  ^{\frac{1}{2}%
}\left[  \sigma^{\gamma}\right]  ^{-1}\left(  \rho_{0}\right)  ^{\frac{1}{2}%
}\Bigg)^{\alpha}\Bigg\} \label{eq: Sandwiched RRE imp 1-1}%
\end{multline}
where%
\begin{align}
W_{\xi}  &  =\frac{I+\exp\left[  i\Omega H_{\rho}/2\right]  \exp\left[
-i\Omega\gamma H_{\sigma}\right]  \exp\left[  i\Omega H_{\rho}/2\right]
}{I-\exp\left[  i\Omega H_{\rho}/2\right]  \exp\left[  -i\Omega\gamma
H_{\sigma}\right]  \exp\left[  i\Omega H_{\rho}/2\right]  },\\
l  &  =\left(  \exp\left[  i\Omega H_{\sigma\left(  \gamma\right)  }\right]
-I\right)  \left(  -\delta s\right)  ,
\end{align}
and where $W_{\sigma\left(  \gamma\right)  }$ is related to $H_{\sigma\left(
\gamma\right)  }=\gamma H_{\sigma}$ by \eqref{eq:H-to-W}. Following steps
similar to those in
\eqref{eq: sandwiched RRE prefactor 2 beginning}--\eqref{eq: sandwiched RRE prefactor 2}
of Theorem~\ref{thm: Sandwiched RRE a<1}, we get
\begin{multline}
\left(  \frac{Z_{\sigma}^{\alpha-1}}{Z_{\rho}^{\alpha}}\right)  ^{-1}%
\operatorname{Tr}\left\{  \left(  \rho^{1/2}\left[  \sigma^{\gamma}\right]
^{-1}\rho^{1/2}\right)  ^{\alpha}\right\} \\
=\left(  e^{\frac{1}{4}l^{T}i\Omega W_{\sigma\left(  \gamma\right)  }%
l}e^{\frac{1}{4}l^{T}i\Omega W_{\xi^{\prime}}l}\right)  ^{\alpha
}\operatorname{Tr}\Bigg\{\Bigg(\rho_{0}^{1/2}\left[  \sigma^{\gamma}\right]
^{-1}\rho_{0}^{1/2}\Bigg)^{\alpha}\Bigg\},
\end{multline}
where
\begin{equation}
W_{\xi^{\prime}}=\frac{I+\exp\left[  i\Omega H_{\rho}\right]  \exp\left[
-i\Omega H_{\sigma\left(  \gamma\right)  }\right]  }{I-\exp\left[  i\Omega
H_{\rho}\right]  \exp\left[  -i\Omega H_{\sigma\left(  \gamma\right)
}\right]  }.
\end{equation}
Applying Lemma~\ref{lem: displ term with inverse} with $H_{9}=H_{\sigma\left(
\gamma\right)  }$ and $H_{10}=H_{\rho}$, we arrive at
\begin{equation}
\left(  e^{\frac{1}{4}l^{T}i\Omega W_{\sigma\left(  \gamma\right)  }l}%
e^{\frac{1}{4}l^{T}i\Omega W_{\xi^{\prime}}l}\right)  ^{\alpha}=\exp\left\{
\alpha\ \delta s^{T}\left(  V_{\sigma\left(  \gamma\right)  }-V_{\rho}\right)
^{-1}\delta s\right\}  . \label{eq: sandwiched renyi a>1 prefactor}%
\end{equation}

Now consider that
\begin{align}
&  \operatorname{Tr}\left\{  \left(  \rho_{0}^{1/2}\left[  \sigma_{0}^{\gamma
}\right]  ^{-1}\rho_{0}^{1/2}\right)  ^{\alpha}\right\} \nonumber\\
&  =\operatorname{Tr}\left\{  \left(  \left[  \exp\left[  -\frac{1}{2}\hat
{x}^{T}H_{\rho}\hat{x}\right]  \right]  ^{\frac{1}{2}}\left[  \left[
\exp\left[  -\frac{1}{2}\hat{x}^{T}H_{\sigma}\hat{x}\right]  \right]
^{\gamma}\right]  ^{-1}\left[  \exp\left[  -\frac{1}{2}\hat{x}^{T}H_{\rho}%
\hat{x}\right]  \right]  ^{\frac{1}{2}}\right)  ^{\alpha}\right\} \\
&  =\operatorname{Tr}\left\{  \left(  \exp\left[  -\frac{1}{2}\hat{x}%
^{T}\left[  H_{\rho}/2\right]  \hat{x}\right]  \exp\left[  -\frac{1}{2}\hat
{x}^{T}\left(  \gamma H_{\sigma}\right)  \hat{x}\right]  ^{-1}\exp\left[
-\frac{1}{2}\hat{x}^{T}\left[  H_{\rho}/2\right]  \hat{x}\right]  \right)
^{\alpha}\right\}  . \label{eq:sandwiched-1st-block-lastline-a>1}%
\end{align}
Let $V_{\sigma(\gamma)}$ denote the covariance matrix corresponding to $\gamma
H_{\sigma}$.\ From Proposition~\ref{prop:rho-to-alpha}, we know that it is
given by \eqref{eq:V-sigma-beta-alpha>1}. By applying
Proposition~\ref{prop:inverse-sandwiched-by-sqrt}, we find that
\begin{equation}
\exp\left[  -\frac{1}{2}\hat{x}^{T}\left[  H_{\rho}/2\right]  \hat{x}\right]
\exp\left[  -\frac{1}{2}\hat{x}^{T}\left(  \gamma H_{\sigma}\right)  \hat
{x}\right]  ^{-1}\exp\left[  -\frac{1}{2}\hat{x}^{T}\left[  H_{\rho}/2\right]
\hat{x}\right]  =\exp\left[  -\frac{1}{2}\hat{x}^{T}H_{\xi}\hat{x}\right]  ,
\end{equation}
where the covariance matrix $V_{\xi}$\ corresponding to $H_{\xi}$ is given by
\eqref{eq:V-ksi-sandwiched-alpha>1}. Furthermore, the covariance matrix
$V_{\xi}$ is legitimate because $H_{\xi}>0$, which in turn follows from the
assumption $V_{\sigma(\gamma)}-V_{\rho}>0$ and the discussion in the proof of
Proposition~\ref{prop:inverse-sandwiched-by-sqrt}. Then we find that
\begin{multline}
\operatorname{Tr}\left\{  \left(  \exp\left[  -\frac{1}{2}\hat{x}^{T}\left[
H_{\rho}/2\right]  \hat{x}\right]  \exp\left[  -\frac{1}{2}\hat{x}^{T}\left(
\gamma H_{\sigma}\right)  \hat{x}\right]  ^{-1}\exp\left[  -\frac{1}{2}\hat
{x}^{T}\left[  H_{\rho}/2\right]  \hat{x}\right]  \right)  ^{\alpha}\right\}
\\
=\operatorname{Tr}\left\{  \left(  \exp\left[  -\frac{1}{2}\hat{x}^{T}H_{\xi
}\hat{x}\right]  \right)  ^{\alpha}\right\}  =\operatorname{Tr}\left\{
\left(  \exp\left[  -\frac{1}{2}\hat{x}^{T}\left[  \alpha H_{\xi}\right]
\hat{x}\right]  \right)  \right\}  .
\end{multline}
By Proposition~\ref{prop:rho-to-alpha}, the covariance matrix $V_{\xi(\alpha
)}$ corresponding to $\alpha H_{\xi}$ is given by
\eqref{eq:V-ksi-alpha-sandwiched-alpha>1}. We can then conclude that
\begin{equation}
\operatorname{Tr}\left\{  \left(  \exp\left[  -\frac{1}{2}\hat{x}^{T}\left[
\alpha H_{\xi}\right]  \hat{x}\right]  \right)  \right\}  =\left[  \det
(V_{\xi(\alpha)}+i\Omega/2)\right]  ^{1/2}.
\end{equation}
This, along with \eqref{eq: sandwiched renyi a>1 prefactor} implies \eqref{eq:sandwiched-Renyi-alpha>1}.
\end{proof}

\bigskip

The collision relative entropy is a special case of the sandwiched R\'enyi
relative entropy, introduced in Ref.~\onlinecite[Definition~5.3.1]{Renner2005} and
applied in subsequent work \cite{BCW14,DFW15,Dupuis2014,BG14}. As a special
case of Theorem~\ref{thm:sandwiched-alpha>1}, we arrive at the following
expression for the collision relative quasi-entropy $\widetilde{Q}_{2}%
(\rho\Vert\sigma)$ after applying Corollaries~\ref{cor:rho-to-the-2} and
\ref{prop:rho-to-1/2}:

\begin{corollary}
The collision relative quasi-entropy $\widetilde{Q}_{2}(\rho\Vert\sigma)$ of
two Gaussian states $\rho$ and $\sigma$ such that $V_{\sigma(1/2)}-V_{\rho}>0$
is given by
\begin{align}
\widetilde{Q}_{2}(\rho\Vert\sigma)  &  =\operatorname{Tr}\left\{  \left(
\rho^{1/2}\sigma^{-1/2}\rho^{1/2}\right)  ^{2}\right\} \\
&  =\frac{Z_{\sigma}}{Z_{\rho}^{2}}\left[  \det\left(  \left[  V_{\xi
(2)}+i\Omega\right]  /2\right)  \right]  ^{1/2}\exp\left\{  2\ \delta
s^{T}\left(  V_{\sigma\left(  1/2\right)  } -V_{\rho} \right)  ^{-1}\delta
s\right\}  ,
\end{align}
where%
\begin{align}
V_{\xi(2)}  &  =\frac{1}{2}(V_{\xi}+\Omega V_{\xi}^{-1}\Omega^{T}),\\
V_{\xi}  &  =V_{\rho}+\sqrt{I+(V_{\rho}\Omega)^{-2}}V_{\rho}(V_{\sigma
(1/2)}-V_{\rho})^{-1}V_{\rho}\sqrt{I+(\Omega V_{\rho})^{-2}},\\
V_{\sigma(1/2)}  &  =\left(  \sqrt{I+(V_{\sigma}\Omega)^{-2}}+I\right)
V_{\sigma}.
\end{align}

\end{corollary}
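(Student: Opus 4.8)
The plan is to obtain this corollary as the specialization of Theorem~\ref{thm:sandwiched-alpha>1} to $\alpha = 2$, followed by two algebraic simplifications supplied by Corollaries~\ref{cor:rho-to-the-2} and \ref{prop:rho-to-1/2}. First I would set $\alpha = 2$ in \eqref{eq:sandwiched-Renyi-alpha>1}, so that $\gamma = (\alpha - 1)/\alpha = 1/2$. With this choice the prefactor $Z_{\sigma}^{\alpha-1}/Z_{\rho}^{\alpha}$ becomes $Z_{\sigma}/Z_{\rho}^{2}$, the positivity hypothesis $V_{\sigma(\gamma)} - V_{\rho} > 0$ becomes $V_{\sigma(1/2)} - V_{\rho} > 0$, and the displacement factor $\exp\{\alpha\,\delta s^{T}(V_{\sigma(\gamma)} - V_{\rho})^{-1}\delta s\}$ becomes $\exp\{2\,\delta s^{T}(V_{\sigma(1/2)} - V_{\rho})^{-1}\delta s\}$, which already matches the claimed form. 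One also uses that $\sigma^{(1-\alpha)/\alpha} = \sigma^{-1/2}$ when $\alpha = 2$ in order to identify the left-hand side of \eqref{eq:sandwiched-quasi} with the collision relative quasi-entropy $\widetilde{Q}_{2}(\rho\Vert\sigma) = \operatorname{Tr}\{(\rho^{1/2}\sigma^{-1/2}\rho^{1/2})^{2}\}$.

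Next I would simplify the intermediate covariance matrices. Applying Corollary~\ref{prop:rho-to-1/2} with $\sigma$ playing the role of $\rho$, the matrix $V_{\sigma(\gamma)} = V_{\sigma(1/2)}$ appearing in \eqref{eq:V-sigma-beta-alpha>1} with $\gamma = 1/2$ collapses to $(\sqrt{I + (V_{\sigma}\Omega)^{-2}} + I)V_{\sigma}$, which is exactly the closed form stated in the corollary. Substituting this into \eqref{eq:V-ksi-sandwiched-alpha>1} yields the stated expression for $V_{\xi}$. Finally, since $V_{\xi(\alpha)} = V_{\xi(2)}$ is, by \eqref{eq:V-ksi-alpha-sandwiched-alpha>1}, the covariance matrix associated to $\alpha H_{\xi}$ with $\alpha = 2$, Corollary~\ref{cor:rho-to-the-2} gives $V_{\xi(2)} = \tfrac{1}{2}(V_{\xi} + \Omega V_{\xi}^{-1}\Omega^{T})$, as claimed.

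Assembling these pieces into \eqref{eq:sandwiched-Renyi-alpha>1} produces the displayed identity. There is no genuine obstacle here: every step is either a direct substitution of $\alpha = 2$ or an invocation of an already-proved corollary, so the proof is short. The only care required is bookkeeping of which state appears in which auxiliary result — the half-power corollary (Corollary~\ref{prop:rho-to-1/2}) is applied to $\sigma$ to handle $\gamma = 1/2$, whereas the squaring corollary (Corollary~\ref{cor:rho-to-the-2}) is applied to the composite exponential quadratic form with Hamiltonian matrix $H_{\xi}$ — together with verifying that the hypothesis $V_{\sigma(1/2)} - V_{\rho} > 0$ is indeed the instance of the general assumption $V_{\sigma(\gamma)} - V_{\rho} > 0$ needed for the underlying theorem to apply.
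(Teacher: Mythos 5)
Your proposal is correct and follows exactly the route the paper takes: the corollary is obtained by setting $\alpha=2$ (hence $\gamma=1/2$) in Theorem~\ref{thm:sandwiched-alpha>1} and then invoking Corollary~\ref{prop:rho-to-1/2} (applied to $\sigma$) to get the closed form of $V_{\sigma(1/2)}$ and Corollary~\ref{cor:rho-to-the-2} (applied to the form with Hamiltonian matrix $H_{\xi}$) to get $V_{\xi(2)}=\tfrac{1}{2}(V_{\xi}+\Omega V_{\xi}^{-1}\Omega^{T})$. Your bookkeeping of the prefactor, the displacement factor, and the positivity hypothesis is also accurate, so nothing is missing.
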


\section{Max-relative entropy}

\label{sec:max-rel-ent}

Now we derive a formula for the max-relative entropy \cite{D09}, which is
defined as%
\begin{equation}
D_{\max}(\rho\Vert\sigma)\equiv\ln\left\Vert \rho^{1/2}\sigma^{-1}\rho
^{1/2}\right\Vert _{\infty}.
\end{equation}

\begin{theorem}
\label{thm:max-rel-ent}For the case in which $V_{\sigma}-V_{\rho}>0$, the
max-relative entropy $D_{\max}(\rho\Vert\sigma)$ of two Gaussian states $\rho$
and $\sigma$ is given by%
\begin{equation}
D_{\max}(\rho\Vert\sigma)=\ln\left(  \frac{Z_{\sigma}}{Z_{\rho}}\right)
-\sum_{j=1}^{n}\operatorname{arcoth}(\nu_{j}^{\prime})+\delta s^{T}\left(
V_{\sigma}-V_{\rho}\right)  ^{-1}\delta s,
\label{eq:max-rel-ent-symp-to-carry}%
\end{equation}
where $\delta s = s_\rho - s_\sigma$ and $\nu_{j}^{\prime}$ is the $j$th symplectic eigenvalue of the following
covariance matrix:%
\begin{equation}
V^{\prime}=V_{\rho}+\sqrt{I+(V_{\rho}\Omega)^{-2}}V_{\rho}(V_{\sigma}-V_{\rho
})^{-1}V_{\rho}\sqrt{I+(\Omega V_{\rho})^{-2}}.
\end{equation}
Alternatively, we have that%
\begin{equation}
D_{\max}(\rho\Vert\sigma)=\ln\left(  \frac{Z_{\sigma}}{Z_{\rho}}\right)
-\frac{1}{2}\operatorname{Tr}\left\{  \operatorname{arcoth}\left(
\sqrt{-V^{\prime}\Omega V^{\prime}\Omega}\right)  \right\}  +\delta
s^{T}\left(  V_{\sigma}-V_{\rho}\right)  ^{-1}\delta s.
\label{eq:max-rel-ent-func-matrix}%
\end{equation}

\end{theorem}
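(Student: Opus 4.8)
The plan is to reduce $\bigl\Vert\rho^{1/2}\sigma^{-1}\rho^{1/2}\bigr\Vert_{\infty}$ to the operator norm of a centered Gaussian exponential--quadratic operator whose symplectic data is already supplied by Proposition~\ref{prop:inverse-sandwiched-by-sqrt}, and then to read off its largest eigenvalue. First I would write $\rho=D(-s_{\rho})(\rho_{0}/Z_{\rho})D(s_{\rho})$ and $\sigma=D(-s_{\sigma})(\sigma_{0}/Z_{\sigma})D(s_{\sigma})$ with $\rho_{0}=\exp[-\tfrac12\hat{x}^{T}H_{\rho}\hat{x}]$ and $\sigma_{0}=\exp[-\tfrac12\hat{x}^{T}H_{\sigma}\hat{x}]$, note that $\rho^{1/2}\sigma^{-1}\rho^{1/2}$ is positive, self-adjoint and trace-class, and conjugate by the unitary $D(s_{\rho})$ while pulling out the normalizations, obtaining $\bigl\Vert\rho^{1/2}\sigma^{-1}\rho^{1/2}\bigr\Vert_{\infty}=\frac{Z_{\sigma}}{Z_{\rho}}\bigl\Vert\rho_{0}^{1/2}D(\delta s)\sigma_{0}^{-1}D(-\delta s)\rho_{0}^{1/2}\bigr\Vert_{\infty}$ with $\delta s=s_{\rho}-s_{\sigma}$.

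Next I would dispose of the displacement exactly as in the proof of Theorem~\ref{thm:sandwiched-alpha>1} in the degenerate case $\gamma=1$ with no outer power: using \eqref{eq: displ gauss 1}--\eqref{eq: displ gauss 2} of Corollary~\ref{cor: displ op}, Remark~\ref{rem: Corollary displ Gaussian inverse}, and Lemma~\ref{lem: displ term with inverse} with $H_{9}=H_{\sigma}$ and $H_{10}=H_{\rho}$ (whose hypothesis $V_{9}>V_{10}$ is precisely the assumed $V_{\sigma}-V_{\rho}>0$), one gets $\rho_{0}^{1/2}D(\delta s)\sigma_{0}^{-1}D(-\delta s)\rho_{0}^{1/2}=e^{\delta s^{T}(V_{\sigma}-V_{\rho})^{-1}\delta s}\,D(-t)\exp[-\tfrac12\hat{x}^{T}H_{8}\hat{x}]\,D(t)$ for some vector $t$ whose precise value is irrelevant, where by Proposition~\ref{prop:inverse-sandwiched-by-sqrt} applied to $H_{4}=H_{\rho}$, $H_{5}=H_{\sigma}$ (again using $V_{\sigma}-V_{\rho}>0$) the operator $\exp[-\tfrac12\hat{x}^{T}H_{8}\hat{x}]=\rho_{0}^{1/2}\sigma_{0}^{-1}\rho_{0}^{1/2}$ is a bona fide bounded centered Gaussian operator with positive-definite $H_{8}$ and covariance matrix $V_{8}=V'$. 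Since the left-hand side is positive self-adjoint, the scalar prefactor positive, and $\exp[-\tfrac12\hat{x}^{T}H_{8}\hat{x}]$ positive self-adjoint, the conjugation by $D(t)$ must preserve positivity and self-adjointness, forcing $t$ to be real and $D(t)$ unitary, so it leaves the operator norm unchanged; hence $\bigl\Vert\rho^{1/2}\sigma^{-1}\rho^{1/2}\bigr\Vert_{\infty}=\frac{Z_{\sigma}}{Z_{\rho}}e^{\delta s^{T}(V_{\sigma}-V_{\rho})^{-1}\delta s}\bigl\Vert\exp[-\tfrac12\hat{x}^{T}H_{8}\hat{x}]\bigr\Vert_{\infty}$.

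Then I would compute $\bigl\Vert\exp[-\tfrac12\hat{x}^{T}H_{8}\hat{x}]\bigr\Vert_{\infty}$: by a Williamson-type symplectic diagonalization of $H_{8}$ and the corresponding Gaussian unitary, $\hat{x}^{T}H_{8}\hat{x}$ is unitarily equivalent to $\sum_{j=1}^{n}h_{j}(\hat{q}_{j}^{2}+\hat{p}_{j}^{2})=\sum_{j}h_{j}(2\hat{n}_{j}+1)$ with $h_{j}$ the symplectic eigenvalues of $H_{8}$, so the spectrum is $\{e^{-\frac12\sum_{j}h_{j}}\prod_{j}e^{-h_{j}k_{j}}:k_{j}\in\mathbb{Z}_{\geq0}\}$ and the norm equals $e^{-\frac12\sum_{j}h_{j}}$. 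The relations \eqref{eq:cov-to-gibbs}--\eqref{eq:gibbs-to-cov} at the level of symplectic eigenvalues give $\nu_{j}'=\coth(h_{j}/2)$, i.e.\ $h_{j}=2\operatorname{arcoth}(\nu_{j}')$ for the symplectic eigenvalues $\nu_{j}'$ of $V_{8}=V'$, whence $\bigl\Vert\exp[-\tfrac12\hat{x}^{T}H_{8}\hat{x}]\bigr\Vert_{\infty}=e^{-\sum_{j}\operatorname{arcoth}(\nu_{j}')}$; taking logarithms yields \eqref{eq:max-rel-ent-symp-to-carry}. For the functional-calculus form \eqref{eq:max-rel-ent-func-matrix} I would observe that $-V'\Omega V'\Omega=(V'i\Omega)^{2}$ and that $V'i\Omega$ is similar, via the symplectic matrix of $V'$, to the Hermitian matrix with eigenvalues $\{\pm\nu_{j}'\}$, so $\sqrt{-V'\Omega V'\Omega}$ has eigenvalues $\{\nu_{j}'\}$ each of multiplicity two, giving $\tfrac12\operatorname{Tr}\{\operatorname{arcoth}(\sqrt{-V'\Omega V'\Omega})\}=\sum_{j}\operatorname{arcoth}(\nu_{j}')$, and substituting into \eqref{eq:max-rel-ent-symp-to-carry} gives \eqref{eq:max-rel-ent-func-matrix}.

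The hard part will be the step that moves the operator norm past the residual displacement $D(t)$: a priori $t$ need not be real, since the matrices $\exp[i\Omega H]$ appearing in its construction have complex entries, so one must argue from the manifest positivity and self-adjointness of $\rho^{1/2}\sigma^{-1}\rho^{1/2}$ (together with the positivity of $\exp[-\tfrac12\hat{x}^{T}H_{8}\hat{x}]$ and of the scalar prefactor) that $t$ is in fact real and $D(t)$ unitary, so the norm is insensitive to it. An alternative that sidesteps this entirely is to take $\alpha\to\infty$ in Theorem~\ref{thm:sandwiched-alpha>1}: the hypothesis $V_{\sigma}-V_{\rho}>0$ forces $V_{\sigma(\gamma)}-V_{\rho}>0$ for all $\gamma\in(0,1]$ because $V_{\sigma(\gamma)}\geq V_{\sigma}$ (same symplectic basis, larger symplectic eigenvalues for $\gamma\leq1$), so the formula holds for all $\alpha>1$; then using $\lim_{\alpha\to\infty}\widetilde{D}_{\alpha}=D_{\max}$ together with the asymptotics $\operatorname{csch}^{2}(\alpha a)\sim4e^{-2\alpha a}$ one evaluates $\lim_{\alpha\to\infty}\frac{1}{2(\alpha-1)}\ln\det([V_{\xi(\alpha)}+i\Omega]/2)=-\sum_{j}\operatorname{arcoth}(\nu_{j}')$, reproducing \eqref{eq:max-rel-ent-symp-to-carry}.
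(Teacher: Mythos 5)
Your proposal is correct and follows essentially the same route as the paper: you pull out the mean-vector factor exactly as in the $\gamma=\alpha=1$ specialization of the argument for Theorem~\ref{thm:sandwiched-alpha>1}, identify the centered operator's covariance matrix $V'$ via Proposition~\ref{prop:inverse-sandwiched-by-sqrt}, and extract the largest eigenvalue through the symplectic eigenvalues (your direct oscillator-spectrum computation $h_j=2\operatorname{arcoth}(\nu_j')$ is equivalent to the paper's route via the normalized state and $\prod_j 2/(\nu_j'+1)$ together with the determinant identity), finishing with the same similarity argument for $\sqrt{-V'\Omega V'\Omega}$ to get the trace form. The residual-displacement subtlety you flag is glossed over in the paper as well (it simply carries the prefactor over from the trace calculation); note that your claim that positivity forces $t$ to be real is not needed and is shaky for complex $\exp[\pm i\Omega H]$ — the cleaner justification is that the operator norm of a positive operator equals its spectral radius, which is unchanged under the similarity by $D(t)$, or alternatively your $\alpha\to\infty$ limit of Theorem~\ref{thm:sandwiched-alpha>1} sidesteps it entirely.
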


\begin{proof}
To begin with, we discuss how to calculate the maximum eigenvalue of a
Gaussian state $\omega$ (i.e., $\left\Vert \omega\right\Vert _{\infty}$).
One can also find a discussion of this calculation in Ref.~\onlinecite{Holevo2011}.
Consider that a thermal state $\theta(N)$\ with mean photon number $N\geq0$ is
of the form $\sum_{n=0}^{\infty}N^{n}/\left(N+1\right)^{n+1}|n\rangle\langle n|$, so that
its maximum eigenvalue is equal to $\left[  N+1\right]  ^{-1}$ (corresponding
to the weight of the vacuum). From the Williamson theorem, we know that any
$n$-mode Gaussian state can be written as a unitary operator acting on a
tensor product of $n$ thermal states, and the mean photon number $N$\ of each
thermal state is related to the symplectic eigenvalue $\nu$ as $\nu=2N+1$. In
terms of the symplectic eigenvalue $\nu=2N+1$, the maximum eigenvalue of
$\theta(N)$ is equal to $\left[  N+1\right]  ^{-1}=2/(\nu+1)$. So, for a
general Gaussian state $\omega$, if we have the covariance matrix, we simply
perform a Williamson decomposition, and then we find that%
\begin{equation}
\left\Vert \omega\right\Vert _{\infty}=\prod\limits_{j=1}^{n}2/(\nu_{j}+1).
\label{eq:infin-norm-G-state}%
\end{equation}

Let $\rho_{0}$ and $\sigma_{0}$ denote the following operators:%
\begin{equation}
\rho_{0}=\exp\left[  -\frac{1}{2}\hat{x}^{T}H_{\rho}\hat{x}\right]
,\qquad\sigma_{0}=\exp\left[  -\frac{1}{2}\hat{x}^{T}H_{\sigma}\hat{x}\right]
.
\end{equation}
Consider that
\begin{align}
&  \left\Vert \rho^{1/2}\sigma^{-1}\rho^{1/2}\right\Vert _{\infty}\nonumber\\
&  =\left\Vert \left(  D(-s_{\rho})\left[  \frac{\rho_{0}}{Z_{\rho}}\right]
D(s_{\rho})\right)  ^{\frac{1}{2}}\left(  D(-s_{\sigma})\left[  \frac
{\sigma_{0}}{Z_{\sigma}}\right]  D(s_{\sigma})\right)  ^{-1}\left(
D(-s_{\rho})\left[  \frac{\rho_{0}}{Z_{\rho}}\right]  D(s_{\rho})\right)
^{\frac{1}{2}}\right\Vert _{\infty}\\
&  =\left\Vert D(-s_{\rho})\left[  \frac{\rho_{0}}{Z_{\rho}}\right]
^{\frac{1}{2}}D(s_{\rho})D(-s_{\sigma})\left[  \frac{\sigma_{0}}{Z_{\sigma}%
}\right]  ^{-1}D(s_{\sigma})D(-s_{\rho})\left[  \frac{\rho_{0}}{Z_{\rho}%
}\right]  ^{\frac{1}{2}}D(s_{\rho})\right\Vert _{\infty}\\
&  =\frac{Z_{\sigma}}{Z_{\rho}}\left\Vert \left[  \rho_{0}\right]  ^{\frac
{1}{2}}D\left(  \delta s\right)  \left[  \sigma_{0}\right]  ^{-1}D\left(
-\delta s\right)  \left[  \rho_{0}\right]  ^{\frac{1}{2}}\right\Vert _{\infty
},
\end{align}
where we have used the fact that the infinity norm of an operator is invariant
with respect to unitaries. Note that the operator
\begin{equation}
\left[  \rho_{0}\right]  ^{\frac{1}{2}}D\left(  \delta s\right)  \left[
\sigma_{0}\right]  ^{-1}D\left(  -\delta s\right)  \left[  \rho_{0}\right]
^{\frac{1}{2}}%
\end{equation}
is identical to the operator whose trace is evaluated in
\eqref{eq:sandwiched Inverse 1st block last line} of
Theorem~\ref{thm:sandwiched-alpha>1} when $\gamma$ and $\alpha$ are
independently set to $1$ in that expression. Thus, based on the mean-vector-dependent factor that is derived in
\eqref{eq: sandwiched renyi a>1 prefactor} and the fact that $\Vert A\Vert_\infty = \lim_{p\to \infty} \Vert A \Vert_p$, we have that
\begin{equation}
\left\Vert \rho^{1/2}\sigma^{-1}\rho^{1/2}\right\Vert _{\infty}=\left(
\frac{Z_{\sigma}}{Z_{\rho}}\right)  \exp\left\{  \delta s^{T}\left(
V_{\sigma}-V_{\rho}\right)  ^{-1}\delta s\right\}  \left\Vert \rho_{0}%
^{1/2}\sigma_{0}^{-1}\rho_{0}^{1/2}\right\Vert _{\infty}.
\label{eq: max rel entropy main 1}%
\end{equation}

Now consider that%
\begin{align}
&  \left\Vert \rho_{0}^{1/2}\sigma_{0}^{-1}\rho_{0}^{1/2}\right\Vert _{\infty
}\nonumber\\
&  =\left\Vert \left[  \exp\left[  -\frac{1}{2}\hat{x}^{T}H_{\rho}\hat
{x}\right]  \right]  ^{\frac{1}{2}}\left[  \exp\left[  -\frac{1}{2}\hat{x}%
^{T}H_{\sigma}\hat{x}\right]  \right]  ^{-1}\left[  \exp\left[  -\frac{1}%
{2}\hat{x}^{T}H_{\rho}\hat{x}\right]  \right]  ^{\frac{1}{2}}\right\Vert
_{\infty}\\
&  =\left\Vert \exp\left[  -\frac{1}{2}\hat{x}^{T}\left[  H_{\rho}/2\right]
\hat{x}\right]  \exp\left[  -\frac{1}{2}\hat{x}^{T}\left[  -H_{\sigma}\right]
\hat{x}\right]  \exp\left[  -\frac{1}{2}\hat{x}^{T}\left[  H_{\rho}/2\right]
\hat{x}\right]  \right\Vert _{\infty}.
\end{align}
From Proposition~\ref{prop:inverse-sandwiched-by-sqrt}, we conclude that there
exists an $H^{\prime}$ such that%
\begin{equation}
\exp\left[  -\frac{1}{2}\hat{x}^{T}\left[  H_{\rho}/2\right]  \hat{x}\right]
\exp\left[  -\frac{1}{2}\hat{x}^{T}\left[  -H_{\sigma}\right]  \hat{x}\right]
\exp\left[  -\frac{1}{2}\hat{x}^{T}\left[  H_{\rho}/2\right]  \hat{x}\right]
=\exp\left[  -\frac{1}{2}\hat{x}^{T}H^{\prime}\hat{x}\right]  .
\end{equation}
with corresponding covariance matrix $V^{\prime}$\ given by%
\begin{equation}
V^{\prime}=V_{\rho}+\sqrt{I+(V_{\rho}\Omega)^{-2}}V_{\rho}(V_{\sigma}-V_{\rho
})^{-1}V_{\rho}\sqrt{I+(\Omega V_{\rho})^{-2}}.
\end{equation}
Again applying Proposition~\ref{prop:inverse-sandwiched-by-sqrt}, we find that%
\begin{equation}
\operatorname{Tr}\left\{  \exp\left[  -\frac{1}{2}\hat{x}^{T}H^{\prime}\hat
{x}\right]  \right\}  =\left[  \det\left(  \left[  V^{\prime}+i\Omega\right]
/2\right)  \right]  ^{1/2}.
\end{equation}
Continuing, we find that%
\begin{multline}
\left\Vert \rho_{0}^{1/2}\sigma_{0}^{-1}\rho_{0}^{1/2}\right\Vert _{\infty
}=\left[  \det\left(  \left[  V^{\prime}+i\Omega\right]  /2\right)  \right]
^{1/2}\times\\
\left\Vert \frac{\exp\left[  -\frac{1}{2}\hat{x}^{T}\left[  H_{\rho}/2\right]
\hat{x}\right]  \exp\left[  -\frac{1}{2}\hat{x}^{T}\left[  -H_{\sigma}\right]
\hat{x}\right]  \exp\left[  -\frac{1}{2}\hat{x}^{T}\left[  H_{\rho}/2\right]
\hat{x}\right]  }{\left[  \det\left(  \left[  V^{\prime}+i\Omega\right]
/2\right)  \right]  ^{1/2}}\right\Vert _{\infty}.
\end{multline}
The term inside the infinity norm is a state because $V^{\prime}$ is a
legitimate covariance matrix. Using the expression in
\eqref{eq:infin-norm-G-state}\ for the infinity norm of a Gaussian state, we
find that%
\begin{equation}
\left\Vert \frac{\exp\left[  -\frac{1}{2}\hat{x}^{T}\left[  H_{\rho}/2\right]
\hat{x}\right]  \exp\left[  -\frac{1}{2}\hat{x}^{T}\left[  -H_{\sigma}\right]
\hat{x}\right]  \exp\left[  -\frac{1}{2}\hat{x}^{T}\left[  H_{\rho}/2\right]
\hat{x}\right]  }{\left[  \det\left(  \left[  V^{\prime}+i\Omega\right]
/2\right)  \right]  ^{1/2}}\right\Vert _{\infty}=\prod\limits_{j=1}^{n}%
2/(\nu_{j}^{\prime}+1),
\end{equation}
where $\nu_{j}^{\prime}$ is the $j$th symplectic eigenvalue of $V^{\prime}$.
Using the fact that \cite[Eq.~(2.14)]{MM12}%
\begin{equation}
\left[  \det\left(  \left[  V^{\prime}+i\Omega\right]  /2\right)  \right]
^{1/2}=\prod\limits_{j=1}^{n}\frac{1}{2}\sqrt{(\nu_{j}^{\prime}+1)(\nu
_{j}^{\prime}-1)},
\end{equation}
we find that%
\begin{equation}
\left\Vert \rho_{0}^{1/2}\sigma_{0}^{-1}\rho_{0}^{1/2}\right\Vert _{\infty
}=\left[  \det\left(  \left[  V^{\prime}+i\Omega\right]  /2\right)  \right]
^{1/2}\prod\limits_{j=1}^{n}2/(\nu_{j}^{\prime}+1)=\prod\limits_{j=1}^{n}%
\sqrt{\frac{\nu_{j}^{\prime}-1}{\nu_{j}^{\prime}+1}}.
\end{equation}
Taking a logarithm, we see that%
\begin{equation}
\ln\left\Vert \rho_{0}^{1/2}\sigma_{0}^{-1}\rho_{0}^{1/2}\right\Vert _{\infty
}=\sum_{j=1}^{n}\frac{1}{2}\ln\left(  \frac{\nu_{j}^{\prime}-1}{\nu
_{j}^{\prime}+1}\right)  =-\sum_{j=1}^{n}\frac{1}{2}\ln\left(  \frac{\nu
_{j}^{\prime}+1}{\nu_{j}^{\prime}-1}\right)  =-\sum_{j=1}^{n}%
\operatorname{arcoth}(\nu_{j}^{\prime}).
\end{equation}
Combining with \eqref{eq: max rel entropy main 1} gives \eqref{eq:max-rel-ent-symp-to-carry}.

To arrive at the formula in \eqref{eq:max-rel-ent-func-matrix}, consider for a
covariance matrix $V$ with symplectic diagonalization $S(D\oplus D)S^{T}$,
where $S$ is a symplectic matrix and $D$ is a diagonal matrix of symplectic
eigenvalues, we have that (see Ref.~\onlinecite[Appendix~A]{WTLB16})%
\begin{equation}
Vi\Omega=S(U\otimes I_{n})\left(  [-D]\oplus D\right)  \left[  S(U\otimes
I_{n})\right]  ^{-1}, \label{eq:max-rel-ent-last-steps-1}%
\end{equation}
where $U$ is the following unitary matrix:%
\begin{equation}
U\equiv\frac{1}{\sqrt{2}}%
\begin{bmatrix}
1 & 1\\
i & -i
\end{bmatrix}
.
\end{equation}
From this, we see that%
\begin{equation}
-V\Omega V\Omega=\left(  Vi\Omega\right)  \left(  Vi\Omega\right)  =S(U\otimes
I_{n})\left(  [D^{2}\oplus D^{2}\right)  \left[  S(U\otimes I_{n})\right]
^{-1},
\end{equation}
which implies that%
\begin{equation}
\sqrt{-V\Omega V\Omega}=S(U\otimes I_{n})\left(  D\oplus D\right)  \left[
S(U\otimes I_{n})\right]  ^{-1},
\end{equation}
and in turn that%
\begin{align}
\sum_{j=1}^{n}\operatorname{arcoth}(\nu_{j}^{\prime})  &  =\frac{1}%
{2}\operatorname{Tr}\{\operatorname{arcoth}(\left[  D\oplus D\right]  )\}\\
&  =\frac{1}{2}\operatorname{Tr}\{S(U\otimes I_{n})\operatorname{arcoth}%
(\left[  D\oplus D\right]  )\left[  S(U\otimes I_{n})\right]  ^{-1}\}\\
&  =\frac{1}{2}\operatorname{Tr}\{\operatorname{arcoth}(S(U\otimes
I_{n})\left[  D\oplus D\right]  \left[  S(U\otimes I_{n})\right]  ^{-1})\}\\
&  =\frac{1}{2}\operatorname{Tr}\{\operatorname{arcoth}(\sqrt{-V\Omega
V\Omega})\}. \label{eq:max-rel-ent-last-steps-last}%
\end{align}
This concludes the proof.
\end{proof}

In the above theorem, we provided a formula for the max-relative entropy that
holds whenever $V_{\sigma}-V_{\rho}>0$. The proposition below states that the
condition $V_{\sigma}-V_{\rho}\geq0$ is necessary for the max-relative entropy to be finite
(it still remains open to determine whether the condition $V_{\sigma}-V_{\rho
}>0$\ is necessary and sufficient.)

\begin{proposition}
\label{prop:nec-max-rel-ent}A necessary condition for the max-relative entropy
$D_{\max}(\rho\Vert\sigma)$ of two Gaussian states $\rho$ and $\sigma$ to be
finite is that $V_{\sigma}-V_{\rho}\geq0$.
\end{proposition}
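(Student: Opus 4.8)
The plan is to reduce the finiteness of $D_{\max}(\rho\Vert\sigma)$ to a pointwise bound between the Husimi functions of $\rho$ and $\sigma$, and then to read off the covariance‑matrix inequality directly from that bound.

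First I would use that, since $\rho$ and $\sigma$ are faithful, $D_{\max}(\rho\Vert\sigma)<\infty$ is equivalent to the existence of $\mu>0$ with $\rho\leq\mu\sigma$ as operators. Testing this operator inequality against the pure states $|z\rangle\langle z|$ built from coherent states $|z\rangle$, $z\in\mathbb{C}^{n}$, yields $\langle z|\rho|z\rangle\leq\mu\langle z|\sigma|z\rangle$ for every $z$, i.e.\ $Q_{\rho}(r)\leq\mu\,Q_{\sigma}(r)$ for all $r\in\mathbb{R}^{2n}$, where $Q_{\omega}$ denotes the Husimi $Q$-function of $\omega$, which is a genuine nonnegative smooth function (not a distribution). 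For a Gaussian state with mean vector $s_{\omega}$ and covariance matrix $V_{\omega}$, the function $Q_{\omega}$ is itself Gaussian and, up to an immaterial positive prefactor, equal to $\exp\!\big(-(r-s_{\omega})^{T}(V_{\omega}+I)^{-1}(r-s_{\omega})\big)$.

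Next I would take logarithms in $Q_{\rho}\leq\mu\,Q_{\sigma}$. Absorbing $\mu$ and the ratio of prefactors into a single constant $C$, this gives
\[
(r-s_{\sigma})^{T}(V_{\sigma}+I)^{-1}(r-s_{\sigma})-(r-s_{\rho})^{T}(V_{\rho}+I)^{-1}(r-s_{\rho})\leq\ln C
\]
for all $r\in\mathbb{R}^{2n}$. The left-hand side is a real quadratic polynomial in $r$ whose degree-two part is $r^{T}\big[(V_{\sigma}+I)^{-1}-(V_{\rho}+I)^{-1}\big]r$. A quadratic polynomial on $\mathbb{R}^{2n}$ that is bounded above must have negative semidefinite leading form, so $(V_{\sigma}+I)^{-1}\leq(V_{\rho}+I)^{-1}$; applying operator antitonicity of the inverse on positive-definite matrices gives $V_{\rho}+I\leq V_{\sigma}+I$, hence $V_{\sigma}-V_{\rho}\geq0$, as claimed.

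I expect the main (though still routine) obstacle to be the two standard Gaussian-optics facts invoked along the way: that the $Q$-function of a Gaussian state has the stated closed form with covariance $V_{\omega}+I$, and that passing from the operator inequality $\rho\leq\mu\sigma$ to the pointwise inequality $Q_{\rho}\leq\mu\,Q_{\sigma}$ is legitimate (this uses that coherent-state expectations equal $\pi^{n}$ times the $Q$-function values and that $Q$ is an honest function). Everything after that — boundedness of a quadratic polynomial forcing its Hessian to be negative semidefinite, and operator antitonicity of matrix inversion — is elementary.
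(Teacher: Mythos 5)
Your proposal is correct and follows essentially the same route as the paper's own proof: test the operator inequality $\rho\leq\mu\sigma$ on coherent (displaced vacuum) states, use the Gaussian form of the Husimi $Q$-function with covariance $V+I$, and conclude $(V_{\sigma}+I)^{-1}\leq(V_{\rho}+I)^{-1}$, hence $V_{\sigma}\geq V_{\rho}$. The only difference is that you spell out the intermediate steps (boundedness of the quadratic form and antitonicity of the matrix inverse) that the paper leaves implicit.
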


\begin{proof}
Suppose that $D_{\max}(\rho\Vert\sigma)<+\infty$, which implies that there
exists a constant $M$ such that $\rho\leq M\sigma$. Then for all
$u\in\mathbb{R}^{2n}$, we can test the inequality on the displaced vacuum
state $|u\rangle=D(u)|0\rangle$, giving that%
\begin{equation}
\langle u|\rho|u\rangle\leq M\langle u|\sigma|u\rangle.
\label{eq:max-ent-constraint}%
\end{equation}
These expectation values on displaced vacuum states form what is known as the
Husimi Q-function, which for Gaussian states is given by the following
Gaussian form \cite{S17}:%
\begin{equation}
\langle u|\rho|u\rangle=\frac{2^{n}}{\sqrt{\det(V_{\rho}+I)}}\exp\left\{
-\left(  u-s_{\rho}\right)  ^{T}\left(  V_{\rho}+I\right)  ^{-1}\left(
u-s_{\rho}\right)  \right\}  .
\end{equation}
The constraint in \eqref{eq:max-ent-constraint} is then equivalent to%
\begin{equation}
\exp\left\{  -\left(  u-s_{\rho}\right)  ^{T}\left(  V_{\rho}+I\right)
^{-1}\left(  u-s_{\rho}\right)  +\left(  u-s_{\sigma}\right)  ^{T}\left(
V_{\sigma}+I\right)  ^{-1}\left(  u-s_{\sigma}\right)  \right\}  \leq
M\sqrt{\frac{\det(V_{\rho}+I)}{\det(V_{\sigma}+I)}},
\end{equation}
which should be obeyed for all $u\in\mathbb{R}^{2n}$. This is possible only if
$\left(  V_{\rho}+I\right)  ^{-1}\geq\left(  V_{\sigma}+I\right)  ^{-1}$,
which implies that we should have $V_{\sigma}\geq V_{\rho}$.
\end{proof}

\begin{remark}
The development at the end of the proof of Theorem~\ref{thm:max-rel-ent}%
\ extends more generally to any function $f:[1,\infty)\rightarrow
\lbrack0,\infty)$. Given a covariance matrix $V$ with symplectic eigenvalues
$\left\{  \nu_{j}\right\}  _{j=1}^{n}$, then%
\begin{align}
\sum_{j=1}^{n}f(\nu_{j})  &  =\frac{1}{2}\operatorname{Tr}\{f(\sqrt{-V\Omega
V\Omega})\},\\
\prod\limits_{j=1}^{n}f(\nu_{j})  &  =\sqrt{\det\left(  f(\sqrt{-V\Omega
V\Omega})\right)  }.
\end{align}
The first equality follows from a development identical to that in
\eqref{eq:max-rel-ent-last-steps-1}--\eqref{eq:max-rel-ent-last-steps-last}.
The second equality follows because%
\begin{align}
\prod\limits_{j=1}^{n}f(\nu_{j})  &  =\sqrt{\det(f\left(  D\oplus D\right)
)}\\
&  =\sqrt{\det\left(  S(U\otimes I_{n})\right)  \det(f\left(  D\oplus
D\right)  )\det\left(  \left[  S(U\otimes I_{n})\right]  ^{-1}\right)  }\\
&  =\sqrt{\det(S(U\otimes I_{n})f\left(  D\oplus D\right)  \left[  S(U\otimes
I_{n})\right]  ^{-1})}\\
&  =\sqrt{\det\left(  f\left[  S(U\otimes I_{n})\left(  D\oplus D\right)
\left[  S(U\otimes I_{n})\right]  ^{-1}\right]  \right)  }\\
&  =\sqrt{\det\left(  f(\sqrt{-V\Omega V\Omega})\right)  },
\end{align}
with some of the steps following from the development in \eqref{eq:max-rel-ent-last-steps-1}--\eqref{eq:max-rel-ent-last-steps-last}.
\end{remark}

\section{Applications}

\label{sec:apps}

\subsection{Quantum state discrimination and hypothesis testing}

\label{sec:state-disc}

Quantum state discrimination is one of the central problems in quantum
information theory \cite{BK14}. It represents the quantum generalization of
the classical statistical decision-theoretic problem of deciding the
probability distribution corresponding to a random variable, given some
candidate distributions. There is an inherent probability of error associated
with the task, which in the classical case is due to the overlap between the
candidate distributions, and in the quantum case is additionally due to the
non-commutativity of the candidate states. The goal in part is to determine
fundamental bounds on the error probability associated with the discrimination
as dictated by the laws of quantum mechanics. Quantum state discrimination is
important in several areas of quantum information, particularly in quantum
communication and cryptography, where information is encoded in nonorthogonal
quantum states, and optimal decoding requires minimum error discrimination at
the quantum limit. Since continuous-variable physical systems such as the
bosonic field modes of electromagnetic radiation form particularly good
carriers of information in communication scenarios, the discrimination of
Gaussian states is especially important, and has been extensively studied in
the past (see, e.g., Ref.~\onlinecite{I11}).

Binary quantum state discrimination is largely studied in two flavors, namely
with symmetric and asymmetric goals in minimizing the two possible types of
error probabilities in decision. In the symmetric case, the goal is to minimize the average
probability of error in discriminating two quantum states. The optimal
measurement achieving the smallest average error probability was determined in
Refs.~\onlinecite{H69,Helstrom_76} and is known as the Helstrom limit. The Helstrom limit
is a function of the trace distance between the candidate states, which, at
least in the finite-dimensional case, becomes more difficult to calculate as
the dimension of the Hilbert space grows larger \cite{W02}. Furthermore, as
far as we are aware, there is no known simple formula for the trace distance
between two Gaussian states. Thus, upper bounds on the Helstrom limit that are
easier to calculate have been developed. In this regard, the quantum Chernoff
bound \cite{ACMB07,CMMAB08,NS09} serves as a good substitute, and it actually
gives an exact characterization of the exponential decay of the average error
probability in the limit when many copies of the state are available.\ The
quantum Chernoff bound can be expressed as an optimized Petz--R\'{e}nyi
relative entropy for $\alpha\in(0,1)$.

In one variant of asymmetric hypothesis testing, the error probability
corresponding to one of the types of errors is constrained to decay at a rate
$e^{-nr}$, for some $r>0$ and where $n$ is the number of copies of the state,
while the goal is to determine the behavior of the other kind of error
probability. If $r$ is less than the quantum relative entropy, the quantum
Hoeffding bound \cite{PhysRevA.76.062301,N06} applies and states that the
other kind of error probability decays exponentially fast to zero, and the
optimal error exponent can be expressed in terms of the Petz--R\'{e}nyi
relative entropy~\cite{P86}. If $r$ exceeds the quantum relative entropy, the
strong converse bound from Ref.~\onlinecite{MO13} applies and states that the other kind
of error probability converges exponentially fast to one, and the optimal
strong converse exponent can be expressed in terms of the sandwiched R\'{e}nyi
relative entropy \cite{MDSFT13,WWY14}.

Gaussian state discrimination has been studied in the contexts of both
symmetric and asymmetric cost of errors. The quantum Chernoff bound
\cite{CMMAB08,PL08} and the quantum Hoeffding bound \cite{SB14} for Gaussian
states have been considered. However, the expressions given in these earlier
works were in terms of the symplectic decomposition of the Gaussian states. A
quest for more compact and elegant expressions for the quantities that solely
depend on the covariance matrices of the candidates states has prompted the
development of other less tight bounds for these quantities \cite{PL08}.

The formulas derived in our paper readily apply to the settings of the quantum
Chernoff bound, the quantum Hoeffding bound, and the strong converse regime
and lead to expressions for the exponents of Gaussian state discrimination in
these contexts.\ We do not give details here, but instead we simply note that
the results can be obtained by direct substitution of our formulas into the
general expressions for the various bounds.
We note that our formulas depend only on the mean vectors and covariance matrices of the candidate states.

\subsection{Quantum communication theory}

There is an intimate link between hypothesis testing and communication theory,
first realized in the classical case in Ref.~\onlinecite{B74}.\ This approach has since
been successfully explored in the context of quantum communication theory
\cite{ON99,PhysRevA.76.062301,KW09,SW12,WWY14,GW13,CMW14,TWW14,DingW15,WTB16},
in order to establish strong converse bounds for a variety of
information-processing tasks. In all of the aforementioned works, the strong
converse bounds are expressed in terms of the R\'{e}nyi relative entropies. As
such, one would expect the formulas derived here to apply in these contexts,
and we now comment on the most direct application of our results in the
context of quantum and private communication.

To begin with, let us recall that a quantum channel has a capacity for quantum
and private communication when assisted by classical communication between the
sender and receiver (see, e.g., Refs.~\onlinecite{BDSW96,BDSS06,TGW14Nat} for these notions).
These capacities are roughly and respectively defined to be the maximum rates
at which these communication resources can be used to establish entanglement
or secret key reliably between a sender and a receiver, when using the channel
many times. It is of interest to understand these capacities in the context of
quantum key distribution \cite{SBCDLP09}, in order to understand the
limitations on practical protocols. For channels that are teleportation
simulable \cite{BDSW96}, meaning that they can be realized by the action of
local operations and classical communication (LOCC)\ on a resource state \cite{HHH99}, a
general protocol of the above form can be significantly simplified
\cite{BDSW96,Mul12}, such that it consists of a single round of LOCC\ acting
on a given number of copies of the resource state. As observed in
Ref.~\onlinecite{BDSW96} for the case of quantum communication, one can then bound the
assisted quantum capacity of the channel in terms of the distillable
entanglement of the resource state, and the same reasoning trivially extends
to the case of assisted private communication. These observations apply as
well to Gaussian channels that are teleportation simulable, as identified and
discussed in Refs.~\onlinecite{WPG07,NFC09}.

One of the main contributions of Refs.~\onlinecite{TWW14,WTB16} is that bounds on the
strong converse exponent for assisted quantum and private communication over
teleportation-simulable channels, respectively, can be expressed in terms of
the sandwiched R\'{e}nyi relative entropy of the underlying resource state.
After these developments, a recent work \cite{KW17}, following the approach of
Ref.~\onlinecite{LMGA17},\ found finite-energy Gaussian resource states that can be used
for the teleportation simulation of thermal Gaussian channels, and as such,
they were used to establish bounds on the assisted quantum and private
capacities of these channels. Avoiding details, we simply note here that one
can evaluate the sandwiched R\'{e}nyi relative entropy of the finite-energy
Gaussian resource states from Ref.~\onlinecite{KW17}\ in order to determine bounds on the
strong converse exponent for  communication over these channels.

\subsection{Mixing times of Markov processes and covert communication}

\label{sec:covert-app}We finally briefly mention some applications of the
Petz--R\'{e}nyi relative entropy of order two. One particular quantum
$\chi^{2}$ divergence from Ref.~\onlinecite{TKRWF10}\ can be related to the
Petz--R\'{e}nyi relative entropy of order two. Therein, the authors used the
quantum $\chi^{2}$ divergence to bound mixing times of quantum Markov
processes. As a result, we suspect that the formulas derived in our paper will
be useful in the context of bounding mixing times of quantum Gauss--Markov
processes, such as the processes considered in Refs.~\onlinecite{HHW09,GHLM10}.

Additionally, the Petz--R\'{e}nyi relative entropy of order two has been
employed in the context of bounding error probabilities for covert communication over
quantum channels \cite{SBTGG16}. In covert communication, the goal is for two
parties to communicate information over a quantum channel, such that someone
else (typically called a warden),\ who is allowed to observe the channel, is
effectively not able to realize that they are in fact communicating. In light
of this previous work, we expect that the formula derived in our paper will be
useful in the context of covert communication when using a quantum Gaussian
channel for the task.

\section{Conclusion}

\label{sec:conclusion}The main contribution of our paper is the derivation of
formulas for the Petz--R\'{e}nyi relative entropy and the sandwiched R\'{e}nyi
relative entropy of quantum Gaussian states for $\alpha\in(0,1)\cup(1,\infty
)$. Interestingly, our approach handles the previously elusive case for the
Petz--R\'{e}nyi relative entropy when $\alpha\in(1,\infty)$. We also derived a
formula for the max-relative entropy of two quantum Gaussian states. Given the
wide applicability of the R\'{e}nyi relative entropies and quantum Gaussian
states in quantum information theory and beyond, we suspect that the formulas
derived here will be useful in a number of future applications.

For future work, it remains open to determine whether the sufficient
conditions given in Theorems~\ref{prop:petz-renyi-alpha>1} and
\ref{thm:sandwiched-alpha>1} are in fact necessary for the quantities to be
finite. The similarity of the sufficient conditions with the necessary and
sufficient conditions from the classical case \cite{G11,GAL13}\ suggest that
this might be the case. At the least, Proposition~\ref{prop:nec-max-rel-ent}%
\ establishes significant progress on this question for the max-relative
entropy. Additionally, the approach given in our paper can be used to
determine expressions for the $\alpha$-$z$ relative entropies \cite{AD15}\ and
the generalized R\'{e}nyi quantities from Refs.~\onlinecite{BSW14,SBW14,BSW15a,DW15} (in
the latter case, we would need expressions for the adjoint of a quantum
Gaussian channel, as given in Ref.~\onlinecite{GLS16}).

\appendix

\section{Covariance matrix for $\rho(\alpha)$}

\label{app:CM-for-rho-alpha}Recall that the covariance matrix $V_{\rho}$ for
an $n$-mode state has a symplectic (Williamson)\ decomposition as%
\begin{equation}
S_{\rho}\left(  D_{\rho}\oplus D_{\rho}\right)  S_{\rho}^{T}=S_{\rho}\left(
I_{2}\otimes D_{\rho}\right)  S_{\rho}^{T},
\end{equation}
where $S_{\rho}$ is a $2n\times2n$ symplectic matrix satisfying $S\Omega
S^{T}=\Omega$ and $D_{\rho}$ is a diagonal matrix of symplectic eigenvalues
(each entry being $>1$ for a faithful state).

\begin{proposition}
The following equality holds%
\begin{align}
V_{\rho(\alpha)}  &  =\frac{\left(  I+\left(  V_{\rho}i\Omega\right)
^{-1}\right)  ^{\alpha}+\left(  I-\left(  V_{\rho}i\Omega\right)
^{-1}\right)  ^{\alpha}}{\left(  I+\left(  V_{\rho}i\Omega\right)
^{-1}\right)  ^{\alpha}-\left(  I-\left(  V_{\rho}i\Omega\right)
^{-1}\right)  ^{\alpha}}i\Omega\\
&  =S_{\rho}\left(  I_{2}\otimes\frac{\left(  D_{\rho}+I\right)  ^{\alpha
}+\left(  D_{\rho}-I\right)  ^{\alpha}}{\left(  D_{\rho}+I\right)  ^{\alpha
}-\left(  D_{\rho}-I\right)  ^{\alpha}}\right)  S_{\rho}^{T},
\end{align}
which demonstrates the equivalence of $V_{\rho(\alpha)}$ with Eqs.~(54)\ and
(55)\ of \cite{PL08}.
\end{proposition}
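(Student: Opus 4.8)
The plan is to collapse the matrix identity to a scalar one using the Williamson decomposition $V_{\rho}=S_{\rho}(D_{\rho}\oplus D_{\rho})S_{\rho}^{T}$. First note that the right-hand side of \eqref{eq:alpha-cov-matrix} is nothing but $V_{\rho(\alpha)}=f_{\alpha}(V_{\rho}i\Omega)\,i\Omega$, where $f_{\alpha}$ is the scalar function
\begin{equation}
f_{\alpha}(x)=\frac{(1+1/x)^{\alpha}+(1-1/x)^{\alpha}}{(1+1/x)^{\alpha}-(1-1/x)^{\alpha}}=\frac{(x+1)^{\alpha}+(x-1)^{\alpha}}{(x+1)^{\alpha}-(x-1)^{\alpha}}\qquad(x>1)
\end{equation}
that already appeared in the proof of Proposition~\ref{prop:rho-to-alpha} (the two forms agree because $x^{\alpha}>0$), applied through the functional calculus. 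Two features of $f_{\alpha}$ are what make the argument work: it is an \emph{odd} function, and it is well defined on the spectrum of $V_{\rho}i\Omega$, whose eigenvalues all have modulus strictly larger than $1$.

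Next I would diagonalize $V_{\rho}i\Omega$ using the identity \eqref{eq:max-rel-ent-last-steps-1}, namely $V_{\rho}i\Omega=S_{\rho}(U\otimes I_{n})\big((-D_{\rho})\oplus D_{\rho}\big)\big[S_{\rho}(U\otimes I_{n})\big]^{-1}$, and push $f_{\alpha}$ through the similarity via $Mf(L)M^{-1}=f(MLM^{-1})$, evaluating it entrywise on the diagonal. Oddness gives $f_{\alpha}\big((-D_{\rho})\oplus D_{\rho}\big)=\big(-f_{\alpha}(D_{\rho})\big)\oplus f_{\alpha}(D_{\rho})$, so that
\begin{equation}
f_{\alpha}(V_{\rho}i\Omega)=S_{\rho}(U\otimes I_{n})\big[\big(-f_{\alpha}(D_{\rho})\big)\oplus f_{\alpha}(D_{\rho})\big]\big[S_{\rho}(U\otimes I_{n})\big]^{-1}.
\end{equation}
Now apply \eqref{eq:max-rel-ent-last-steps-1} once more, this time to the symmetric matrix $V'\equiv S_{\rho}\big(f_{\alpha}(D_{\rho})\oplus f_{\alpha}(D_{\rho})\big)S_{\rho}^{T}$, whose symplectic decomposition uses the same $S_{\rho}$ and the diagonal matrix $f_{\alpha}(D_{\rho})$ (the identity \eqref{eq:max-rel-ent-last-steps-1} needs only that $S_{\rho}$ is symplectic and the inner matrix diagonal). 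This yields $V'i\Omega=f_{\alpha}(V_{\rho}i\Omega)$, hence $V'=f_{\alpha}(V_{\rho}i\Omega)\,i\Omega=V_{\rho(\alpha)}$. Since every entry of $D_{\rho}$ exceeds $1$, $f_{\alpha}(D_{\rho})=\big[(D_{\rho}+I)^{\alpha}+(D_{\rho}-I)^{\alpha}\big]\big[(D_{\rho}+I)^{\alpha}-(D_{\rho}-I)^{\alpha}\big]^{-1}$, which is exactly the asserted expression and matches Eqs.~(54)--(55) of \cite{PL08}. An equivalent route, if one prefers to avoid citing \eqref{eq:max-rel-ent-last-steps-1}, is to observe that $D_{\rho}\oplus D_{\rho}=I_{2}\otimes D_{\rho}$ commutes with $i\Omega$, derive $V_{\rho}i\Omega=S_{\rho}\big[(D_{\rho}\oplus D_{\rho})i\Omega\big]S_{\rho}^{-1}$ from $S_{\rho}\Omega S_{\rho}^{T}=\Omega$ and $(i\Omega)^{2}=I$, use that $f_{\alpha}$ odd means $f_{\alpha}(y)=y\,h(y^{2})$ together with $\big[(D_{\rho}\oplus D_{\rho})i\Omega\big]^{2}=D_{\rho}^{2}\oplus D_{\rho}^{2}$, and finally multiply by $i\Omega$ using $S_{\rho}^{-1}(i\Omega)=(i\Omega)S_{\rho}^{T}$.

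The step I expect to demand the most care is the functional calculus for $V_{\rho}i\Omega$, which is not symmetric (nor normal): it is only \emph{similar} to a real diagonal matrix, via the non-orthogonal symplectic $S_{\rho}$, so I must be explicit that $f_{\alpha}$ is being applied through that similarity and that the relevant branch of $y\mapsto y^{\alpha}$ is the real-positive one on $\mathbb{R}^{+}$ — legitimate here precisely because $\nu_{j}\pm 1>0$, exactly as already noted in the proof of Proposition~\ref{prop:rho-to-alpha}. The other place to be careful is the bookkeeping of the sign in the $(-D_{\rho})\oplus D_{\rho}$ block, which is where oddness of $f_{\alpha}$ is used and where a dropped sign would silently break the conclusion.
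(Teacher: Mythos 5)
Your proposal is correct and takes essentially the same route as the paper's own proof: both diagonalize $V_{\rho}i\Omega$ via the Williamson decomposition and the similarity $S_{\rho}(U\otimes I_{n})$, push the odd scalar function $f_{\alpha}$ through that similarity (using that the eigenvalues $\pm\nu_{j}$ have modulus $>1$ so the positive real branch of $y\mapsto y^{\alpha}$ applies), and then reassemble into $S_{\rho}\left(I_{2}\otimes f_{\alpha}(D_{\rho})\right)S_{\rho}^{T}$. Your only deviation is cosmetic: you invoke the diagonalization identity a second time on the candidate matrix $V'$ to avoid the paper's explicit $U\sigma_{Z}U^{\dag}=\sigma_{Y}$, $i\Omega=-\sigma_{Y}\otimes I_{n}$ bookkeeping, which is a valid shortcut since that identity indeed requires only that $S_{\rho}$ be symplectic and the inner matrix diagonal.
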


\begin{proof}
By definition,
\begin{equation}
V_{\rho(\alpha)}=\frac{\left(  I+\left(  V_{\rho}i\Omega\right)  ^{-1}\right)
^{\alpha}+\left(  I-\left(  V_{\rho}i\Omega\right)  ^{-1}\right)  ^{\alpha}%
}{\left(  I+\left(  V_{\rho}i\Omega\right)  ^{-1}\right)  ^{\alpha}-\left(
I-\left(  V_{\rho}i\Omega\right)  ^{-1}\right)  ^{\alpha}}i\Omega.
\end{equation}
Consider the following reasoning along the lines from Ref.~\onlinecite[Appendix~A]%
{WTLB16}. The covariance matrix $V_{\rho}$ for an $n$-mode state has a
symplectic decomposition as%
\begin{equation}
S_{\rho}\left(  D_{\rho}\oplus D_{\rho}\right)  S_{\rho}^{T}=S_{\rho}\left(
I_{2}\otimes D_{\rho}\right)  S_{\rho}^{T},
\end{equation}
where $S_{\rho}$ is a $2n\times2n$ symplectic matrix and $D_{\rho}$ is a
diagonal matrix of symplectic eigenvalues. After some steps, this implies that%
\begin{align}
V_{\rho}i\Omega &  =S_{\rho}\left(  U\otimes I_{n}\right)  \left(  \left[
-D_{\rho}\right]  \oplus D_{\rho}\right)  \left(  U^{\dag}\otimes
I_{n}\right)  S_{\rho}^{-1}\\
&  =S_{\rho}\left(  U\otimes I_{n}\right)  \left(  -\sigma_{Z}\otimes D_{\rho
}\right)  \left(  U^{\dag}\otimes I_{n}\right)  S_{\rho}^{-1},
\end{align}
where%
\begin{equation}
U\equiv\frac{1}{\sqrt{2}}%
\begin{bmatrix}
1 & 1\\
i & -i
\end{bmatrix}
.
\end{equation}
So we see that the eigenvalues of $V_{\rho}i\Omega$ correspond to the
symplectic eigenvalues of $V_{\rho}$ and the eigenvectors of $V_{\rho}i\Omega$
correspond to the symplectic eigenvectors of $V_{\rho}$. Let us abbreviate
this as%
\begin{align}
V_{\rho}i\Omega &  =M\overline{D}M^{-1},\\
M  &  =S_{\rho}\left(  U\otimes I_{n}\right)  ,\\
\overline{D}  &  =\left[  -D_{\rho}\right]  \oplus D_{\rho}.
\end{align}
Note that for a positive definite state $\rho$, each entry of $D$ is $>1$. So
this means that both $Vi\Omega+I$ and $Vi\Omega-I$ are invertible matrices.
Consider that%
\begin{align}
V_{\rho(\alpha)}  &  =\frac{\left(  I+\left(  V_{\rho}i\Omega\right)
^{-1}\right)  ^{\alpha}+\left(  I-\left(  V_{\rho}i\Omega\right)
^{-1}\right)  ^{\alpha}}{\left(  I+\left(  V_{\rho}i\Omega\right)
^{-1}\right)  ^{\alpha}-\left(  I-\left(  V_{\rho}i\Omega\right)
^{-1}\right)  ^{\alpha}}i\Omega\\
&  =\frac{\left(  I+\left(  M\overline{D}M^{-1}\right)  ^{-1}\right)
^{\alpha}+\left(  I-\left(  M\overline{D}M^{-1}\right)  ^{-1}\right)
^{\alpha}}{\left(  I+\left(  M\overline{D}M^{-1}\right)  ^{-1}\right)
^{\alpha}-\left(  I-\left(  M\overline{D}M^{-1}\right)  ^{-1}\right)
^{\alpha}}i\Omega\\
&  =\frac{\left(  M\left(  I+\overline{D}^{-1}\right)  M^{-1}\right)
^{\alpha}+\left(  M\left(  I-\overline{D}^{-1}\right)  M^{-1}\right)
^{\alpha}}{\left(  M\left(  I+\overline{D}^{-1}\right)  M^{-1}\right)
^{\alpha}-\left(  M\left(  I-\overline{D}^{-1}\right)  M^{-1}\right)
^{\alpha}}i\Omega\\
&  =M\frac{\left(  I+\overline{D}^{-1}\right)  ^{\alpha}+\left(
I-\overline{D}^{-1}\right)  ^{\alpha}}{\left(  I+\overline{D}^{-1}\right)
^{\alpha}-\left(  I-\overline{D}^{-1}\right)  ^{\alpha}}M^{-1}i\Omega.
\end{align}
Finally consider that%
\begin{equation}
M^{-1}i\Omega=\left(  U^{\dag}\otimes I_{n}\right)  S_{\rho}^{-1}%
i\Omega=\left(  U^{\dag}\otimes I_{n}\right)  i\Omega S_{\rho}^{T}.
\end{equation}
This then implies that%
\begin{equation}
V_{\rho(\alpha)}=S_{\rho}\left(  U\otimes I_{n}\right)
\frac{\left(  I+\overline{D}^{-1}\right)  ^{\alpha}+\left(  I-\overline
{D}^{-1}\right)  ^{\alpha}}{\left(  I+\overline{D}^{-1}\right)  ^{\alpha
}-\left(  I-\overline{D}^{-1}\right)  ^{\alpha}}\left(  U^{\dag}\otimes
I_{n}\right)  i\Omega S_{\rho}^{T}.
\end{equation}
Since the function $x\rightarrow\frac{\left(  1+1/x\right)  ^{\alpha}+\left(
1-1/x\right)  ^{\alpha}}{\left(  1+1/x\right)  ^{\alpha}-\left(  1-1/x\right)
^{\alpha}}$ is an odd function (it is a composition of three odd functions:
$\operatorname{arcoth}$, scaling by $\alpha$, and then $\coth$) and
$\frac{\left(  1+1/x\right)  ^{\alpha}+\left(  1-1/x\right)  ^{\alpha}%
}{\left(  1+1/x\right)  ^{\alpha}-\left(  1-1/x\right)  ^{\alpha}}%
=\frac{\left(  x+1\right)  ^{\alpha}+\left(  x-1\right)  ^{\alpha}}{\left(
x+1\right)  ^{\alpha}-\left(  x-1\right)  ^{\alpha}}$ for $x>1$, we can
rewrite%
\begin{align}
\frac{\left(  I+\overline{D}^{-1}\right)  ^{\alpha}+\left(  I-\overline
{D}^{-1}\right)  ^{\alpha}}{\left(  I+\overline{D}^{-1}\right)  ^{\alpha
}-\left(  I-\overline{D}^{-1}\right)  ^{\alpha}}  &  =\overline{\frac{\left(
I+D^{-1}\right)  ^{\alpha}+\left(  I-D^{-1}\right)  ^{\alpha}}{\left(
I+D^{-1}\right)  ^{\alpha}-\left(  I-D^{-1}\right)  ^{\alpha}}}\\
&  =-\sigma_{Z}\otimes\frac{\left(  I+D^{-1}\right)  ^{\alpha}+\left(
I-D^{-1}\right)  ^{\alpha}}{\left(  I+D^{-1}\right)  ^{\alpha}-\left(
I-D^{-1}\right)  ^{\alpha}}\\
&  =-\sigma_{Z}\otimes\frac{\left(  D+I\right)  ^{\alpha}+\left(  D-I\right)
^{\alpha}}{\left(  D+I\right)  ^{\alpha}-\left(  D-I\right)  ^{\alpha}},
\end{align}
which means that%
\begin{align}
&  S_{\rho}\left(  U\otimes I_{n}\right)  \frac{\left(  I+\overline{D}%
^{-1}\right)  ^{\alpha}+\left(  I-\overline{D}^{-1}\right)  ^{\alpha}}{\left(
I+\overline{D}^{-1}\right)  ^{\alpha}-\left(  I-\overline{D}^{-1}\right)
^{\alpha}}\left(  U^{\dag}\otimes I_{n}\right)  i\Omega S_{\rho}%
^{T}\nonumber\\
&  =S_{\rho}\left(  U\otimes I_{n}\right)  \left(  -\sigma_{Z}\otimes
\frac{\left(  D+I\right)  ^{\alpha}+\left(  D-I\right)  ^{\alpha}}{\left(
D+I\right)  ^{\alpha}-\left(  D-I\right)  ^{\alpha}}\right)  \left(  U^{\dag
}\otimes I_{n}\right)  i\Omega S_{\rho}^{T}\\
&  =S_{\rho}\left(  -U\sigma_{Z}U^{\dag}\otimes\frac{\left(  D+I\right)
^{\alpha}+\left(  D-I\right)  ^{\alpha}}{\left(  D+I\right)  ^{\alpha}-\left(
D-I\right)  ^{\alpha}}\right)  i\Omega S_{\rho}^{T}\\
&  =S_{\rho}\left(  -\sigma_{Y}\otimes\frac{\left(  D+I\right)  ^{\alpha
}+\left(  D-I\right)  ^{\alpha}}{\left(  D+I\right)  ^{\alpha}-\left(
D-I\right)  ^{\alpha}}\right)  i\Omega S_{\rho}^{T}\\
&  =S_{\rho}\left(  -\sigma_{Y}\otimes\frac{\left(  D+I\right)  ^{\alpha
}+\left(  D-I\right)  ^{\alpha}}{\left(  D+I\right)  ^{\alpha}-\left(
D-I\right)  ^{\alpha}}\right)  \left(  -\sigma_{Y}\otimes I_{n}\right)
S_{\rho}^{T}\\
&  =S_{\rho}\left(  I_{2}\otimes\frac{\left(  D+I\right)  ^{\alpha
}+\left(  D-I\right)  ^{\alpha}}{\left(  D+I\right)  ^{\alpha}-\left(
D-I\right)  ^{\alpha}}\right)  S_{\rho}^{T}.
\end{align}
So we conclude that
\begin{equation}
V_{\rho(\alpha)}=S_{\rho}\left(  I_{2}\otimes\frac{\left(  D+I\right)
^{\alpha}+\left(  D-I\right)  ^{\alpha}}{\left(  D+I\right)  ^{\alpha}-\left(
D-I\right)  ^{\alpha}}\right)  S_{\rho}^{T}.
\end{equation}
This completes the proof of the equivalence of $V_{\rho(\alpha)}$ with
Eqs.~(54)\ and (55)\ of Ref.~\onlinecite{PL08}.
\end{proof}

\bigskip

\textbf{Acknowledgements.} We are grateful to Leonardo Banchi, Saikat Guha,
Felix Leditzky, Ty Volkoff, and Haoyu Qi for discussions.
KPS thanks the Max Planck Society for support.
LL acknowledges financial support from the European Research Council (AdG IRQUAT No.~267386), the Spanish MINECO (project no.~FIS2013-40627-P and no.~FIS2016-86681-P), and the Generalitat de Catalunya (CIRIT Project no.~2014 SGR 966).
MMW\ acknowledges support from
the National Science Foundation and the Office of Naval Research. He is also
grateful to the Max Planck Institute for the Science of Light, especially
Christoph Marquardt, for hosting him for a research visit in June 2016, when
this project was initiated.

\bibliographystyle{unsrt}
\bibliography{Ref}

\end{document}